\newtheorem{theorem}{Theorem}
\newtheorem{lemma}{Lemma}
\newtheorem{definition}{Definition}
\def\xleft{\mathopen{}\left}
\DeclareRobustCommand*\diff[2][]{%
   \mathop{
     \mathrm{d}^{#1}
     \mskip-0.2\thinmuskip
    #2}\nolimits
}
\newcommand{\eqdef}{\stackrel{\textrm{def}}{=}}
\newcommand{\3}[1]{\boldsymbol{#1}}
\newcommand\Dset{\mathcal{D}}
\newcommand\notM{\widehat{M}}
\newcommand\Lset{\Lambda}
\newif\ifnote
\long\def\??#1{\textbf{\color{red}---~??{}#1~---}}
\long\def\note??#1{\ifnote\??{#1}\fi}
\begin{document}
\title{A new and complete proof of the Landau condition for pinch
  singularities of Feynman graphs and other integrals}
\author{John Collins}
\email{jcc8@psu.edu}
\affiliation{%
  Department of Physics, Penn State University, University Park PA 16802, USA}

\begin{abstract}
  The Landau equations give a physically useful criterion for how
  singularities arise in Feynman amplitudes.  Furthermore, they are
  fundamental to the uses of perturbative QCD, by determining the important
  regions of momentum space in asymptotic problems.  Generalizations are
  also useful. We will show that in existing treatments there are
  significant gaps in derivations, and in some cases implicit assumptions
  that will be shown here to be false in important cases like the massless
  Feynman graphs ubiquitous in QCD applications.  In this paper is given a
  new proof that the Landau condition is both necessary and sufficient for
  physical-region pinches in the kinds of integral typified by Feynman
  graphs.  The proof's range is broad enough to include the modified
  Feynman graphs that are used in QCD applications.  Unlike many existing
  derivations, there is no need to use the Feynman parameter method.  Some
  possible further applications of the new proof and its subsidiary results
  are proposed.
\end{abstract}

\maketitle

\section{Introduction}

The subject of this paper is a set of related topics centered around the
Landau analysis \cite{Landau:1959fi}\footnote{See also many textbooks on
  QFT.} of singularities of Feynman graphs.  What makes this subject
currently important is not merely the classic application to the locating
of singularities of Feynman amplitudes, but its application by Libby and
Sterman \cite{Libby:1978bx} to determine and analyze regions of low
virtuality for amplitudes and cross sections in various asymptotic
high-momentum limits.  Their analysis shows that, in the loop-momentum
space of Feynman graphs, these regions are determined by the locations of
pinches in the massless limit (without any requirement that the full itself
theory is massless).\footnote{The exact wording of this sentence might
  appear to be somewhat at odds with what Libby and Sterman actually wrote.
  See Ch.\ 5 of Ref.\ \cite{Collins:2011qcdbook} for my attempt to explain
  the logic.}  In addition, they formulate a power-counting analysis to
determine which regions contribute at leading power in a given theory.  To
locate the pinches, they use the Landau criterion applied in a massless
theory, in a form given by Coleman and Norton \cite{Coleman:1965xm}.  That
form is that the pinches correspond to classically allowed processes; this
is rather easy to apply in the massless limit.  Libby and Sterman's
analysis results, among other things, in the well-known classification of
momenta into hard, collinear and soft.  It then underlies all results in
factorization, which is an essential tool in most current QCD
phenomenology.  Moreover, as will be explained in more detail below, a
number of extensions are needed to the Landau results for current and
future work.

The primary outcome of the Landau analysis is a criterion for the existence
of a pinch of the contour of integration in terms of what are called the
Landau equations\footnote{In this paper we will solely be concerned with
  physical-region singularities and pinches (or their equivalent for more
  general integrals).} when the objects of study are standard
momentum-space Feynman graphs. To cover more general situations, I will use
the term ``Landau condition'' (or criterion, depending on the shade of
meaning needed).

However, given that the Landau criterion and the work of Coleman and Norton
are foundational to most work on perturbative QCD (pQCD), it is very
disconcerting that there are notable deficiencies in existing treatments of
the Landau criterion, and that these become particularly noticeable in
massless theories.  I will review some of the problems in the next
paragraphs, and then in more detail in Sec.\ \ref{sec:literature}.  One of
the problems is that the actual proof by Coleman and Norton, of the Landau
criterion for pinches, fails completely in the massless case.  This is not
simply a matter of a subtle issue in a high-order graph, but something that
happens in a one-loop self-energy graph.  It turns out that an implicit and
apparently obvious and uncontroversial assumption is false.  None of the
deficiencies necessarily entail that the Landau criterion is incorrect.
Indeed, a primary result of the present paper is a proof that does work and
that is valid for the massless case, as well as for other cases needed in
work on QCD, among others.

Nevertheless the problems indicate areas where some conceptual
understanding has been missing.  This can seriously impact efforts to use
the methods in other situations.  In addition, loopholes in the original
arguments suggest the possibility of interesting new results.

The aim of Coleman and Norton's derivation was to show that the Landau
condition is both necessary and sufficient for physical-region pinches in
the space of loop momenta.  From it they then derived the well-known result
that the location of a pinch corresponds to a classically allowed
process.\footnote{In a theory with no massless particles, this part of the
  proof is correct.  But as pointed out by Ma \cite{Ma:2019hjq}, the proof
  needs extensions to make it work in the massless case.} They apply a
Feynman parameter representation and then perform the momentum integrals.
Their proof is applied to the integral over Feynman parameters, with its
single denominator.  There is an unstated assumption that there is a pinch
in the original momentum-space integral if and only if there is a pinch in
the parameter integral.  But in the massless case that implication simply
fails, and it fails in the simplest graph, a one-loop self energy.  As
shown in the present paper in App.\ \ref{sec:F.param.massless}, the
parameter integral for this graph has no pinch at all corresponding to the
well-known collinear pinch in momentum space; this is quite unlike the
situation for the normal threshold singularity in a massive theory.

Therefore, the first aim of this paper is to prove the necessity and
sufficiency of the Landau criterion for a pinch directly in momentum space.
The proof given here applies to a whole class of integrals, of which
standard Feynman graphs are only one example. Unfortunately some
restrictions are applied to make the proof work, but these are obeyed both
for standard Feynman graphs and for various other kinds of graph that are
commonly used in QCD.  More work is needed to investigate more general
cases.

It should be noted that there are important shifts of emphasis between the
Landau analysis and the QCD applications.  The Landau analysis was
concerned mainly with where actual singularities of graphs occur as
functions of their external parameters, and was almost entirely confined to
the massive case.  But the QCD applications are fundamentally concerned
with locating regions where an integration contour is trapped by propagator
singularities to be in a region of low virtuality compared with some large
scale $Q^2$.  These regions correspond to manifolds of exact pinches in the
massless theory; the regions in a possibly massive theory are neighborhoods
of the pinch singular surfaces in the massless theory.

Moreover, once one has a trap of the contour, one is also interested in
what contour deformations are allowed and hence which subset of propagator
singularities are involved in trapping the contour and which can be
avoided.

Symptoms of problems in the available treatments are found in the classic
book by Eden, Landshoff, Olive, and Polkinghorne \cite{ELOP} (ELOP).  In
their Sec.\ 2.1, they give a general treatment of singularities of
integrals over an arbitrary number of variables.  They present derivations
of the Landau condition for a singularity, in a form appropriate for a
general integral, and not merely for those integrals that arise from
momentum-space Feynman graphs.  They are very explicit (p.\ 48) that a
``proper proof needs the use of topology'' and that they will ``be content
with plausibility arguments''.  However, they do not give any real
indications of the deficiencies of their arguments.  At the end of the
section, they write: ``A rigorous treatment requires homology theory and
for this we refer to the paper by Fotiadi, Froissart, Lascoux, and Pham
(1964).'' They do not explain the need for homology theory, and the paper
by Fotiadi et al.\ is listed as unpublished in the bibliography, with
a statement to see also a published paper by the same authors
\cite{Fotiadi:1965}.  The published paper does contain relevant material,
but not what is needed.  It is clear that there is second paper by the same
authors that contains the missing proof.  However, as far as I can
determine, only one paper by these authors can now be found, and that is
the one that ELOP listed as published.  The then-unpublished paper that
contains the referred-to proof appears to remain both unpublished and
inaccessible.

As regards the application of homology theory to this problem, there is a
book by Hwa and Teplitz \cite{Hwa.Teplitz.1966} on the subject.  They give
much relevant material, including a reprint of the published Fotiadi et
al.\ paper.  But on p.\ 51, they write that an extension of the treatment
is needed for Feynman graphs of more than one loop, and that ``at present
no such extension has been made''.  (Their book is dated 1966.) A much
later book by Pham \cite{Pham:2011} gives many relevant results, but, as
far as I can see, not the ones that are otherwise missing.  In a sense, one
basic problem with both references is that they try to be too general in
the integrals they work with.  For the results in which we are interested
for Feynman graphs, the denominators are real for real values of their
arguments and there is an $i\epsilon$ prescription, and we are interested in
physical-region pinches. These properties are what enable the proof in the
present paper to work.

Primary new results of the present paper are as follows:
\begin{enumerate}

\item A full proof is given that the Landau condition is both necessary and
  sufficient to determine the locations of physical-region pinch
  configurations in a class of integrals that includes momentum-space
  integrals for Feynman graphs.  The applicability to Feynman graphs
  includes not only standard relativistic Feynman graphs, but also the
  various modified graphs that appear in factorization (notably including
  Wilson lines and the approximated graphs that arise in a treatment of the
  Glauber region,\footnote{See Ref.\ \cite{Collins:2011qcdbook} for
    details, including in its Sec.\ 5.6 an analysis that uses the Landau
    criterion.}  as well as those containing Wilson lines).

  The proof applies directly to the momentum-space integrals for Feynman
  graphs without any need to invoke the Feynman parameter method.

\item The proof is in two parts.  One part is a detailed analysis of the
  conditions for a trapped contour in terms of constraints on the direction
  of contour deformation.  Certain restrictions apply to this part of the
  proof --- see item \ref{item:restrictions}.  The other part of the proof
  is embodied in a purely geometric theorem in arbitrarily high dimension
  on whether or not the constraints can be satisfied.\footnote{Undoubtedly
    the second part of the proof is closely related to the mathematical
    subjects of which an account is given in books by Gallier
    \cite{Gallier:2008,Gallier:2011}.  But I have not yet found the result
    that is needed for the applications treated in this paper.}

  The presentation of the two parts is in the reverse order to the
  description just given.  The geometric part comes first, since its
  results are used in analyzing properties of contour deformations. 

\item 
  \label{item:first.order}
  A simple example is given, in App.\ \ref{sec:2D.first.order}, to
  illustrate a difficulty that has to be overcome in the part of the proof
  analyzing contour deformations.  

  Based on experience in visualizable examples in integrals over one
  complex dimension, it is natural to assume that if a contour deformation
  avoids the singularity due to a zero of a denominator, then there must be
  a (non-zero) positive first-order shift in the imaginary part of the
  denominator, given the usual $i\epsilon$ prescription.  The example in App.\
  \ref{sec:2D.first.order} shows that this supposition is false;
  singularities can be avoided with a contour deformation that gives a zero
  first-order shift; I term this an ``anomalous deformation''.  Two or more
  complex dimensions are needed for an anomalous deformation to exist.

  In the particular example given, the contour is not pinched and one can
  equally avoid the singularities by a non-anomalous deformation.  But to
  make an satisfactory determination of the condition(s) for a pinch, it is
  essential to exclude the possibility of an anomalous deformation that
  avoids the singularities of the integrand when the Landau criterion is
  satisfied. This leads to considerable complications in the proof in this
  paper.

\item
  \label{item:restrictions}
  Overcoming the difficulties just mentioned, leads to a need to impose
  certain non-trivial restrictions on the denominators in the integral, in
  order for the methods of proof used here to succeed.  The restrictions
  are that the denominators are at most quadratic in their arguments and
  that any quadratic terms obey a certain sign constraint --- see the
  statement of Thm.\ \ref{thm:main.contour}.

  Luckily the restrictions apply to the pure momentum-space form of
  standard Feynman graphs, including many of the modified graphs used in
  QCD factorization.  However, it would be obviously be useful to find
  better proofs that would eliminate the restrictions as much as possible.
  The difficulties suggest areas for further investigation that appear not
  to have been properly considered in the original proofs.

\item A simple explanation is given that the Landau condition is necessary
  but not sufficient for a singularity as a function of external parameters
  (contrary to the situation concerning pinches of the contour of
  integration).  A supplementary analysis is needed to determine whether or
  not there is an actual singularity given that the contour is trapped.

\item The proofs apply to more general situations than standard Feynman
  graphs.  The range of applicability includes the modified and
  approximated Feynman graphs ubiquitous in QCD factorization.  Others
  include systematic treatments of properties of Feynman graphs in
  coordinate space. Some illustrations are provided.

  In coordinate space, only some restrictions on contour deformations arise
  from singularities of the integrand.  Other restrictions arise when one
  has rapidly oscillating factors like $e^{ik\cdot x}$.  These give strong
  cancellations in the integral, and to get a good analysis it is needed,
  if possible, to deform the contour in a direction that gives an
  exponential suppression.  The geometrical part of this paper's proofs
  applies directly to such cases to determine where such a deformation is
  not possible.

\end{enumerate}
Many or all of the issues are elementary or even trivial in integrals over
one complex variable.  In that situation, issues about contour deformation
are readily visualizable. But this is no longer the case in higher
dimensions, as illustrated by the simple example in App.\
\ref{sec:2D.first.order}.

Some areas for future extension and use of the methods and results in this
paper are:
\begin{enumerate}

\item For a number of purposes, it would be useful to have a systematic and
  general determination for a given process of which regions of space-time
  for vertices dominate.  For example, Brodsky et al.\
  \cite{Brodsky:2019jla} have given an argument that the momentum sum rule
  is violated in deep inelastic scattering, in contradiction with standard
  results from the operator product expansion (OPE) and factorization.
  Their argument depends on properties of the regions of space-time
  involved.  To assess their work completely, it is necessary to have
  systematic and fully deductive derivations of the space-time regions
  involved in processes such as those to which the OPE and standard
  factorization are applied.

  Here we see examples of the situations mentioned above where one needs to
  determine where there is a lack of suppression in an integral containing
  multiple oscillating exponentials of the form $e^{ik\cdot x}$, and to be able
  to do this systematically to all orders of perturbation theory (at
  least).

\item Given that there is a pinch at some point in an integration for a
  Feynman graph, it is common that the pinch is restricted to a subset of
  the propagators.  Then the contour of integration can be deformed to take
  the other propagators off shell, while not crossing the poles for the
  pinched propagators.  How in general is one to characterize the allowed
  directions for such deformations and determine unambiguously which
  propagators are trapped and which not?
  
\item The methods in this paper could be very useful in calculations of
  hard scattering coefficients and other quantities, as is needed for much
  Standard Model phenomenology.  Loop integrals are encountered that are
  not readily amenable to analytic calculations, so that numerical
  calculations are needed.  In the numerical implementation of integrals,
  it is very desirable to have \emph{algorithmic} methods to deform
  integration contours away from non-pinch singularities of the integrand.
  In addition, where there is a pinch, it is important to deform the contour
  away from singularities that do not participate in the pinch.

  Some important work in this area is by Gong, Nagy and Soper
  \cite{Gong:2008ww}, and by Becker and Weinzierl
  \cite{Becker:2012nk,Becker:2012bi}.  The geometric methods obtained in
  the present paper should be able to contribute to more general methods.

\end{enumerate}

Although the results in this paper are in principle purely mathematical,
the motivations and the situations considered arise from certain kinds of
physics problem.  Thus the presentation, the examples, and the terminology
are strongly influenced by the physics applications.

A guide to the statements of the main results is as follows:
\begin{itemize}
\item The statement of the main result on pinches is Thm.\
  \ref{thm:main.contour}.

\item It applies to an integral of the form (\ref{eq:integral}).

\item It uses Definition \ref{def:Landau.point} for a Landau point.

\item The proof of the theorem uses a corresponding geometrical result,
  Thm.\ \ref{thm:main.geom}, and the relation to the notation for the integral
  is specified in Eq.\ (\ref{eq:denom.series}).

\end{itemize}

\section{Pinches of singularities of integrand in momentum-space
  Feynman graphs, etc}
\label{sec:pinch}

In this section, I present the classic problem of determining where the
contour of integration is trapped in the kind of situation exemplified by
momentum-space Feynman graphs.  The integrals are restricted to those such
as occur in momentum-space Feynman graphs in the physical
region\footnote{For the purposes of this paper, saying that a graph is in
  the physical region means that the external momenta are real and that
  before contour deformation the integral is calculated with internal loop
  momenta all real \cite{Coleman:1965xm}.  There is no requirement that the
  external momenta be on-shell.}.  These restrictions are: (a) the external
parameters are real; (b) the integration variables before contour
deformation are real; (c) the denominators giving the singularities of the
integrand are real for real values of their arguments; (d) the integral is
defined by an $i\epsilon$ prescription.

Although much of the material is basically standard, the presentation here
is needed to emphasize particular issues that are important in the sequel,
and to define the notation to be used.

Motivation can be made, both for the general problem and for the
geometrical formulation, from an elementary example.  To this end, App.\
\ref{sec:self-energy} gives the well-known example of a one-loop
self-energy.

The general case is in an arbitrarily high dimension with arbitrarily many
denominators whose zeros give singularities of the integrand.  Considerable
subtleties occur, as we will see.  Hence to provide fully water-tight
derivations, it is important to have precise operational definitions of the
relevant concepts about a given contour deformation, about its
compatibility or not with the integrand's singularities, and about its
avoidance or non-avoidance of the singularities.  It is important that the
definitions can be applied mechanically and essentially computationally,
without the need for creativity or special insights.

The work in this section will motivate the geometric theorem to be proved
in Secs.\ \ref{sec:overall}--\ref{sec:landau.proof}.  Only after that will
be able to find a full proof that a necessary and sufficient condition for
a pinch of the integration contour is that a particular Landau condition is
obeyed.

This is the canonical application of the more abstract geometrical theorem,
and it will influence the terminology used.  Some further applications are
summarized in Sec.\ \ref{sec:extra.cases}.

\subsection{Formulation of problem}
\label{sec:formulation}

\subsubsection{Momentum-space Feynman graphs}
The value of a momentum-space Feynman graph has the form
\begin{equation}
\label{eq:F.graph}
  I(p,m) = \lim_{\epsilon \to 0+} \int_{\Gamma_0} \diff[d]{k}
              \frac{ X(k;p,m) }
                   { \prod_{j=1}^N \left[ f_j(k;p,m) + i\epsilon \right]^{n_j} }.
\end{equation}
Here $p$ is the multi-dimensional array of variables for the external
momenta, and $m$ is the array of masses of the theory.\footnote{No
  restriction is placed on whether the masses are zero or non-zero.}  The
integration variable $k$ is the array of all loop momenta, and has
dimension $d$, which may be arbitrarily high.  We call each $f_j+i\epsilon$ a
denominator factor, and we call $X$ the numerator factor.  Each is a
function of the integration variable $k$ and the external parameters $p$
and $m$.

We restrict from now to situations where:
\begin{itemize}
\item The denominator factors $f_j$ are real-valued when their arguments
  are real.
\item The values of the external parameters $p$ and $m$ are real, and the
  initial contour of integration, denoted by $\Gamma_0$, gives an integral over
  all real values of $k$.  This we will call the restriction to the
  physical region, and it implies a restriction solely to physical-region
  pinches and singularities.
\item There is an $i\epsilon$ with each denominator, and the end result is for the
  boundary value as $\epsilon$ goes to zero from positive values.
\item All the functions $f_j$ and $X$ are analytic functions of their
  arguments.  In particular, $X$ has no singularities for any finite value
  of $X$.  Then all singularities of the integrand are due to zeros in one
  or more of the denominator factors.
\end{itemize}
These properties evidently apply to a much wider class of integrands than
those for standard relativistic Feynman graphs, but they are motivated by
that situation.  The methods we use could be easily applied to certain more
general classes of integrand, but we will not do so.  In the standard case,
each $f_j$ is a quadratic function of its arguments and the numerator $X$
is polynomial in momenta and masses.  But other possibilities can and do
arise.  Notably, denominators from straight Wilson lines (or eikonal lines)
have linear dependence on momentum instead of quadratic.

Since the numerator factor $X$ is non-singular as a function of $k$, it
does not affect the determination of where pinch singularities occur.  In
contrast, the numerator factor often affects power-counting analyses
\cite{Libby:1978bx} for quantifying the size of the contribution associated
with a pinch, but that is not the concern of the present paper.  

The exponent $n_j$ of a denominator factor is typically unity; however, one
regularly meets other cases.  Our concern is with singularities of the
integrand caused by zeros of one or more $f_j$, and with whether the
singularities obstruct contour deformations.  The most general case is that
each $n_j$ is not zero or a negative integer, and this is what we will
assume henceforth.  (In the remaining cases, where an $n_j$ is a negative
integer or zero, a zero in $f_j$ does not cause a singularity of the
integrand, and then the factor $1/(f_j+i\epsilon)^{n_j}$ can be incorporated in
the numerator factor $X$.)

Furthermore, it is possible that when a particular $f_j$ is zero, the
numerator factor is also zero in such a way as to remove the singularity of
the integrand due to a factor $1/f_j^{n_j}$.  In such cases the denominator
factor can be removed and compensated by a corresponding change in the
numerator factor.  So we remove such cases from consideration.

For most values of its arguments, $I(p,m)$ is an analytic function; this is
shown by differentiating the integrand with respect to $p$ and $m$.
However, that argument fails if one or more $f_j$ is zero somewhere on the
initial contour $\Gamma_0$.  But if the contour can be deformed away from the
singularity/ies of the integrand, then the differentiation argument can be
applied on the deformed contour, and gives analyticity of $I(p,m)$ at the
values of $p$ and $m$ under consideration.

Hence our primary aim is to determine situations where such a deformation
away from a singularity of the integrand is not possible.  We term such a
situation a pinch (by analogical generalization from corresponding
situations in one-dimensional contour integrals).

A contour of integration is a surface that in terms of real variables has
dimension $d$.  It is embedded in a space of $d$ complex dimensions, i.e.,
of $2d$ real dimensions.  

The only singularities of the integrand are at zeros of one or more
denominators.  When $\epsilon$ is nonzero and when all of $k$, $p$, and $m$ are
real, the integrand is non-singular on all of the initial contour $\Gamma_0$,
because the imaginary part of each denominator is non-zero.  When $\epsilon\to0+$,
i.e., $\epsilon$ approaches zero from positive values, singularities may appear on
$\Gamma_0$ at positions where one or more $f_j$ is zero.  In analyzing a
candidate deformation to a contour $\Gamma$, we wish first to know whether or
not a singularity is encountered for positive $\epsilon$ during the deformation
and before $\epsilon$ is finally taken to zero.  Such a deformation is disallowed.
Finally, for an allowed deformation we wish to know whether the deformed
contour avoids a particular given singularity after $\epsilon\to0+$

\subsubsection{Feynman parameters}

One technique for evaluating Feynman graphs is to use Feynman parameters.
This converts the original formula (\ref{eq:F.graph}) to an integral with a
single denominator factor.  It has more integration variables, but the
integrals over momenta can be performed analytically.

After Feynman parameterization, but before the integral over momenta, the
integral becomes
\begin{multline}
\label{eq:F.graph.param.mom}
  I(p,m) = \lim_{\epsilon \to 0+} \int \diff[d]{k} \prod_j \left( \int_0^1 \diff{\alpha_j} \right)
           ~ \delta\biggl( \sum_j \alpha_j - 1 \biggr)
\\   
    \frac{ \Gamma\xleft( \sum_jn_j \right) }{ \prod_j \Gamma(n_j) }
           \frac{ X(k;p,m) \prod_j \alpha_j^{n_j-1} }
                { \left( \sum_j \alpha_j f_j(k;p,m) + i\epsilon \right)^{\sum_jn_j} }.
\end{multline}
The single denominator has an exponent that is the sum of the exponents in
the original problem.  The extra normalization factor with the
Gamma-functions can be absorbed into a redefinition of the numerator
factor. The same applies to the factor of powers of $\alpha_j$, which can be
singular only at endpoints of the integration, and then only if some $n_j$s
are not positive integers.

After exchanging the order of the integrations and performing the momentum
integrals, one gets an integral of the form
\begin{multline}
\label{eq:F.graph.param.only}
  I(p,m) = \lim_{\epsilon \to 0+} \prod_j \left( \int_0^1 \diff{\alpha_j} \right)
           ~ \delta\biggl( \sum_j \alpha_j - 1 \biggr)
\\
           \frac{ C(\alpha;p,m) }
                { \bigl( D(\alpha;p,m) + i\epsilon \bigr)^{\sum_jn_j} },
\end{multline}
where $\alpha$ without a subscript denotes the array $j\mapsto\alpha_j$.  The rules for
obtaining the functions $C$ and $D$ can be found in textbooks, e.g.,
\cite{ELOP}.

\subsubsection{General situation}

We can treat all of these integrals as special cases of the following form:
\begin{equation}
\label{eq:integral}
  I(z) = \lim_{\epsilon \to 0+} \int_{\Gamma_0} \diff[d]{w}
              \frac{ B(w;z) }
                   { \prod_{j=1}^N \bigl( A_j(w;z) + i\epsilon \bigr)^{n_j} },
\end{equation}
with each $n_j$ not equal to zero or a negative integer.  This is the most
general form we will consider for our work.  It is just like
(\ref{eq:F.graph}) except that we allow the contour of integration to have
boundaries, and the values of $d$, $N$, and $n_j$ may not be the same as
before.  To indicate the more general situation, the notation has been
changed: all external parameters are folded into a single multidimensional
variable $z$, and the symbols are changed for the integration variables,
the denominators, and the numerator.

The numerator $B$ and the denominators $A_j$ are analytic for all values of
their arguments, and we restrict attention to the case that every $A_j$ is
real when its arguments are real.\footnote{This restriction and the $i\epsilon$
  prescription do not appear in the mathematical work in Refs.\
  \cite{Fotiadi:1965,Pham:2011}.  The restrictions lead here to more
  powerful results of physical relevance \cite{Coleman:1965xm,Libby:1978bx}
  for Feynman graphs in the physical region. }  The numerator factor $B$
need not be real when its arguments are real, and in fact $B$ will play no
role in our work.

\subsubsection{Singularities of integral}
Landau's original problem was to determine where the integral $I(p,m)$ is
singular as a function of $p$ and/or $m$.  Now we change to the more
general notation of Eq.\ (\ref{eq:integral}).  By definition, $I(z)$ is
analytic at some point when it is complex-differentiable in a neighborhood
of the point.  If it is analytic, then all derivatives exist, by standard
theorems.  Hence a function is singular at some point if and only if one or
more derivatives fails to exist at that point, or arbitrarily close to it.

As already observed, we can apply derivatives with respect to $z$ inside
the integration.  Then a singularity can only arise in the dependence on
the external variable $z$ if the initial contour of integration is pinched
somewhere by a singularity caused by a zero of one or more $A_j$s.  The
integral might actually diverge if integrand is singular enough, or an
actual divergence might only occur in a derivative or multiple derivative.

So in a general integral, like (\ref{eq:integral}), we can only get a
singularity as a function of external parameters if one of the following
occurs (see \cite{ELOP} and other references):
\begin{enumerate}
\item The integration contour is trapped, i.e., pinched, by singularities
  of the integrand.  That is, there is no contour deformation that avoids
  these singularities.
\item Singularities of the integrand occur on a boundary of the
  integration, and cannot be avoided by a deformation that preserves the
  boundary of the contour.\footnote{The condition of preserving the
    boundary of the contour is actually not quite what we need.  In a
    multidimensional case, Cauchy's theorem can allow deformations that
    preserve the value of an integral while moving the boundary.  Since our
    concern in this paper is non-boundary singularities, we will not
    consider the ramifications of this remark.  It can matter when a
    momentum-space integral is decomposed into sectors and contour
    deformations determined separately on each sector, as in applications
    of the numerical methods of Ref.\ \cite{Gong:2008ww}.}  This case does
  not occur for pure momentum-space Feynman integrals of a standard kind,
  but can occur in more general situations (including Feynman graphs with
  the use of Feynman parameters).
\end{enumerate}

In a general integral of the form of (\ref{eq:integral}), it is also
possible that for particular values of $z$, the integral acquires
divergences from where some integration variables go to infinity; this
requires that the integration range is infinite in those variables.  This
situation does not arise for Feynman graphs in a pure momentum-space
formulation, since differentiation with respect to $p$ or $m$ in
(\ref{eq:F.graph}) always improves ultra-violet convergence.  But it can
occur when $I(p,m)$ is expressed as an integral over space-time coordinates
of vertices, and divergences occur for large positions.  Such cases do in
fact appear to be able to be treated by elementary generalizations of the
methods considered here, but we leave that for further work.

Notice that the above argument says that the existence of a singularity of
$I(z)$ at a particular value of $z$ implies that the contour of integration
is trapped at a singularity of the integrand considered as a function of
the integration variable $w$.  The converse is definitely not always valid,
as shown in App.\ \ref{sec:trap.no.sing} with the aid of a trivial
counterexample.  Given that a pinch has been found, a separate calculation
of the contribution to the integral from a neighborhood of the pinch point
to determine the existence or non-existence of an actual singularity.

But, as already observed in the introduction, what is important to many
modern applications is not the actual existence of a singularity of $I(z)$
as a function of the external parameter(s) $z$, but whether or not the
integration is trapped, and where.

\subsection{Deformations of contour}
\label{sec:deformation}

In the integral (\ref{eq:integral}), the contour deformations to be
considered are replacements of the real values of the integration
variable $w$ by
\begin{equation}
\label{eq:deform}
  w = w_R + i \lambda v(w_R).
\end{equation}
Here $w_R$ is a real variable that ranges over all values on the original
real contour $\Gamma_0$.  The real variable $\lambda$ is in the range 0 to 1, and
parameterizes the amount of deformation.  For each $\lambda$ we have a particular
contour $\Gamma_\lambda$, parameterized by $w_R$.  The original contour is at $\lambda=0$,
and the final deformed contour is at $\lambda=1$.  The function $w_R \mapsto v(w_R)$ is
from real values $w_R$ to real $d$-dimensional values, so that $i\lambda v(w_R)$
gives the imaginary part of $w$ on the deformed contour.  The function $v$
must be continuous and piecewise differentiable (but only in the sense of
differentiation with respect to \emph{real} variables, not necessarily with
respect to complex variables).  We take it to be zero on any boundary of
$\Gamma_0$, so that the boundaries of the contour are unchanged.

The value of the integral on the deformed contour is
\begin{multline}
\label{eq:integral.deformed}
  \hspace*{-3mm}I(z,\lambda,\epsilon) 
\\
 \eqdef  \int_{\text{real},\Gamma_0} \diff[d]{w_R}  J(w_R,\lambda)
              \frac{ B(w;z) }
                   { \prod_{j=1}^N \bigl( A_j(w;z) + i\epsilon \bigr)^{n_j} }.
\end{multline}
Here, the integral symbol is still equipped with the symbol for the
undeformed contour $\Gamma_0$, but now it is the real variable of integration
$w_R$ that is on $\Gamma_0$, and not the argument $w$ of the integrand.  We have
chosen not to take the limit $\epsilon\to0+$ yet.  Here $w$ is given by
(\ref{eq:deform}), and $J$ is the Jacobian of the transformation from $w_R$
to $w$.  According to Cauchy's theorem\footnote{An accessible and
  elementary proof of the Cauchy theorem beyond the one-dimensional case is
  given by Soper \cite{Soper:1999xk}.  Note that it is possible to
  generalize the theorem to certain cases where the boundary changes, but
  we will not deal with that issue here.}, the value of the integral is
independent of $\lambda$ provided that the integrand has no singularities on the
contour when $0\leq\lambda\leq1$, and hence that every $A_j(z,w_R+i\lambda v(w_R))+i\epsilon$ is
non-zero for all $w_R$ on $\Gamma_0$, for $0\leq\lambda\leq1$.  Thus $I(z,1,\epsilon)=I(z,0,\epsilon)$.

The target value of the integral is the limit as $\epsilon$ decreases to zero of
the integral on the undeformed contour, i.e., of $I(z,0,0+)$.  On the
undeformed contour, $\lambda=0$, there are typically singularities of the
integrand for some values of $w_R$; these are where there are zeros of one
or more denominators.  To get the same value for the integral on the
deformed contour, we must apply the condition of all $A_j$s being nonzero
for all $\lambda$ in the range $0\leq\lambda\leq1$ and for all $\epsilon$ in a range $0<\epsilon\leq\epsilon_0$, for
some positive number $\epsilon_0$. Notice that the range of $\epsilon$ considered in the
condition \emph{excludes} zero.  Then taking the limit $\epsilon\to0+$ gives
$I(z,1,0+)=I(z,0,0+)$.

Even if the contour does not avoid all singularities, it is useful to
deform a contour to avoid as many singularities as is possible.  For such a
deformation to be useful, we require that no singularities of the integrand
appear on the contour until the very last step of taking $\epsilon$ to zero with
the contour fully deformed. We call such a deformation allowed, while any
deformation that encounters a singularity before that step is not an
allowed deformation.

Given some of the complications that arise in a complete analysis, it is
useful to have very precise operational specifications of what is meant by
an allowed deformation, and of what is meant by a singularity-avoiding
deformation.  In analyzing problems concerning possible deformations, it is
also useful to define such concepts relative to particular subsets of
denominators. In particular, we may be interested not only in whether a
contour deformation avoids all singularities of the integrand, but also in
whether it avoids singularities at particular values of $w_R$, and possibly
only for those singularities caused by zeros in particular subsets of the
$A_j$.  One physical motivation is that in many applications in QCD, sets
of singular propagators correspond to factors in a factorization theorem,
each of which can be considered separately; given a kinematic configuration
of momenta, only in a part of a graph that corresponds to a hard scattering
factor can we deform away from propagator singularities.

In all of the following definitions, we will assume a particular value of
the external parameter $z$.

\begin{definition}
  \label{def:compat.deform.point}
  Given a particular subset $\mathcal{A}$ of the $A_j$s, we define that at
  a particular value $z$ and $w_S$ for the external parameter and
  integration variable, a deformation is defined to be \emph{compatible}
  with the denominators $\mathcal{A}$ if there exists a positive non-zero
  $\epsilon_0$ such that $A_j(w_R+i\lambda v(w_R);z)+i\epsilon$ is non-zero when $w_R$ is in a
  neighborhood of $w_S$ for all $A_j$ in $\mathcal{A}$, and for $0\leq\lambda\leq1$
  and $0<\epsilon\leq\epsilon_0$.
\end{definition}
Note that the case $\epsilon=0$ is specifically \emph{not} included. The lack of
singularities in the given range indicates that the zeros of the specified
denominators do not give singularities that obstruct the use of Cauchy's
theorem. We specify that there is a lack of zeros not only at $w_R$ but in
a neighborhood.  The reason is that if there is a zero of $A_j(w)$ at
$w=w_S$, then as $\lambda$ is increased from zero, the position of the zero
usually migrates to nearby values of $w_R$; such a zero is equally
effective at obstructing a contour deformation.

\begin{definition}
  \label{def:allowed.deform}
  A \emph{locally allowed deformation} at $z$ and $w_S$ is one that is
  compatible with all the denominators at $w_S$. An \emph{allowed
    deformation} is one that is an allowed deformation at all $w_S$ in
  the range of integration.
\end{definition}
As already observed, for an allowed deformation there is no obstruction to
the contour deformation, so that $I(z,0,\epsilon)=I(z,1,\epsilon)$ when $0<\epsilon\leq\epsilon_0$, and
hence $I(z,0,0+)=I(z,1,0+)$.  (It is sufficient to take a common nonzero
maximum $\epsilon_0$ for all denominators.)

It should be noted that a trivial special case of an allowed deformation is
when there is no change in the contour at all, i.e., when $v(w_R)$ is zero
for all $w_R$.

We next have the definition of a deformation that avoids singularities:
\begin{definition}
  \label{def:sing.avoid,j.kS}
  Given a particular subset $\mathcal{A}$ of the $A_j$s, we define that at
  $z$ and $w_S$ a deformation \emph{avoids an $\mathcal{A}$-associated
    singularity} if the deformation is allowed at $z$ and $w_S$, and if
  the non-zero condition on $A_j+i\epsilon$ also applies for the given
  $A_j\in\mathcal{A}$ at $\epsilon=0$ and $0<\lambda\leq1$, for $w_R$ in some neighborhood of
  $w_S$.  By bringing in the definition of an allowed deformation, the
  condition for a singularity-avoiding deformation is that the only place
  where $A_j+i\epsilon$ is zero (with $A_j\in\mathcal{A}$) is where $\epsilon$ and $\lambda$ are
  both zero.  (Here it is taken for granted that the relevant ranges for
  $\epsilon$ and $\lambda$ are for non-negative values below $\epsilon_0$ and 1, respectively.)
\end{definition}
Of course, no requirement is placed when $\lambda=\epsilon=0$, because the interesting
case is when one or more $A_j(w_S)$ is zero, and then we ask whether a
particular deformation avoids the resulting singularity in the integrand.

We can define more global kinds of singularity avoidance:
\begin{definition}
  \label{def:sing.avoid,kS}
  A deformation \emph{avoids any singularity} at $w_S$ if it is allowed
  and avoids $A_j$-related singularities for all $A_j$.
\end{definition}
\begin{definition}
  \label{def:sing.avoid}
  We define that at $z$ a \emph{deformation (globally) avoids any
    singularity} if it is allowed and avoids $A_j$-related singularities
  for \emph{all} $A_j$ and for \emph{all} $w_S$ on $\Gamma_0$.
\end{definition}

\begin{figure}
  \centering
  \begin{tabular}{c@{\hspace*{1cm}}c@{\hspace*{1cm}}c}
      \includegraphics[scale=0.7]{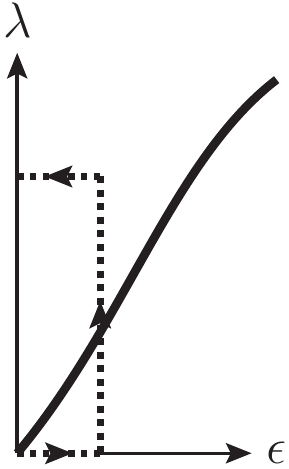}
   &
      \includegraphics[scale=0.7]{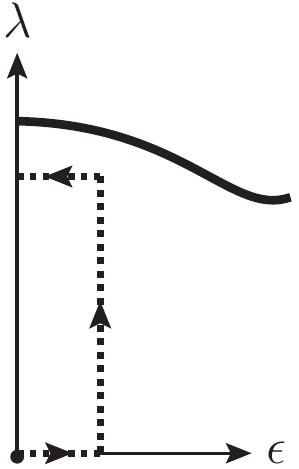}
   &
      \includegraphics[scale=0.7]{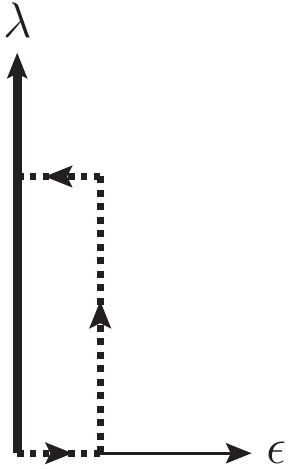}
  \\
     (a) & (b) & (c)
  \end{tabular}
  \caption{(a) To illustrate where in the $(\epsilon,\lambda)$ plane singularities can
    be encountered that prevent a deformation from being allowed.  The
    thick solid line indicates where the contour deformation first hits a
    singularity as $\lambda$ is increased from zero.  The dashed line is a path
    in $(\epsilon,\lambda)$ to get from the initial situation with $\lambda=\epsilon=0$ to the
    deformed contour $(\epsilon,\lambda)=(0,1)$.  The thick solid line extends to the
    origin.  (b) The case where no singularity is encountered except at
    $\epsilon=\lambda=0$, i.e., that deformation avoids the singularity. (c) A situation
    with an allowed deformation that does not avoid a singularity: The
    thick solid line on the vertical axis indicates that the integrand is
    singular when $\epsilon=0$ for all values of $\lambda$.}
  \label{fig:eps.lam.sing}
\end{figure}

An illustration of how these definitions are used is given in Fig.\
\ref{fig:eps.lam.sing}.  In the left-hand diagram of the $\lambda$-$\epsilon$ plane is
illustrated the situation for a disallowed contour deformation. The dashed
line indicates the sequence of values we wish to use to get from the target
value of the integral, i.e., $I(z,0,0+)$, to the value on the deformed
contour. The solid line is where the contour deformation first hits a zero
of $A_j+i\epsilon$ somewhere on the contour of integration as $\lambda$ is increased
from zero.  On the right-hand horizontal axis, where $\lambda$ is zero and $\epsilon$ is
positive, there is no singularity.  As $\lambda$ is increased eventually a zero
of an $A_j$ hits the contour, and that is indicated by where the dashed
line intersects the solid line.  For larger $\lambda$ Cauchy's theorem
fails.\footnote{It may be that for yet larger $\lambda$ there remains a zero of
  an $A_j$ on the contour. That is irrelevant to our considerations.}  The
solid line goes all the way to the origin; otherwise, simply by restricting
$\lambda$ to a smaller range, the zero(s) of $A_j+i\epsilon$ are avoided, and we can
convert the deformation to the standard form by rescaling the deformation.

In contrast, for a singularity-avoiding deformation, any line or region of
zeros of $A_j+i\epsilon$ does not come all the way to the origin, as in the middle
diagram.  There may be a singularity when both $\lambda$ and $\epsilon$ are zero; that
is the case of interest, i.e., of a singularity on the undeformed contour.
It is possible that there are singularities when large enough deformations
are considered, shown in the middle diagram above $\lambda=1$.  Thus $v(w_R)$ has
been scaled down enough to avoid encountering the singularity/ies.

If the deformation is allowed but doesn't avoid singularities, then we
have a line of singularity-encounters on the vertical
axis, i.e., where $\epsilon=0$ and $\lambda>0$, and this line goes all the way to
$\lambda=0$, as in the right-hand diagram.  

Note that in the diagrams, we are concerned with singularities of the
integrand anywhere on the integration over $w_R$.  The solid lines
correspond to the existence of an singularity somewhere in the integration
range. A singularity that is at some point $w_R=w_S$ when $\epsilon=\lambda=0$ often
migrates to other values of $w_R$ as $\lambda$ is increased.

\subsection{Conversion to a geometrical problem}
\label{sec:pinch.to.geom}

Based on experience with simple examples, it is natural to suppose that one
can determine whether or not a singularity due to a zero of $A_j(w_R)$ is
avoided or collided with, by examining the sign of the (imaginary)
first-order shift of $A_j$ in $\lambda$.  Suppose there is a zero of $A_j$ at
$w=w_S$.  Then a Taylor expansion in powers of $\lambda$ gives
\begin{equation}
  A_j(w_S+i\lambda v(w_S)) +i\epsilon = i\lambda v(w_S) \cdot \partial A_j(w_S) + O(\lambda^2) +i\epsilon,
\end{equation}
where $\partial_\mu A_j(w_R)= \partial A_j(w_R)/\partial w_R^\mu$. Then one would normally expect
that the singularity is avoided if and only if $v(w_S) \cdot \partial A_j(w_S)$ is
strictly positive, which is a geometric condition on the deformation vector
$v(w_S)$ at the zero of $A_j$.  In contrast $v(w_S) \cdot \partial A_j(w_S)$ would be
zero if the deformation is allowed but doesn't avoid the singularity, while
if $v(w_S) \cdot \partial A_j(w_S)$ were negative, then the deformation would not be
allowed.  If these statements were all exactly correct, then applying the
positivity condition on $v(w_S) \cdot \partial A_j(w_S)$ to all $A_j$ for which
$A_j(w_S)=0$ would give the condition that the deformation avoids any
singularity at $w_S$.  

As we will see, the Landau condition gives a necessary and sufficient
criterion that these positivity conditions are incompatible and hence that
the contour is trapped at $w_S$.

However, it is possible to arrange a contour deformation that avoids a
singularity by use of second-order or higher-order terms in $\lambda$, as shown
in App.\ \ref{sec:2D.first.order}.  Now our interest is in the exact
conditions under which contours are trapped or not trapped, i.e., we need a
condition for a trap that is both necessary and sufficient. Therefore the
result in App.\ \ref{sec:2D.first.order} suggests that there could be an
interesting loophole in the Landau analysis.

To exclude the loophole, we need a more detailed analysis, which will be
given in Sec.\ \ref{sec:deal.with.zero.first.order}.  The precise
definitions given above will assist that analysis.  In addition, the
following observations concerning the $\lambda$ dependence of $A_j$ will also be
useful.  They are used in treating the zeros of $A_j$ as a function of $\lambda$
at a fixed value of the real part $w_R$ of the integration variable.

Define $f_{j,w_S}(\lambda)=A_j(w_S+i\lambda v(w_S))$.  Since $A_j(w)$ is analytic as a
function of $w$, $f_{j,w_S}(\lambda)$ is analytic as a function of the
one-dimensional variable $\lambda$ with $w_S$ fixed.\footnote{Note that it is not
  necessary that the function $v(w_R)$ specifying the contour deformation
  be analytic as a function of $w_R$.  Hence $f_{j,w_R}(\lambda)$ is not
  necessarily analytic as a function of $w_R$.}  Suppose that $A_j(w_S)=0$.
Then $f_{j,w_S}(0)=0$.  By standard properties of analytic functions,
either this is an isolated zero of $f_{j,w_S}(\lambda)$ or $f_{j,w_S}(\lambda)=0$ for
all $\lambda$. In the first case, $f_{j,w_S}(\lambda)$ is non-zero for all sufficiently
small non-zero $\lambda$.  In the second case, we haven't avoided the singularity
by the contour deformation under consideration.

\subsection{Primary theorems}
\label{sec:primary.theorems}

In order to provide context for later sections, I state here the main
theorems to be proved.

First come a couple of convenient terminological definitions, of a Landau
point and a Landau condition:
\begin{definition}
  \label{def:Landau.point}
  Let $(D_1,\dots,D_N)$ be a list of dual vectors on a real vector space
  $V$.  We define a \emph{Landau point} for $(D_1,\dots,D_N)$ to be a list
  of real numbers $\lambda_j$ ($1\leq j \leq N$) such that
  \begin{itemize}
  \item All the $\lambda_j$ are non-negative, and at least one is strictly
    positive,
  \item $\sum_{j=1}^N \lambda_j D_j = 0$.
  \end{itemize}
\end{definition}
\begin{definition}
  \label{def:Landau.condition}
  We define the \emph{Landau condition} for $(D_1,\dots,D_N)$ to be obeyed
  if and only if there exists a Landau point for them.
\end{definition}
\noindent
Recall that a dual vector on a vector space $V$ is a linear map from $V$ to
the scalars --- e.g., real numbers --- of a vector space.  A standard
example is the derivative of a function on $V$.  Thus in the preceding
subsection, the derivative of a function $A_j(w_R)$ is $\partial A_j$.  It can be
considered a dual vector $D_j$ by the mapping of vectors to scalars that is
given by $D_j(v)= v\cdot\partial A_J = \sum_\mu v^\mu \partial A_j(w_R)/\partial w_R^\mu$.

Observe that in the case that there is only a single denominator $A$, i.e.,
$N=1$, a Landau point is one where $A=0$ and $D=0$.

The main theorem to be proved concerning pinches is:
\begin{theorem}
   \label{thm:main.contour}
   Given an integral of the form (\ref{eq:integral}), but subject to the
   extra restrictions stated below, consider a real (vector) valued point
   $w_S$ where a nonempty set of denominators is zero.  Then the
   integration is trapped at $w_S$ if and only if a Landau point exists for
   the first derivatives of those denominators that are zero.

   The extra restrictions are that (a) the denominators $A_j(w)$ are at
   most quadratic in $w$, and (b) the signs of the nonzero quadratic terms
   obey a condition which is stated below in (\ref{eq:restrict}) and the
   following paragraphs. This condition is obeyed for the denominators
   encountered in Feynman graphs, including cases with Wilson lines.  It
   also applies to the modified Feynman graphs obtained by applying the
   typical approximations used in deriving factorization.
\end{theorem}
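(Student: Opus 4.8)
The plan is to deduce Thm.\ \ref{thm:main.contour} from the purely geometric statement of Thm.\ \ref{thm:main.geom}. Write $\mathcal{Z}$ for the set of indices $j$ with $A_j(w_S)=0$, put $D_j=\partial A_j(w_S)$ for $j\in\mathcal{Z}$ (the first-derivative dual vectors appearing in the theorem), and let $Q_j$ denote the quadratic part of $A_j$. Because each $A_j$ is at most quadratic, its expansion about $w_S$ terminates and is reproduced exactly by $D_j$ and $Q_j$; this is the content of Eq.\ (\ref{eq:denom.series}), and the sign restriction (\ref{eq:restrict}) is precisely the hypothesis on the $Q_j$ that Thm.\ \ref{thm:main.geom} requires. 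With this dictionary in place, the theorem amounts to the equivalence ``some deformation $w=w_R+i\lambda v(w_R)$ avoids every singularity at $w_S$'' $\iff$ ``no Landau point exists for $(D_j)_{j\in\mathcal{Z}}$'', since the former is the negation of ``the integration is trapped at $w_S$'' and the latter is the negation of the Landau condition.

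For the direction ``no Landau point $\Rightarrow$ not trapped'', Thm.\ \ref{thm:main.geom} produces a vector $v_\ast$ with $D_j(v_\ast)>0$ for all $j\in\mathcal{Z}$ (in this direction the geometric input is essentially the classical theorem of the alternative). Choose the constant deformation $v(w_R)\equiv v_\ast$, smoothly cut off to vanish outside a small ball $U$ around $w_S$. For $j\in\mathcal{Z}$ the imaginary part of $A_j(w_R+i\lambda v_\ast)+i\epsilon$ has leading $\lambda$-behaviour $\lambda\,\partial A_j(w_R)\cdot v_\ast$, which is strictly positive on $U$ if $U$ is small enough, by continuity of $\partial A_j$; after rescaling the deformation so that $\lambda$ stays small, that imaginary part is strictly positive whenever $(\lambda,\epsilon)\neq(0,0)$. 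The denominators with $j\notin\mathcal{Z}$ stay bounded away from zero on $U$ for small $\lambda$. Hence the deformation is allowed and avoids every singularity at $w_S$, so the contour is not trapped there.

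For the converse, ``Landau point $\Rightarrow$ trapped'', suppose $(\lambda_j)$ is a Landau point and, aiming at a contradiction, that some deformation $v(\cdot)$ avoids every singularity at $w_S$; set $v_0=v(w_S)$. First one shows $D_j(v_0)\ge 0$ for all $j\in\mathcal{Z}$: if $D_j(v_0)<0$ for some such $j$, then, using that $A_j(w_R+i\lambda v_0)$ is an explicit polynomial of degree at most two in each of $w_R-w_S$ and $\lambda$, one solves the two real equations $\operatorname{Re}=\operatorname{Im}=0$ for $A_j(w_R+i\lambda v_0)+i\epsilon$ by a direct perturbative construction and exhibits $w_R$ near $w_S$, $\lambda\in(0,1]$ and $\epsilon\in(0,\epsilon_0)$ with $A_j(w_R+i\lambda v_0)+i\epsilon=0$, so the deformation is not even allowed. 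Applying the Landau relation $\sum_j\lambda_j D_j=0$ to $v_0$ then gives $\sum_j\lambda_j D_j(v_0)=0$ with every summand $\ge 0$, forcing $D_j(v_0)=0$ for each $j$ with $\lambda_j>0$; pick one such index $j_0$, so the deformation is \emph{anomalous} (zero first-order shift) with respect to $A_{j_0}$. Now invoke the second-order analysis of Sec.\ \ref{sec:deal.with.zero.first.order}: if $v_0$ lies in the radical of $Q_{j_0}$ then $A_{j_0}(w_S+i\lambda v_0)\equiv 0$ and the singularity is plainly not avoided; otherwise --- and this is exactly where the sign restriction (\ref{eq:restrict}) on $Q_{j_0}$ enters --- the explicit quadratic forms let one again solve $\operatorname{Re}=\operatorname{Im}=0$ for $(w_R,\lambda,\epsilon)$ with $\lambda,\epsilon$ small and positive and $w_R$ near $w_S$, contradicting that the deformation is allowed. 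Either way the assumption collapses, so the contour is trapped at $w_S$. (The hypothesis that $w_S$ is an interior point at which a nonempty set of denominators vanishes is what lets us ignore the boundary-singularity alternative noted in Sec.\ \ref{sec:formulation}.)

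The step I expect to be the main obstacle is the last one: excluding anomalous deformations. The naive Landau argument only constrains first-order shifts, and App.\ \ref{sec:2D.first.order} shows that in two or more complex dimensions a vanishing first-order shift can genuinely avoid a singularity; turning that observation into a rigorous exclusion is what forces the two structural hypotheses of the theorem --- that the denominators are at most quadratic (so the Taylor tails vanish and the systems $\operatorname{Re}=\operatorname{Im}=0$ are honest polynomial equations solvable near $w_S$) and that the quadratic parts obey the sign condition (\ref{eq:restrict}) (so that the second-order term cannot conspire to keep $A_{j_0}+i\epsilon$ away from zero for all small $\epsilon$ while the first-order term vanishes). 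Packaging precisely this behaviour into a clean statement valid in arbitrary dimension, so that it can be quoted here as Thm.\ \ref{thm:main.geom}, is the real technical content behind the argument.
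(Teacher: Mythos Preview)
Your overall architecture matches the paper's: prove the easy direction by exhibiting a constant deformation with positive first-order shifts, and for the hard direction show that an allowed deformation must have $D_j(v_0)\ge 0$, use the Landau relation to force $D_j(v_0)=0$ on the support of the Landau point, and then rule out anomalous deformations by a second-order analysis. But two points are off.

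First, you have misread what Thm.\ \ref{thm:main.geom} is. It is the purely first-order statement ``there exists $v$ with $D_j(v)>0$ for all $j$ $\iff$ there is no Landau point for the $D_j$'', i.e.\ a Gordan/Stiemke-type theorem of the alternative for dual vectors. It involves no quadratic parts and carries no hypothesis like (\ref{eq:restrict}); that restriction enters only in the proof of Thm.\ \ref{thm:pinch.to.1st.order.shift} in Sec.\ \ref{sec:deal.with.zero.first.order}. So your sentence that (\ref{eq:restrict}) ``is precisely the hypothesis on the $Q_j$ that Thm.\ \ref{thm:main.geom} requires'', and your closing claim that the anomalous-deformation exclusion is ``packaged'' into Thm.\ \ref{thm:main.geom}, are both wrong.

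Second, and more seriously, your exclusion of anomalous deformations is flawed because you reduce to a \emph{single} index $j_0$. With one denominator having $D_{j_0}\neq 0$, $D_{j_0}(v_0)=0$, and $v_0\cdot E_{j_0}\cdot v_0\neq 0$, the deformation is \emph{not} automatically disallowed: along $w_R=w_S+xv_0$ the constraint from Eq.\ (\ref{eq:Im.Aj}) reads $D_{j_0}\cdot\delta v(xv_0)+x\,v_0\cdot E_{j_0}\cdot v_0+o(x)\ge 0$, and since $D_{j_0}\neq 0$ one can choose the $w_R$-dependent piece $\delta v$ so that $D_{j_0}\cdot\delta v$ compensates the negative term for the dangerous sign of $x$. (This is exactly the mechanism behind the anomalous example in App.\ \ref{sec:2D.first.order}.) The sign restriction (\ref{eq:restrict}) is a statement about \emph{all} the relevant $E_j$ together, not about a single $E_{j_0}$, and by itself gives you nothing here. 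The paper's argument keeps \emph{all} indices with $\alpha_j>0$: the restriction (\ref{eq:restrict}) guarantees that a single sign of $x$ makes every $x\,v_0\cdot E_j\cdot v_0$ negative simultaneously, and then the Landau relation gives $\sum_j\alpha_j\,D_j\cdot\delta v(xv_0)=0$, so at least one $D_j\cdot\delta v$ is non-positive and cannot compensate. That $j$ (not a pre-chosen $j_0$) is where the deformation fails to be allowed. Your single-index version misses this interplay and does not close the argument.
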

\noindent
The theorem may well be true without these extra restrictions or with
weaker restrictions.  But the proof that we will give only applies when the
restrictions are valid.  Either better methods or much more work would be
needed to give a proof of a less restricted theorem.

However, we do in all cases impose the reality conditions etc that were
listed below Eq.\ (\ref{eq:integral}).

Our proof of the main theorem is made by combining two subsidiary theorems.

The first subsidiary theorem relates the trapping or non-trapping of a
contour to the positivity of the first-order (imaginary) shift in
denominators:
\begin{theorem}
  \label{thm:pinch.to.1st.order.shift}
  With the same hypotheses as in Thm.\ \ref{thm:main.contour}, the
  integration is not trapped at $w_S$ if and only if there is a direction $v$
  such that $v\cdot\partial A_j(w_S)$ is strictly positive for every $A_j$ which is
  zero at $w_S$.  
\end{theorem}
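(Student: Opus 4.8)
\emph{Sufficiency of a positive first-order direction.} Given a $v$ with $v\cdot\partial A_j(w_S)>0$ whenever $A_j(w_S)=0$, I would write down the deformation $w_R\mapsto\rho(w_R)\,v$ with $\rho\ge0$ a smooth cutoff that is $1$ near $w_S$, is supported in a ball around $w_S$ small enough that no $A_j$ with $A_j(w_S)\neq0$ vanishes there and every $v\cdot\partial A_j$ with $A_j(w_S)=0$ stays strictly positive there, and vanishes on $\partial\Gamma_0$. Since each $A_j$ is analytic, the imaginary part of $A_j\bigl(w_R+i\lambda\rho(w_R)v\bigr)+i\epsilon$ is $\lambda\rho(w_R)\,v\cdot\partial A_j(w_R)+\epsilon$ to leading order in $\lambda$, which stays strictly positive for $\lambda$ small on the deformed part of the contour unless $\lambda=\epsilon=0$; off the ball the contour is undeformed and the denominators are nonzero for $\epsilon>0$. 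By the definitions of Sec.~\ref{sec:deformation} this is an allowed deformation avoiding every singularity at $w_S$, so the contour is not trapped. (No quadratic hypothesis is needed here.)

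\emph{Necessity; reduction to the anomalous case.} For the converse I would argue by contraposition: assuming no such $v$ exists, show that no deformation $\hat v(\cdot)$ can avoid the singularities at $w_S$. Set $\mathcal J=\{j:A_j(w_S)=0\}$, $v_0=\hat v(w_S)$, and let $Q_j$ be the constant real quadratic part of $A_j$; by hypothesis~(a), $A_j\bigl(w_R+i\lambda\hat v(w_R)\bigr)+i\epsilon$ has real part $A_j(w_R)-\lambda^2Q_j(\hat v(w_R))$ and imaginary part $\lambda\,\hat v(w_R)\cdot\partial A_j(w_R)+\epsilon$, so a denominator vanishes exactly when some $j\in\mathcal J$ has $\hat v(w_R)\cdot\partial A_j(w_R)\le0$, $\epsilon=-\lambda\,\hat v(w_R)\cdot\partial A_j(w_R)$ and $A_j(w_R)=\lambda^2Q_j(\hat v(w_R))$. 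Necessarily $v_0\neq0$ (else $A_j(w_S)+i\epsilon$ vanishes at $\epsilon=0$ for all $\lambda>0$). As $v_0$ is not good, some $j\in\mathcal J$ has $v_0\cdot\partial A_j(w_S)\le0$; if this is strict for some such $j$ with $\partial A_j(w_S)\neq0$, then for each small $\lambda$ the real part changes sign as $w_R$ moves $O(\lambda^2)$ from $w_S$ along $\pm\partial A_j(w_S)$ (the $\lambda^2Q_j$ term being only $O(\lambda^2)$), so it vanishes at some $w_R^\ast\to w_S$ where $\hat v(w_R^\ast)\cdot\partial A_j(w_R^\ast)\to v_0\cdot\partial A_j(w_S)<0$, and $\epsilon:=-\lambda\,\hat v(w_R^\ast)\cdot\partial A_j(w_R^\ast)\in(0,\epsilon_0]$ then gives a genuine singularity, so $\hat v$ is not even allowed. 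Hence an allowed deformation must have $v_0\cdot\partial A_j(w_S)\ge0$ for every $j\in\mathcal J$, with equality on a nonempty subset $\mathcal J_0$.

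\emph{The anomalous case --- the main obstacle.} It remains to rule out that $\hat v$ escapes the $\mathcal J_0$-denominators through second-order behaviour --- the phenomenon of App.~\ref{sec:2D.first.order}. Here $v_0\perp\partial A_j(w_S)$ for $j\in\mathcal J_0$, and $\mathcal J_0$ itself has no good direction, since a $u$ good for $\mathcal J_0$ would make $v_0+\eta u$ good for all of $\mathcal J$ for small $\eta>0$; by the geometric alternative (Thm~\ref{thm:main.geom}; cf.\ Definition~\ref{def:Landau.point}) there are weights $\mu_j>0$ supported on a subset of $\mathcal J_0$ with $\sum\mu_j\partial A_j(w_S)=0$, so $\sum\mu_jA_j(w_R)=\sum\mu_jQ_j(w_R-w_S)$ is purely quadratic near $w_S$: the linear obstruction has been traded for a quadratic one in a space of strictly lower effective dimension. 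The plan is to run this as a descent. If some $j\in\mathcal J_0$ has $Q_j(v_0)=0$ the deformation already fails (that denominator vanishes at $w_R=w_S$, $\epsilon=0$, for every $\lambda>0$); otherwise each $A_j$ ($j\in\mathcal J_0$) can vanish on the deformed contour only on the surface $\{w_R:A_j(w_R)=\lambda^2Q_j(\hat v(w_R))\}$, which passes within $O(\lambda^2)$ of $w_S$ on the side of $\{A_j=0\}$ fixed by the sign of $Q_j(v_0)$, and avoiding the singularity there forces $\hat v(w_R)\cdot\partial A_j(w_R)>0$ along that surface near $w_S$ --- again a first-order positivity requirement, but one dimension down and with the $Q_j$ now in the role of denominators. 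Hypothesis~(a) is exactly what keeps the descended problem quadratic so the recursion closes, and the sign condition~(\ref{eq:restrict}) is exactly what makes the descended quadratic forms retain the reality-and-sign structure needed to re-apply the argument; the recursion terminates on dimension, the base case (a single index of vanishing gradient, or a zero-dimensional space) being disposed of directly. I expect the genuinely hard part to be making this descent watertight: tracking how the merely continuous, piecewise-differentiable $\hat v$ enters at each level, handling the degenerate case $\partial A_j(w_S)=0$, and verifying that an anomalous deformation cannot reappear at any stage.
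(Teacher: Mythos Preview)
Your sufficiency argument and the reduction to the anomalous case (forcing $v_0\cdot D_j\ge0$ for every $j$ with $A_j(w_S)=0$, with equality on a nonempty $\mathcal J_0$) are correct and match the paper's elementary results; your exact real/imaginary-part formulas from the quadratic hypothesis are right and are what the paper exploits. The gap is in the anomalous case. The proposed descent---``one dimension down with the $Q_j$ now in the role of denominators''---is not well-defined: you never say what the descended integration variable is, what functions serve as denominators in the reduced problem, or why that problem is of the same type so the recursion closes. The $Q_j$ are quadratic forms in the deformation direction, not denominators in a lower-dimensional integral; their gradients at $v_0$ are the vectors $E_jv_0$, and there is no evident way these inherit a Landau relation or the sign hypothesis in a form you can iterate. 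You are right that this is the hard part, but the recursion as sketched does not close.

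The paper's argument is direct, not recursive. Having obtained a Landau point $\sum_j\alpha_jD_j=0$ with $\alpha_j>0$ on its support and $D_j\cdot v_0=0$ there, it probes along the single line $w_R=w_S+xv_0$. If some $v_0\cdot E_j\cdot v_0=0$ the singularity is already not avoided at $w_R=w_S$. Otherwise each real part vanishes at $\lambda\simeq|x|$, and there the imaginary part is $D_j\cdot\delta v(xv_0)+x\,v_0\cdot E_j\cdot v_0+o(x)$. The sign hypothesis~(\ref{eq:restrict}) enters exactly here: all nonzero $v_0\cdot E_j\cdot v_0$ share one sign, so a single choice of the sign of $x$ makes $x\,v_0\cdot E_j\cdot v_0<0$ for every $j$ in the Landau support simultaneously. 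But $\sum_j\alpha_j\,D_j\cdot\delta v(xv_0)=0$ with all $\alpha_j>0$, so some $D_j\cdot\delta v(xv_0)\le0$; for that $j$ the imaginary part is strictly negative and the deformation is disallowed. So the role of hypothesis~(b) is not to preserve structure through a recursion but to align the bad quadratic terms across all relevant denominators at once, enabling a one-shot contradiction along a specific probe direction.
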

\noindent
Notice that the theorem does \emph{not} say that a contour deformation
given by a function $v \mapsto v(w_R)$ avoids a singularity at $w_S$ if and only
if $v(w_S)\cdot\partial A_j(w_S)$ is strictly positive for every $A_j$ which is zero
at $w_S$.  That property appears to be universally assumed in textbook
proofs, but it is fact false, as shown by the example in App.\
\ref{sec:2D.first.order}.  That is, it is possible to avoid singularities
with a anomalous contour deformation, i.e., one for which $v\cdot\partial A_j(w_S)$ is
zero instead of positive for one or more of the relevant denominators.
Hence some trouble is needed to prove a correct theorem, as we will do
later.  What the theorem does enable one to say is that if there exists an
anomalous deformation there is also a non-anomalous deformation that avoids
the singularity.

The second subsidiary theorem is a purely geometrical result:
\begin{theorem}
  \label{thm:main.geom}
  Let $V$ be a real vector space, and let $(D_1,\dots,D_N)$ be dual vectors
  on $V$.  Then, there is a direction $v$ for which $D_j(v)>0$ for all $j$,
  if and only if there is no Landau point for the $D_j$.
\end{theorem}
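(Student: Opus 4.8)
The statement is a separation/duality result, so the natural framework is the theory of convex cones and, in finite dimensions, a Farkas-type alternative. The plan is to prove the two directions separately, with the substantive content in the direction ``no Landau point $\Rightarrow$ existence of $v$ with $D_j(v)>0$ for all $j$.''

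First the easy direction. Suppose a Landau point exists, i.e.\ there are $\lambda_j\geq0$, not all zero, with $\sum_j\lambda_jD_j=0$. If some $v$ satisfied $D_j(v)>0$ for every $j$, then applying this relation to $v$ gives $0=\sum_j\lambda_jD_j(v)$, a sum of nonnegative terms in which at least one term $\lambda_{j_0}D_{j_0}(v)$ is strictly positive (since $\lambda_{j_0}>0$ and $D_{j_0}(v)>0$), a contradiction. So no such $v$ exists, which is the contrapositive of what is wanted.

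For the hard direction, assume there is no Landau point; I want to produce $v$ with $D_j(v)>0$ for all $j$. I would work in the dual space $V^\ast$ and consider the convex cone $C=\{\sum_j\lambda_jD_j:\lambda_j\geq0\}$ generated by the $D_j$. The absence of a Landau point says exactly that $0$ is not expressible as a nontrivial nonnegative combination; one checks this is equivalent to saying $C$ is a \emph{salient} (pointed) cone, i.e.\ $C\cap(-C)=\{0\}$, and moreover that the origin is not in the convex hull of $\{D_1,\dots,D_N\}$ — no, more carefully: the condition is that $0\notin\mathrm{conv}\{D_1,\dots,D_N\}$ fails to capture it, so the right reformulation is that the $D_j$ all lie strictly on one side of some hyperplane through the origin. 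The cleanest route is a Farkas/Gordan alternative: Gordan's theorem states precisely that exactly one of the following holds — either there exist $\lambda_j\geq0$, not all zero, with $\sum_j\lambda_jD_j=0$, or there exists $v$ with $\langle D_j,v\rangle>0$ for all $j$. This is exactly our statement. So the real task is to prove Gordan's theorem in a self-contained way: I would do so by a compactness/minimization argument — restrict to $\sum_j\lambda_j=1$ (a compact simplex), let $D(\lambda)=\sum_j\lambda_jD_j$, and if $0$ is not attained as a value of $D$ on the simplex, pick $\lambda^\ast$ minimizing $\|D(\lambda)\|^2$ for some inner product on $V^\ast$; the point $u=D(\lambda^\ast)\neq0$, and a first-order optimality (variational) argument on the simplex shows $\langle D_j,u\rangle\geq\|u\|^2>0$ for every $j$, so $v$ corresponding to $u$ under the chosen identification $V\cong V^\ast$ works.

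The main obstacle I anticipate is the variational step: verifying that the minimizer $\lambda^\ast$ forces $\langle D_j,u\rangle\geq\|u\|^2$ for \emph{all} $j$ (not merely those in the support of $\lambda^\ast$) requires care with the simplex constraint — one perturbs $\lambda^\ast$ toward each vertex $e_j$, writes $\tfrac{d}{dt}\big|_{t=0^+}\|D((1-t)\lambda^\ast+te_j)\|^2\geq0$, and unpacks this to $\langle D_j-u,u\rangle\geq0$. A secondary point to handle cleanly is the reduction to the compact simplex: one must note that a Landau point can always be rescaled so that $\sum_j\lambda_j=1$, so working on the simplex loses nothing. Everything here is finite-dimensional and elementary; no topology beyond compactness of the simplex and continuity of a quadratic form is needed, which fits the paper's stated desire to avoid homology theory.
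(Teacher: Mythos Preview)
Your proposal is correct: this is precisely Gordan's theorem of the alternative, and your closest-point argument on the simplex is the standard and fully rigorous proof. (A small quibble: your hesitation about ``$0\notin\mathrm{conv}\{D_1,\dots,D_N\}$'' is unfounded --- after normalizing $\sum_j\lambda_j=1$, the Landau condition \emph{is} exactly $0\in\mathrm{conv}\{D_j\}$; but your route through the minimization works regardless.) One cosmetic point: you introduce an auxiliary inner product on $V^\ast$ to run the minimization and then identify $V\cong V^\ast$. The paper explicitly declines to assume any metric on $V$ or $V^\ast$, but of course an auxiliary inner product can always be chosen in finite dimensions, so this is not a gap.

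The paper's own proof is genuinely different and much longer. Rather than invoke a separation/alternative theorem, it builds a detailed geometric picture of the ``positive region'' $P_{\mathcal D}=\{v:D_j(v)>0\ \forall j\}$: its boundary is decomposed into a hierarchy of flat faces, and the lowest-dimensional nontrivial faces yield a finite set of \emph{edge vectors} $e_L$ such that $P_{\mathcal D}$ is exactly the set of positive combinations of the $e_L$ (modulo the common kernel). Armed with this, the paper takes the largest initial segment $S_{n_0}=(D_1,\dots,D_{n_0})$ admitting a good direction, and then constructively builds a Landau point for $S_{n_0+1}$ by walking along the boundary of a certain polytope $M$ in $\lambda$-space, at each step zeroing out an additional $f(L,\boldsymbol\lambda)=D(\boldsymbol\lambda)(e_L)$ until all vanish. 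What your approach buys is brevity and an immediate existence proof; what the paper's approach buys is an explicit constructive procedure and a body of intermediate results (the edge-vector decomposition, the structure of $M$) that the author advertises as potentially useful for \emph{algorithmic} contour deformation in numerical Feynman-graph calculations --- a stated motivation of the paper that a bare Gordan argument does not serve.
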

\noindent 
When $v$ is the deformation direction of a contour, this theorem gives the
condition under which the first order imaginary-direction shifts in
denominators can be made all positive.

It will be convenient to prove these theorems in the opposite order to
which they are stated, since the proof of the contour deformation theorem
uses the geometrical theorem.  But the motivation and relevance of the
geometric theorem arises from considerations about contour deformations, so
it was convenient to state the contour deformation theorems first.

\subsection{Elementary parts of proofs}
\label{sec:elem.parts}

Certain directions of implication in Thms.\
\ref{thm:pinch.to.1st.order.shift} and \ref{thm:main.geom} are almost trivial to
prove, as follows:

Suppose at some point $w_S$ in the integration range, a set of denominators
is zero and that a contour deformation gives positive first order shifts in
these denominators.  Without loss of generality, list the denominators that
are zero as $(A_1,\ldots,A_N)$.  Let the derivatives be $D_j=\partial A_j(w_S)$.  The
positivity of first-order shifts means that $D_j(v(w_S))>0$ for $1\leq j \leq N$.
Then we have already seen that the contour deformation avoids the
integrand's singularity at $w_S$.

Now, under the same conditions, consider any array of real numbers $\lambda_j$
($1\leq j \leq N$) which are non-negative and for which at least one is positive.
Then $\sum_{j=1}^N\lambda_j D_j(v(w_S)) >0$ and hence the dual vector
$\sum_{j=1}^N\lambda_jD_j$ is nonzero.  Therefore there is no Landau point.

In order to get the desired necessary and sufficient conditions, we also
need to prove the reverse implications, which is quite non-trivial.  It is
interesting to observe, that even to get one of the directions of
implication in the main theorem requires that we use of a non-trivial
direction of implication in one or other of the subsidiary theorems.

\section{Useful generalizations}
\label{sec:extra.cases}

In this section, I gather a couple of illustrations of applications of the
derived theorems to situations beyond the standard analyses of
singularities of ordinary Feynman amplitudes.  The standard application to
momentum-space Feynman graphs is illustrated in App.\
\ref{sec:self-energy}.

\subsection{Glauber region}
\label{sec:Glauber}

One example of the need for a more general derivation of the Landau
criterion is the general analysis of the Glauber region given in Sec.\
5.6.3 of Ref.\ \cite{Collins:2011qcdbook}.  In that situation,
approximations have been made for a Feynman graph that are valid in a
certain region of its loop momenta, with the momenta being classified into
soft, collinear, and hard categories.  It is desired to determine when
there is a trap in the Glauber region; this is important because the
approximations used for soft momenta fail when a soft momentum is of the
kind called Glauber.  The importance of this issue is that in some
situations, there are uncanceled Glauber contributions, and these break
standard formulations of factorization in interesting cases, e.g.,
\cite{Collins:2007nk}.  

An appropriate method \cite[Sec.\ 5.6.3]{Collins:2011qcdbook} to locate
Glauber contributions uses a version of the Libby-Sterman argument, but
applied to the approximated graph in which standard soft and collinear
approximations have been made.  If a contour deformation cannot be made to
avoid the Glauber regions, then there is a corresponding exact pinch in the
approximated graph.  The use of the relevant Landau condition gives a
necessary and sufficient condition for the Glauber pinch.

The importance of this analysis, with its systematic use of an improved
Landau analysis, is that it can be used to locate in full generality where
extra regions and scalings in momentum space are important beyond the usual
classification into soft, collinear, and hard, with associated scalings of
momentum components.

\subsection{Coordinate space}
\label{sec:coord}

Another example is the extraction of coordinate-space properties of
amplitudes.  For example, the Fourier transform of a free propagator is
\begin{equation}
\label{eq:coord}
  S_F(x) = \int \frac{ \diff[n]{k} }{ (2\pi)^n }
           e^{-i k \cdot x}
           \frac{ i }{ k^2-m^2 +i\epsilon },
\end{equation}
where $n$ is the number of space-time dimensions, and the limit
$\epsilon \to 0$ from positive values is implicit, as usual.  

Suppose we are interested in how this integral behaves when $x$ is scaled
to large values: $x \mapsto \kappa x$ with $\kappa \to \infty$.  Of course, in the particular case
given, a solution can be found analytically, since the free propagator is a
kind of Bessel function with known asymptotics.  But it is important to
have a method that can be applied much more generally without appealing to
properties of known special functions.  To do this, we observe that over
much of the space of real $k$, one can deform the contour of $k$ so as to
give $k\cdot x$ a negative imaginary part.  But near the pole at $k^2=m^2$, we
need to have the deformation compatible with the $i\epsilon$ prescription in the
denominator.  If these two conditions on contour deformation are
incompatible, then we must leave the contour on the real ``axis'' and get
an unsuppressed contribution to the large $\kappa$ asymptotics.

Let us specify the deformed contour as
\begin{equation}
  k = k_R + iv(k_R).
\end{equation}
Then the condition for an exponential suppression is
\begin{equation}
\label{eq:exp.supp}
  -v(k_R)\cdot x > 0,
\end{equation}
while the condition for avoiding the propagator pole is\footnote{In this
  statement, we are assuming that avoiding the pole can always be done by a
  contour deformation that gives a positive first-order shift to the
  imaginary part of the denominator.  The complications hidden in
  justifying this assumption have already been mentioned.  Nevertheless,
  use of the methods of Sec.\ \ref{sec:deal.with.zero.first.order} will
  show that an exponential suppression with a singularity-avoiding contour
  occurs if and only if there is a contour obeying Eqs.\
  (\ref{eq:exp.supp}) and (\ref{eq:pole.avoid}).}
\begin{equation}
\label{eq:pole.avoid}
  v(k_R) \cdot k_R > 0 \quad \mbox{when $k_R^2=m^2$}.
\end{equation}

These conditions are incompatible when $x$ is proportional to $k_R$ with a
\emph{positive} coefficient and $k_R$ is on-shell.  If $k_R$ has positive
energy, then the relevant values of $x$ are future pointing in the same
direction as $k_R$, while if $k_R$ has negative energy, $x$ is past
pointing.

Given a value of $x$, this observation determines which values (if any) of
$k_R$ give unsuppressed contributions to $S_F(x)$.  Here ``unsuppressed''
means ``not exponentially suppressed''; this use of ``unsuppressed'' allows
it to include merely ``power suppressed''.

Now an on-shell value of $k_R$ is time-like. Hence, when $x$ is space-like,
there is no value of $k_R$ giving an unsuppressed contribution. Then there
is no obstruction to deforming the contour, and an exponential suppression
of $S_F(x)$ is a consequence.

In contrast, when $x$ is time-like, the deformation cannot be made, and
that gives power-law behavior as $x$ is scaled.  The dominant contribution
comes from near the pole in momentum-space, and the asymptote can be
extracted by suitable approximation methods.  These methods continue to
apply if the free momentum-space propagator is replaced by the full
propagator in an interacting theory, which has a more general dependence on
momentum, but with its strongest singularity still being a pole at the
physical particle mass.

It is worth noting that similar methods can also be applied to get from the
behavior of a coordinate-space Green function to particular properties in
momentum space. Thus one can determine for the vertices of a graph the
dominant regions in coordinate space that contribute to a particular
process.  We leave the systematic codification of such results to future
work.

Some relevant recent work is by Erdo\u{g}an and Sterman
\cite{Erdogan:2014gha,Erdogan:2016ylj,Erdogan:2017gyf}.

\section{Literature review}
\label{sec:literature}

In this section I assess some of the classic literature about the Landau
analysis.  Since many of these works continue to be cited regularly as the
primary sources for results on singularities and pinches of contours, it is
useful to examine their arguments in detail.  The review in this section
extends observations already made in the introduction.

It should be observed that typical treatments rely on the use of Feynman
parameters to combine the denominators into a single denominator.  Then
they examine the conditions for a pinch of the integration contour, rather
than trying to create a more detailed geometrical argument that applies to
the multiple-denominator situation.  This rules out any easy application of
the methods to more general situations, e.g., examining properties of
integrals involving coordinate space properties, as in Sec.\
\ref{sec:coord}, or the issues of algorithmic deformation of a contour for
numerical integration of a Feynman graph in the pure momentum-space
formulation.

\paragraph{Landau \cite{Landau:1959fi}}

Landau's paper \cite{Landau:1959fi} gives the original treatment of the his
criterion for singularities of a Feynman graph as a function of its
external parameters.

The analysis solely uses the Feynman parameter representation in the form
(\ref{eq:F.graph.param.mom}).  It relies on the denominators being those of
standard Feynman graphs.  Then in Landau's Eq.\ (4) the single denominator
is written as $\phi+K(k',l',\ldots)$, where $\phi$ is a function only of the external
parameters and $K$ is a homogeneous quadratic form in a set of variables
that are formed by a (parameter-dependent) linear transformation from the
original loop momenta.  This by itself rules out the case that some
denominators have linear dependence on some (or all) loop momenta.  Such
cases arise in practice.  For example, in QCD applications we have cases
with Wilson denominators.  In such a situation, the equivalent of $K$ is not
a homogeneous quadratic function.

The argument then continues to determine that a singularity of the integral
(as a function of external parameters) occurs when there is a point in
integration space where the denominator and its first derivative vanish.
No detailed argument is given, the core parts of the argument being treated
as ``easy to verify''.  However, a detailed derivation, in Sec.\
\ref{sec:one-denom} of the present paper, is not at all easy.  In fact the
proof fails whenever the matrix of second derivatives of the denominator
has an eigenvector with zero eigenvalue.  This situation does in fact
sometimes arise in practice, as mentioned in a later paper by Coleman and
Norton \cite{Coleman:1965xm}.

Moreover, Landau's argument is rather difficult to apply as written if
there are massless particles, as is essential in applications to QCD
factorization.  In contrast, the methods of the present paper do apply
unchanged to such cases.  They are also applied directly to the momentum
space integral without an appeal to Feynman parameters.

A minor problem is that the $i\epsilon$ prescription is not mentioned explicitly
even though that is critical in determining whether or not there is a
pinch.

\paragraph{Coleman and Norton \cite{Coleman:1965xm}}

Coleman and Norton \cite{Coleman:1965xm} again use a parametric
representation.  In the first part of the paper, they discuss the version
with both momentum and parameter integrations.  They state, rather like
Landau, that to get pinch there needs to be either a coalescing pair of
singularities or an end-point singularity.  This immediately gives the
Landau equations.  However, given this first part of the derivation, the
Landau condition is clearly necessary but not sufficient, since it has not
yet been determined whether or not coalescing singularities actually pinch
the contour.  It is also not really obvious what the term ``coalescing
singularities'' means except in one dimension.  In addition, it is not
clear why attention is restricted to pairs of singularities,

To provide an actual proof of necessity and sufficiency, Coleman and Norton
perform the momentum integrals analytically, and work with an integral
solely over the parameters, i.e., an integral of the form
(\ref{eq:F.graph.param.only}), and restore the $i\epsilon$.  It is not actually
clear why they switch to this kind of integral.  The rest of their argument
appears to apply to a general multidimensional integral (subject to certain
conditions on the quadratic terms, as we will see).  Thus their arguments
appear to apply equally to the integral with both momentum and parameter
integrations.  But they clearly think that this approach would fail.

Then they examine the denominator in the neighborhood of a point where both
the denominator and its first derivative are zero.  This is a place where
the Landau condition is satisfied, because of the zero first derivative.
They expand the denominator to quadratic order in small deviations from the
candidate pinch location, which gives a formula for the denominator of the
form
\begin{equation}
  A = \frac12 \sum_{ij} E_{ij} \eta_i\eta_j.
\end{equation}
The authors then state that it is easy to show that the contour is trapped,
but only if none of eigenvalues of $E_{ij}$ is zero.  However, as will be
seen later in the present paper, in Sec.\ \ref{sec:one-denom}, an adequate
proof is not entirely trivial.  The proof does indeed fail when zero
eigenvalues exist.  It is not at all clear whether the failure can be
remedied, or how that can be done.

That cases of zero eigenvalues arise in massive theories in reality is
mentioned; they occur only at ``very exceptional points''.  The reader is
referred to Ref.\ \cite{Eden:1961} for more details.  But that paper
appears not to contain a clear statement of whether such singularities
can occur in the physical region.  Considerable further work is apparently
needed to resolve the issue.

In contrast, in a massless theory, a much simpler failure happens, as will
be explained in this paper in App.\ \ref{sec:F.param.massless} for the case
of a one-loop self-energy with massless particles.  This graph has a
well-known collinear pinch when the external momentum is light-like.  But
it is found that in the parameter integral there is no pinch that
corresponds to the collinear pinch in momentum space.  

A further complication is found in App.\ \ref{sec:F.param.linear} in an
example graph where propagators are linear in a momentum component.  For
that graph the pure parameter integral has a pinch independently of whether
there is a pinch in the momentum integral.

Evidently Coleman and Norton have assumed that a pinch in momentum space
occurs if and only if a corresponding pinch occurs in parameter space, and
that this is so obvious as to need neither mention nor proof.  The examples
just mentioned show that the implication is not even correct, in general,
even if it works in the case of standard massive Feynman graphs.

After giving their derivation of the Landau condition, Coleman and Norton
derive their well-known result that a pinch configuration corresponds to a
situation with classical particles propagating and scattering in space-time
with momenta corresponding to the on-shell momenta of the lines
participating in the pinch.

It is important to remember that it is not the result that breaks down, but
the proof. But the proof's breakdown is a symptom of things that were not
understood.  For example, in Apps.\ \ref{sec:F.param.massless} and
\ref{sec:F.param.linear} are given counterexamples that imply a failure of
Coleman and Norton's proof.  But in both cases the Landau condition
correctly locates pinch(es) in the momentum-space integral.  The general
proof in the present paper applies perfectly well to those cases.  Of
course the new proof is much longer than those in the old papers.

\paragraph{ELOP \cite{ELOP}}

The venerable book by Eden et al.\ \cite{ELOP} remains a standard reference
for analyticity properties of Feynman graphs.  Therefore it is worth
carefully assessing its treatment.  As was remarked in the introduction,
the authors do say that their treatment lacks rigor, but do not make
explicit what is not rigorous.

After a clear discussion of the one-dimensional case, they come to the
multidimensional case on p.\ 47.  Their subject matter is a general
integral over multiple complex variables, but without the further
``physical region'' restrictions inherent in our (\ref{eq:integral}); these
are a reality property of the denominators and an $i\epsilon$ prescription.
Theirs is therefore in principle a more general treatment.  Their equations
for singularity surfaces $S_r=0$ correspond to the equations $A_j=0$ for
the zeros of our denominator factors.

The first problem is that they say that when a singularity surface advances
on the contour of integration, they say that if the singularity is to be
avoided, the contour should be distorted ``in the direction of the normal''
to the singularity surface.  This appears to say that there is a unique
direction in which to distort the contour.  But we have seen that in fact
there is a whole half-space of possible directions, and it is absolutely
necessary to take this into account.  In addition, the concept of an
unambiguous ``normal'' to a surface only makes sense in a Euclidean space,
which is not the case for multidimensional complex variables with which we
are concerned.

In addition, they appear to assume as so obvious as not to need a proof
that for a contour deformation to avoid a singularity surface it must give
a nonzero first order shift in the denominator factors (or the equivalent
in their more general integral).  But this definitely not the case ---
see App.\ \ref{sec:2D.first.order}.

Then in Eq.\ (2.1.19) they assert the conditions for singularity surfaces
to trap the integration contour.  These are a form of the Landau condition.
But no proof and no reference to a proof is given.  It is as if they think
the equation is obvious.  But as we will see in Secs.\
\ref{sec:overall}--\ref{sec:deal.with.zero.first.order}, the condition is
rather non-trivial to derive.  They continue to refer to normals to
surfaces, but have evidently confused the concept with the relevant one of
dual vectors, so that there is considerable conceptual confusion not
conducive to adequate reasoning.  It is not at all obvious whether they
consider the conditions to be both necessary and sufficient, and why.

Finally, their statement (2.1.19b) of the condition for a version of a
Landau point lacks the positivity constraint needed for the kind of
``physical region'' pinch we consider.  Recall that the positivity
constraint is that the $\lambda_j$ parameters in Defn.\ \ref{def:Landau.point}
are non-negative, and that at least one is positive.  While the more
general version is appropriate for pinches outside the physical region,
further conditions are needed to determine whether or not there is a pinch.
This can be seen from the fact that their version of the condition is
trivially satisfied whenever the number of singularity surfaces is larger
than the dimension of the integration space, as the authors do indeed
observe.  Hence some stronger condition than (2.1.19b) is needed to provide
sufficient conditions to determine that there is a pinch.  

In stark contrast, for a physical region pinch, the Landau condition (with
the positivity constraint) is both necessary and sufficient.  Of course
this only applies given both the reality conditions on our denominator
factors $A_j$ and the $i\epsilon$ prescription; the relevant theorem is Thm.\
\ref{thm:main.contour}, and its very non-trivial proof appears in later
sections.  (Our proof also has some further restrictions, given in the
statement of the theorem; these are obeyed by standard and by important
non-standard Feynman graphs.)  It is worth re-emphasizing that it is solely
the physical-region pinches that are relevant to QCD applications, and the
positivity constraints on the $\lambda_j$ parameters in the Landau point
definition are very important in delimiting collinear configurations of
partons.

The positivity conditions do appear in the ELOP treatment for physical
region pinches/singularities, but only when they consider Feynman graphs in
a Feynman parametric representation.  Then the positivity conditions arise
from the range of the Feynman parameters.  But they do not derive the same
constraints when the derive the conditions for a pinch from the pure
momentum-space formula for a Feynman graph.  Moreover, working in parameter
space leads to the issues explained in the analysis of the Coleman-Norton
treatment.

\section{The geometrical theorem: Set up}
\label{sec:overall}

In this section and the next two sections, we will prove the last of the
theorems listed in Sec.\ \ref{sec:primary.theorems}, i.e., the purely
geometric Thm.\ \ref{thm:main.geom}.  It can be regarded as giving a
compatibility condition for linear constraints on directions in a vector
space.

Throughout the treatment of this theorem, we work with a
finite-dimensional\footnote{The assumption of finite dimensionality can be
  relaxed, but we will not need to do so.}  real vector space $V$ of
dimension $d$, and we suppose given a list $\Dset$ of dual vectors $D_j$ on
$V$ ($1\leq j \leq N$).  By definition, each $D_j$ is a real-valued linear
function from $V$ to the space of real numbers.  The constraints on vectors
with which we are concerned are written $D_j(v) > 0$.  In component
notation, we write
\begin{equation}
  \label{eq:Dj.compt}
  D_j(v) = \sum_\alpha D_{j\alpha}v^\alpha,
\end{equation}
where $v^\alpha$ denotes the components of $v$ with respect to some basis.  But
we will use coordinate-independent notation much of the time.  The space of
all dual vectors is a vector space $V^*$ of the same dimension as $V$ (if
$V$ is finite dimensional).  We do not assume that there is any metric
given on $V$ or $V^*$.

Observe that although our original subject was integration in a complex
space, the manipulations involved in analyzing possible directions of
deformation, and hence of the constraints $D_j(v)>0$, only concern a real
vector space.

In the integration problem, we were concerned with whether or not a contour
deformation exists that avoids a singularity of the integrand.  In the
geometric problem that we are addressing at the moment, a concept
corresponding to singularity avoidance in integrations is what we call a
``good direction'', defined by
\begin{definition}
  \label{def:all.dir}
  A \emph{good direction} for $(D_1,\dots,D_N)$ is defined to be a $v \in V$
  such that $D_j(v)>0$ for all $j$.
\end{definition}

Throughout this and the next two sections, we use the terminology of Landau
points and Landau conditions that was defined in Defns.\
\ref{def:Landau.point} and \ref{def:Landau.condition}, names motivated by
the application to integrals. The theorem to be proved is that a good
direction exists for $(D_1,\dots,D_N)$ if and only if there is no Landau
point.  Alternatively, there is no good direction if and only if there is
at least one Landau point.

We have already observed, in Sec.\ \ref{sec:elem.parts}, that if there is a
good direction then there is no Landau point and hence the Landau condition
holds.  Equivalently, if the Landau condition holds, then there is no good
direction.  

To complete the proof of Thm.\ \ref{thm:main.geom}, we need to prove the
converse, i.e., that if there is no good direction then there is a Landau
point.  What is needed is to exclude with full generality the possibility
that there might fail to exist both a Landau point and a good direction.

In simple examples, it is not too hard to see that the theorem is valid,
with both directions of implication; such examples can often be visualized.
But in general the vector space $V$ can be of arbitrarily high dimension,
and the number of $D_j$ can be arbitrarily large.  Then visualizing the
details of the proof is hard.  Thus careful abstract arguments are needed.
In making the detailed analysis, we will encounter methods and results that
should be useful in algorithmic determining good directions for contour
deformations in numerical integration over loop momenta in Feynman graphs.

We will start in Sec.\ \ref{sec:pos.sets} by characterizing properties of
the set of good directions, and especially the boundaries of this set.
Then in Sec.\ \ref{sec:landau.proof}, we will use these properties to
complete the proof of the geometric theorem.  A reader may find it unclear
what the motivation is for deriving some of the earlier properties, i.e.,
those in Sec.\ \ref{sec:pos.sets}.  So it may be useful to skip ahead to
Sec.\ \ref{sec:landau.proof} to see what use is made of the results of
Sec.\ \ref{sec:pos.sets}.

\section{Geometry of positive regions of sets of dual vectors}
\label{sec:pos.sets}

\subsection{Setting up the problem}

We use the notation of the previous section, and define the positive region
of a list $\Dset=(D_1,\ldots,D_N)$ of dual vectors by
\begin{definition}
  We define $P_{\Dset}$ to be the region of $V$ in which all the
  $D_j$s in $\Dset$ are strictly positive:
  \begin{equation}
    P_{\Dset} \eqdef \left\{ v \in V : \forall D_j \in \Dset, D_j(v) > 0 \right\}.
  \end{equation}
  We call this the \emph{``positive region''} of $\Dset$.
\end{definition}
The overall issue we are addressing is the determination of whether or not
$P_{\Dset}$ is empty. 

In this section, we will examine the case that $P_{\Dset}$ is non-empty,
and determine properties of its boundary that we will need later.  Observe
that if $P_{\Dset}$ is non-empty, then all the $D_j\in\Dset$ are necessarily
non-zero.

\begin{definition}
  The complement of $P_{\Dset}$ is notated as:
  \begin{equation}
    \widehat{P}_{\Dset} \eqdef V \setminus P_{\Dset}
                 = \left\{ v \in V : \exists D_j \in \Dset : D_j(v) \leq 0 \right\}.
  \end{equation}

\end{definition}

We make a lot of use of the intersection of the kernels of $D_j$.
So we define
\begin{definition}
  \begin{equation}
    K_{\Dset} \eqdef \left\{ v \in V : \forall D_j \in \Dset, D_j(v) = 0 \right\}.
  \end{equation}
\end{definition}
\begin{definition}
  Define $n_{\Dset}$ to be the codimension of $K_{\Dset}$ in $V$, i.e.,
  $n_{\Dset} = d - \dim(K_{\Dset})$.
\end{definition}
It is well-known that $K_{\Dset}$ is a vector subspace of $V$.  When
$P_{\Dset}$ is non-empty, $K_{\Dset}$ cannot be the whole of $V$, so
that in this case its codimension obeys $n_{\Dset} \geq 1$.

We can decompose $V$ as a direct sum of the form
\begin{equation}
\label{eq:V.decomp}
  V = V_{\perp \Dset} \oplus K_{\Dset}.
\end{equation}
The dimension of $V_{\perp \Dset}$ is $n_{\Dset}$.  Note that $V_{\perp
  \Dset}$ is non-unique, since its basis vectors can be changed by the
addition of elements of $K_{\Dset}$.  If we are given that $P_{\Dset}$
is non-empty, then there must be a region of $V_{\perp \Dset}$ where the
$D_j$ are positive.

The critical result that we are working towards in this section is Thm.\
\ref{thm:PD.decomp} below, where we find a set of non-zero ``edge vectors''
$e_L$ for $P_{\Dset}$ such that every element of $v$ of $P_{\Dset}$ has the
form $v = \sum_L C_L e_L + v_K$, where all the $C_L$ are positive real
numbers, $C_L>0$, and $v_K \in K_{\Dset}$.

To derive Thm.\ \ref{thm:PD.decomp}, we will need a series of subsidiary
results, many of which are very elementary, and are obvious in
low-dimensional examples.  But these results need to be explicitly stated
in order to ensure that the main theorem is properly proved in a space of
arbitrarily high dimension; their cumulative effect is quite non-trivial.
Many of the subsidiary results are likely to be useful in themselves for
applications, e.g., for searching for good directions to deform a contour
when there is no pinch.

\subsection{Elementary properties of \texorpdfstring{$P_{\Dset}$}{PD}}

\begin{theorem}
\label{thm:P.basic}
  $P_{\Dset}$ obeys
  \begin{enumerate}[(a)]
  \item It is convex, i.e., if $v_1,v_2 \in P_{\Dset}$ and $\kappa$ is any
    real number between 0 and 1 inclusive (i.e., $0\leq\kappa\leq1$), then 
    $\kappa v_1 + (1-\kappa)v_2 \in P_{\Dset}$.
  \item If $v\in P_{\Dset}$ then $\lambda v\in P_{\Dset}$ for any positive real
    $\lambda$. 
  \item $P_{\Dset}$ is connected.
  \item It is an open set.
  \end{enumerate}
\end{theorem}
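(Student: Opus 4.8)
The plan is to verify each of the four properties directly from the definition of $P_{\Dset}$, since each $D_j$ is a linear functional. I would establish them essentially in the order (b), (a), (c), (d), because connectedness follows easily once convexity is in hand.

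For positive scaling (b): if $v\in P_{\Dset}$, then $D_j(v)>0$ for all $j$, and by linearity $D_j(\lambda v)=\lambda D_j(v)>0$ whenever $\lambda>0$, so $\lambda v\in P_{\Dset}$. For convexity (a): given $v_1,v_2\in P_{\Dset}$ and $0\le\kappa\le1$, linearity gives $D_j(\kappa v_1+(1-\kappa)v_2)=\kappa D_j(v_1)+(1-\kappa)D_j(v_2)$, which is a nonnegative combination of strictly positive numbers with at least one coefficient positive (since $\kappa$ and $1-\kappa$ cannot both vanish), hence strictly positive; so the convex combination lies in $P_{\Dset}$. For connectedness (c): a convex set is path-connected — the straight segment $t\mapsto (1-t)v_1+tv_2$ joins any two points and stays inside by (a) — and path-connected implies connected. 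For openness (d): fix $v\in P_{\Dset}$ and choose a norm on the finite-dimensional space $V$; each $D_j$ is a bounded linear functional, say $|D_j(u)|\le M_j\|u\|$, so if $\|v'-v\|<\min_j D_j(v)/M_j$ (a positive number, since there are finitely many $j$ and each $D_j(v)>0$) then $D_j(v')\ge D_j(v)-M_j\|v'-v\|>0$ for every $j$, exhibiting a ball around $v$ contained in $P_{\Dset}$.

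There is no real obstacle here; the only mild subtleties are bookkeeping ones. In (a) one must note that $\kappa$ and $1-\kappa$ cannot simultaneously be zero, so the combination is genuinely positive and not merely nonnegative; in (d) one must use that $N$ is finite so the minimum over $j$ is attained and strictly positive, and that in finite dimension all linear functionals are automatically continuous (no metric on $V$ is assumed, but any choice of norm suffices since the topology is canonical). The degenerate case $P_{\Dset}=\emptyset$ satisfies all four statements vacuously, so nothing special is needed there.

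If one wants to avoid even mentioning a norm, openness can instead be argued coordinate-freely: the map $v\mapsto(D_1(v),\dots,D_N(v))$ from $V$ to $\mathbb{R}^N$ is linear hence continuous, and $P_{\Dset}$ is the preimage of the open orthant $(0,\infty)^N$, hence open. I would probably present it this way, as it is cleaner and matches the coordinate-independent spirit of the section.
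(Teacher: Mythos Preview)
Your proof is correct and follows essentially the same route as the paper's. The only organizational difference is that the paper first proves the single statement that $\lambda_1 v_1+\lambda_2 v_2\in P_{\Dset}$ whenever $\lambda_1,\lambda_2\ge0$ with at least one positive (i.e., that $P_{\Dset}$ is a convex cone), from which (a) and (b) drop out together; for (d) the paper uses the same minimum-over-$j$ idea you do, just phrased as ``for all small enough $\delta v$'' without naming a norm, while your preimage-of-the-open-orthant alternative is a clean variant that the paper does not mention.
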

\begin{proof}
  Suppose that $v_1,v_2 \in P_{\Dset}$, that $\lambda_1$ and $\lambda_2$ are real
  numbers, that both $\lambda_1,\lambda_2 \geq 0$, and that at least one is strictly
  positive.  Then each $D_j(\lambda_1v_1 + \lambda_2v_2)=\lambda_1D_j(v_1) + \lambda_2D_j(v_2)$ is
  strictly positive, and hence $\lambda_1v_1 + \lambda_2v_2 \in P_{\Dset}$.  (This
  demonstrates that $P_{\Dset}$ is an example of a convex cone in
  mathematical terminology.)

  Properties (a) and (b) immediately follow, and then so does (c) from
  (a). 

  To derive part (d), let $v \in P_{\Dset}$, and let $l = \min_{D_j \in \Dset}
  D_j(v) > 0$. Now let $\delta v$ be another element of $V$. Then
    \begin{equation}
      D_j(v+\delta v) = D_j(v) + D_j(\delta v) \geq l + D_j(\delta v).
    \end{equation}
    For all small enough $\delta v$, we have $|D_j(\delta v)|<l$ for every $D_j\in
    \Dset$, and then $v+\delta v \in P_{\Dset}$.  Hence $P_{\Dset}$ is open.
\end{proof}

Since $P_{\Dset}$ is open, it is a manifold of the same dimension as
$V$, i.e., $d$, provided only that it is non-empty.

\begin{quote}
  \textbf{From now on, we will assume that $P_{\Dset}$ is non-empty,
    unless explicitly stated, and will only reiterate this assumption
    when it seems particularly important.}
\end{quote}

\subsection{Basic properties of the boundary of
  \texorpdfstring{$P_{\Dset}$}{PD}}
\label{sec:bdy.props}

We now consider the boundary $\partial P_{\Dset}$ of $P_{\Dset}$, i.e., the
set of points of $V$ that are limit points both of $P_{\Dset}$ and its
complement $\widehat{P}_{\Dset}$.

\begin{theorem}
\label{thm:bdy.char}
  If $P_{\Dset}$ is non-empty, the boundary of $P_{\Dset}$ is
  characterized by 
  \begin{align}
  \partial P_{\Dset}
  = \bigl\{ v \in V : {}& \forall D_j \in \Dset, D_j(v) \geq 0 
\nonumber\\
           & \mbox{ \rm and } \exists D_j \in \Dset : D_j(v) = 0
    \bigr\}.
  \end{align}
\end{theorem}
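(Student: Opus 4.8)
The plan is to prove the two inclusions separately, using only that $P_{\Dset}$ is open (Thm.\ \ref{thm:P.basic}(d)), that it is non-empty, and that each $D_j$, being linear on the finite-dimensional space $V$, is continuous. Throughout, fix a reference point $u\in P_{\Dset}$, so that $D_j(u)>0$ for every $j$; this $u$ will serve as the ``direction of escape'' into the interior.

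For the inclusion of the right-hand set into $\partial P_{\Dset}$, take any $v$ with $D_j(v)\geq0$ for all $j$ and $D_k(v)=0$ for some $k$. First I would show $v$ is a limit point of $P_{\Dset}$: for small $t>0$ set $v_t=v+tu$, so that $D_j(v_t)=D_j(v)+tD_j(u)\geq tD_j(u)>0$, whence $v_t\in P_{\Dset}$, and $v_t\to v$ as $t\to0+$. Next I would show $v$ is a limit point of $\widehat P_{\Dset}$: for small $s>0$ set $v_s=v-su$, so that $D_k(v_s)=D_k(v)-sD_k(u)=-sD_k(u)<0$, whence $v_s\in\widehat P_{\Dset}$, and $v_s\to v$ as $s\to0+$. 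Hence $v\in\partial P_{\Dset}$.

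For the reverse inclusion, take $v\in\partial P_{\Dset}$. Since $P_{\Dset}$ is open, $v\notin P_{\Dset}$, so by definition of $\widehat P_{\Dset}$ there is some index $k$ with $D_k(v)\leq0$. On the other hand, $v$ is a limit point of $P_{\Dset}$, so there is a sequence $v_n\to v$ with every $D_j(v_n)>0$; passing to the limit and using continuity of each $D_j$ gives $D_j(v)\geq0$ for all $j$. Combining this with $D_k(v)\leq0$ forces $D_k(v)=0$, so $v$ lies in the claimed set. This closes the argument.

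There is no real obstacle here: the statement is the standard description of the boundary of an open convex cone cut out by strict linear inequalities, and the only points requiring care are (i) invoking openness of $P_{\Dset}$ to know boundary points lie outside $P_{\Dset}$, and (ii) using the fixed interior point $u$ to produce, from any candidate boundary point, nearby points on both sides. One should also note explicitly that non-emptiness of $P_{\Dset}$ is essential, since it is what guarantees the existence of $u$; without it the claimed characterization can fail (e.g.\ if some $D_j$ is identically zero).
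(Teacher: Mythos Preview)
Your proof is correct and follows essentially the same approach as the paper's: both use a sequence from $P_{\Dset}$ plus continuity of the $D_j$ to get $D_j(v)\geq 0$, openness of $P_{\Dset}$ to force some $D_j(v)=0$, and a fixed interior point (your $u$, the paper's $\delta v$) to produce nearby points in $P_{\Dset}$. The only cosmetic difference is that for the complement side the paper simply notes $v\in\widehat P_{\Dset}$ (hence trivially a limit point of it), whereas you explicitly build the sequence $v-su$; both are fine.
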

It follows that the boundary is contained in $\widehat{P}_{\Dset}$.

\begin{proof}
Suppose we have a point $v \in \partial P_{\Dset}$.  Then there is a sequence
$v_a$ in $P_{\Dset}$ whose limit is $v$.  So for all $D_j\in \Dset$
\begin{equation}
  D_j(v) = \lim_{a\to\infty} D_j(v_a) \geq 0. 
\end{equation}
If $D_j(v)$ were also nonzero for all $D_j$, then it would be in
$P_{\Dset}$.  Since $P_{\Dset}$ is open, this would imply that $v$ is not
in its boundary.  Hence we must have $D_j(v)=0$ for at least one $D_j$.

Conversely, suppose we have a point $v \in V$ for which all the $D_j(v)$ are
positive or zero, and at least one of which is zero, i.e.,
\begin{equation}
\label{eq:P.bc}
    \forall D_j \in \Dset, D_j(v) \geq 0,
  \mbox{ and }
    \exists D_j \in \Dset : D_j(v) = 0.
\end{equation}
Then choose $\delta v \in P_{\Dset}$.  For every positive real number $\lambda$,
$D_j(v+\lambda\delta v) = D_j(v) + \lambda D_j(\delta v) > 0$, so that $v+\lambda\delta v \in P_{\Dset}$.
Thus $v$ is a limit point of $P_{\Dset}$.  But it is not in $P_{\Dset}$, so
it must be in the complement $\widehat{P}_{\Dset}$. It follows that $v$ is
trivially a limit point of $\widehat{P}_{\Dset}$.
\end{proof}

\begin{theorem}
   (a) The subspace where all the $D_j$s are zero is inside the boundary
   of $P_{\Dset}$.  I.e., $K_{\Dset} \subseteq \partial
   P_{\Dset}$.
   \\
   (b) $\partial P_{\Dset}$ is connected.
\end{theorem}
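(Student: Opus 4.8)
The plan is to derive both parts directly from the explicit description of $\partial P_{\Dset}$ in Thm.~\ref{thm:bdy.char}, using only the linearity of the dual vectors; the content of the statement amounts to the observation that $\partial P_{\Dset}$ is star-shaped about the origin (indeed about every point of $K_{\Dset}$).

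For part~(a): if $v\in K_{\Dset}$, then $D_j(v)=0$ for every $D_j\in\Dset$, so $D_j(v)\ge0$ for all $j$ and $D_j(v)=0$ for at least one $j$ (the list $\Dset$ is non-empty, since $P_{\Dset}\ne V$ when there is at least one constraint). By the characterization in Thm.~\ref{thm:bdy.char} this says exactly that $v\in\partial P_{\Dset}$. Hence $K_{\Dset}\subseteq\partial P_{\Dset}$; in particular $0\in\partial P_{\Dset}$, so the boundary is non-empty.

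For part~(b): I would show that for any $v\in\partial P_{\Dset}$ the whole segment from $v$ to $0$ stays in $\partial P_{\Dset}$, whence $\partial P_{\Dset}$ is star-shaped about $0$ and therefore path-connected, hence connected. Given $v\in\partial P_{\Dset}$, Thm.~\ref{thm:bdy.char} gives $D_j(v)\ge0$ for all $j$ and $D_{j_0}(v)=0$ for some $j_0$. For $t\in[0,1]$, linearity yields $D_j(tv)=t\,D_j(v)\ge0$ for all $j$ and $D_{j_0}(tv)=t\,D_{j_0}(v)=0$; at $t=0$ one still has $D_j(0)=0$ for all $j$, so the existential clause holds there too. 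Thus $tv\in\partial P_{\Dset}$ for every $t\in[0,1]$. The same computation with $0$ replaced by an arbitrary $w\in K_{\Dset}$, using $D_j(tv+(1-t)w)=t\,D_j(v)$, shows that $\partial P_{\Dset}$ is in fact star-shaped about every point of $K_{\Dset}$; this is not needed for the statement but is occasionally a more convenient base point.

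I do not expect any genuine obstacle here: once Thm.~\ref{thm:bdy.char} is available the argument is routine. The only fussy points are the non-emptiness of $\Dset$, required for the ``$\exists D_j$'' clause to be meaningful (automatic under the standing assumption that $P_{\Dset}$ is non-empty together with there being at least one denominator), and the trivial degenerate case $P_{\Dset}=V$, which has empty boundary and is thereby excluded.
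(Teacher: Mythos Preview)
Your proposal is correct and follows essentially the same route as the paper: both parts use the characterization of $\partial P_{\Dset}$ from Thm.~\ref{thm:bdy.char}, and connectedness is obtained by showing $\partial P_{\Dset}$ is star-shaped about the origin via $D_j(tv)=t\,D_j(v)$. Your version is slightly more explicit about the edge cases (non-emptiness of $\Dset$, the $t=0$ endpoint, star-shapedness about all of $K_{\Dset}$), but the argument is the same.
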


\begin{proof}
Every element $k$ of $K_{\Dset}$ obeys $D_j(k)=0$, for all $j$, and is
thus in $\partial P_{\Dset}$, by Thm.\ \ref{thm:bdy.char}.  Hence
$K_{\Dset} \subseteq \partial P_{\Dset}$.

Since the zero vector is in $K_{\Dset}$ it is also in $\partial
P_{\Dset}$.  For any $v$ in $\partial P_{\Dset}$, $\lambda v$ is also
in $\partial P_{\Dset}$ whenever $\lambda \geq 0$.  This gives a line
connecting an arbitrary element of $\partial P_{\Dset}$ to one
particular element, i.e., the zero vector.  Hence $\partial P_{\Dset}$
is connected.
\end{proof}

For our purposes, the interesting parts of $\partial P_{\Dset}$ are those that are
not in $K_{\Dset}$, i.e., where at least one $D_j(v)$ is strictly positive.
Therefore we define
\begin{definition}
  The non-trivial part of the boundary of $P_{\Dset}$ is
  \begin{equation}
    \widetilde{\partial} P_{\Dset} \eqdef \partial P_{\Dset} \setminus K_{\Dset}.
  \end{equation}
\end{definition}
The set $\widetilde{\partial} P_{\Dset}$ may be empty; our later work
shows that this happens if and only if $n_{\Dset}=1$ (or, of course if
$P_{\Dset}$ itself is empty).

From Thm.\ \ref{thm:bdy.char} it follows that the non-trivial part of the
boundary obeys
\begin{align}
  \widetilde{\partial} P_{\Dset}
  = \bigl\{ v \in V : {}& \forall D_j \in \Dset, D_j(v) \geq 0 
\nonumber\\
           & \mbox{ \rm and } \exists D_j \in \Dset : D_j(v) = 0
\nonumber\\
           & \mbox{ \rm and } \exists D_j \in \Dset : D_j(v) > 0
    \bigr\},
\end{align}
i.e., all the $D_j(v)$ are non-negative, at least one is zero, and at
least one is positive. 

\begin{definition}
  Here we define some auxiliary objects at a point $w$ that is in the
  non-trivial part of the boundary, $w \in \widetilde{\partial} P_{\Dset}$.
\begin{enumerate}[(a)]

\item The sets of $D_j$ with zero and non-zero values are:
\begin{subequations}
\label{ref:Zw.hatZw}
\begin{align}
  Z(w) & \eqdef \left\{ D_j \in \Dset : D_j(w) = 0 \right\},
\\
  \widehat{Z}(w) & \eqdef \left\{ D_j \in \Dset : D_j(w) > 0 \right\}.
\end{align}
\end{subequations}
Given $w$, each $D_j$ is in exactly one of these sets, of course.  Both
sets are non-empty when $w$ is in the non-trivial part of the boundary.

\item The minimum non-zero value of the $D_j(w)$s is:
\begin{equation}
  m(w) \eqdef \min_{D_j \in \widehat{Z}(w)} D_j(w) > 0.
\end{equation}

\item Let $K(w)$ be the intersection of the kernels of those $D_j$
that are in $Z(w)$:
\begin{align}
  K(w) & \eqdef \left\{ v \in V : \forall D_j \in Z(w) : D_j(v) = 0 \right\}
\nonumber\\
       & = \cap_{D_j \in Z(w)} \ker(D_j)
\end{align}

\item Let $n(w)$ be the codimension of $K(w)$, so that the dimension
of $K(w)$ is $d-n(w)$.  
\end{enumerate}
\end{definition}
Note that $w$ is one (non-zero) element of the subspace $K(w)$.

Since $P_{\Dset}$ is non-empty, there are vectors $v$ for which $D_j(v)>0$
for all $j$.  It follows that $K(w)$ cannot be the whole of $V$.  Hence
\begin{equation}
  n(w)\geq1 .
\end{equation}

\subsection{Decomposition of the boundary
  of \texorpdfstring{$P_{\Dset}$}{PD}} 
\label{sec:bdy.decomp}

In this section, we show that the boundary of $P_{\Dset}$ can be
decomposed into a hierarchy of disjoint flat segments.  On each of
these one set of $D_j$s is strictly positive and the others are zero. 

First, given $w \in \widetilde{\partial} P_{\Dset}$, we construct the
boundary segment of which it is part.  We define
\begin{align}
   B(w) \eqdef {}& \bigl\{ v \in K(w): \forall D_j \in \hat{Z}(w), D_j(v) > 0 \bigr\}
\nonumber\\
   ={}& \bigl\{ v \in V: \forall D_j \in Z(w), D_j(v) = 0;
\nonumber\\
   &\hspace*{1cm}
    \mbox{and}\ \forall D_j \in \hat{Z}(w), D_j(v) > 0 \bigr\}.
\end{align}
Note that $B(w)$ is a subset of $K(w)$.

The boundary segments have the following elementary properties
\begin{theorem}
\label{thm:B.props}
\begin{enumerate}[(a)]

\item $B(w)$ is convex.
\label{thm:part:B.convex}

\item Whenever $v\in B(w)$, so is $\lambda v$ for positive  $\lambda$.
\label{thm:part:B.scale}

\item $B(w)$ is a flat connected manifold of the same dimension as
$K(w)$, i.e., $d-n(w)$.
\label{thm:part:B.manif}

\item For every point $v\in B(w)$,
\label{thm:part:B.const.ZK}
\begin{equation}
\label{eq:B.ZK}
\begin{split}
    Z(v) = Z(w), \quad
    \hat{Z}(v) = \hat{Z}(w), \\
    K(v) = K(w), \quad
    n(v) = n(w).
\end{split}
\end{equation}

\item When $v \in B(w)$, we have $B(v)=B(w)$.
\label{thm:part:B.const.B}

\item For any $v$ and $w$ in $\widetilde{\partial} P_{\Dset}$,either $B(v)$ and
  $B(w)$ are non-intersecting or they are equal.  It immediately follows
  that the boundary $P_{\Dset}$ is decomposed into a set of disjoint flat
  segments.
\label{thm:part:B.overlap}

\end{enumerate}
\end{theorem}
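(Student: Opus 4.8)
The plan is to establish (a)--(f) in the stated order, since each part feeds the next, with part (d) --- the constancy of $Z,\hat Z,K,n$ along a segment --- as the keystone from which (e) and (f) follow in a line apiece. Parts (a) and (b) I would read straight off the definition: $B(w)$ is the intersection of the linear subspace $K(w)$ with the finitely many open half-spaces $\{v:D_j(v)>0\}$, one for each $D_j\in\hat Z(w)$, and each such half-space (and $K(w)$ itself) is convex and closed under multiplication by positive scalars, properties that survive intersection. I would note here that $w\in B(w)$, since $D_j(w)=0$ for $D_j\in Z(w)$ puts $w\in K(w)$, while $D_j(w)>0$ for $D_j\in\hat Z(w)$ by definition of $\hat Z(w)$; hence $B(w)$ is non-empty. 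For (c), each $D_j$ restricted to $K(w)$ is a continuous linear functional there, so $B(w)$ is a finite intersection of open subsets of $K(w)$, hence open in $K(w)$; a non-empty open subset of the subspace $K(w)$ is a flat connected manifold of dimension $\dim K(w)=d-n(w)$, the connectedness coming from the convexity in (a).

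The work is in (d). Take $v\in B(w)$. For every $D_j\in Z(w)$ we have $v\in K(w)\subseteq\ker D_j$, so $D_j(v)=0$; for every $D_j\in\hat Z(w)$ we have $D_j(v)>0$ by definition of $B(w)$. Since $Z(w)$ and $\hat Z(w)$ partition $\Dset$ and both are non-empty (because $w\in\widetilde{\partial} P_{\Dset}$), it follows that all $D_j(v)\ge0$, that at least one vanishes, and that at least one is strictly positive; by Thm.~\ref{thm:bdy.char} this puts $v$ in $\partial P_{\Dset}$, and then in $\widetilde{\partial} P_{\Dset}$, so $Z(v),\hat Z(v),K(v),n(v)$ are defined. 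The same computation shows $Z(v)=Z(w)$ and $\hat Z(v)=\hat Z(w)$, whence $K(v)=\cap_{D_j\in Z(v)}\ker D_j=\cap_{D_j\in Z(w)}\ker D_j=K(w)$ and $n(v)=n(w)$.

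Part (e) is then immediate: substituting $\hat Z(v)=\hat Z(w)$ and $K(v)=K(w)$ into the definition of $B(v)$ gives $B(v)=B(w)$. For (f), suppose $u\in B(v)\cap B(w)$; applying (e) at $u$ gives $B(u)=B(v)$ and $B(u)=B(w)$, hence $B(v)=B(w)$, so any two segments are equal or disjoint. Since each $w\in\widetilde{\partial} P_{\Dset}$ lies in $B(w)$, and $B(w)\subseteq\widetilde{\partial} P_{\Dset}$ by (d), the segments partition $\widetilde{\partial} P_{\Dset}$; adjoining the flat subspace $K_{\Dset}$ gives a decomposition of all of $\partial P_{\Dset}=\widetilde{\partial} P_{\Dset}\sqcup K_{\Dset}$ into disjoint flat pieces.

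I do not expect a genuine obstacle here: the statement is a chain of elementary linear-algebra and point-set bookkeeping, obvious in low dimensions, so the task is mostly to state each step cleanly in arbitrary dimension. The single point that demands care is the observation in (d) that membership in $B(w)$ forces $v$ back into $\widetilde{\partial} P_{\Dset}$ --- without it the objects $Z(v),K(v),\dots$ used throughout (d)--(f) would not even be defined, and it is precisely this closure property that makes ``boundary segments'' a genuine partition rather than merely a covering.
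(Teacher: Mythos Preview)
Your proposal is correct and follows essentially the same route as the paper's proof: (a)--(b) from convexity and positive-scaling of the defining conditions, (c) from $B(w)$ being open in $K(w)$, (d) by reading off $Z(v)$ and $\hat Z(v)$ directly from the definition of $B(w)$, and (e)--(f) as immediate corollaries. Your treatment is in fact slightly more careful than the paper's in one respect: you explicitly verify in (d) that $v\in B(w)$ forces $v\in\widetilde{\partial} P_{\Dset}$, so that $Z(v),\hat Z(v),K(v),n(v)$ are actually defined, a point the paper leaves implicit.
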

\begin{proof}
Parts (\ref{thm:part:B.convex}) and (\ref{thm:part:B.scale}) follow by the
same method used to prove the corresponding properties for $P_{\Dset}$. 

It immediately follows that $B(w)$ is connected and flat.  As to the
dimension, first note that from its definition, $B(w)$ is contained in the
kernel space $K(w)$, so that its dimension is at most that of $K(w)$, i.e.,
$d-n(w)$.  Furthermore, let $\delta w$ be any element of $K(w)$.  For all small
enough $\delta w$, $w+\delta w$ is in $B(w)$.  This is because when $D_j\in Z(w)$
$Dj(w+\delta w)=D_j(w) + D_j(\delta w)=0$, and because when $D_j\in\hat{Z}(w)$ and $\delta w$
is small enough the value of $D_j(\delta w)$ cannot compensate the positive value
of $D_j(w)$.  Hence the dimension of $B(w)$ is at least $d-n(w)$.  

Property (\ref{thm:part:B.manif}) now follows.

Next suppose $v \in B(w)$.  By the definition of $B(w)$, $D_j(v) = 0$ for
every $D_j$ in $Z(w)$, and $D_j(v) > 0$ for every $D_j$ in $\hat{Z}(w)$.
Hence the set $Z(v)$ is the same as $Z(w)$, since it is the set of $D_j$
for which $D_j$ is zero at $v$.  From this follows all of (\ref{eq:B.ZK}).

It then follows from the definition of $B(v)$ that $B(v)=B(w)$ whenever
$v\in B(w)$, which thereby proves property (\ref{thm:part:B.const.B}).

Now consider $B(v)$ and $B(w)$ for two points $v$ and $w$.  Either
they do not intersect or they intersect.  In the second case, pick $k$
in the intersection.  From the previous result it follows that
$B(k)=B(v)$ and $B(k)=B(w)$, and hence that $B(v)=B(w)$.  This proves
property (\ref{thm:part:B.overlap}).
\end{proof}

Since the sets $Z(v)$, $\hat{Z}(v)$, and $K(v)$, and the number $n(v)$ are
constant on any given boundary segment $B$, we can say that $Z$ etc are
determined by the set of points $B$.  Thus we can write
\begin{align}
  Z(B) &\eqdef \{ D_j : D_j(v) = 0 \mbox{~for every~} v \in B \}
\nonumber\\
       &= Z(v) \mbox{ for every $v \in B$},
\end{align}
and similarly for $\hat{Z}(B)$, $K(B)$ and $n(B)$.

Notice that subspace $K(B)$ contains the common kernel subspace $K_{\Dset}$
of all the $D_j$.  For a non-trivial boundary segment $B$ this implies that
the subspace $K(B)$ is strictly larger than $K_{\Dset}$.  This is because
in this case there are points of $K(B)$ where at least one of $D_j$ is
non-zero; these points cannot be in $K_{\Dset}$.  Hence the codimensions
obey $n(B) \leq n_{\Dset}$, with $n_{\Dset}$ being the codimension of the
smallest (trivial) boundary segment, i.e., the common kernel of all the
$D_j$, and with equality only for the trivial boundary segment.

Observe that each boundary segment $B$ obeys all of the properties of
positive regions, but with respect to $K(B)$ instead of the whole
space $V$, and with respect to $\widehat{Z}(B)$ instead of $\Dset$.
In particular, it is an open and convex set in $K(B)$.  Moreover, the
same arguments as given above for $P_{\Dset}$ show that each $B$
itself has a boundary consisting of boundary segments, which are also
boundary segments of $P_{\Dset}$ itself, with all the associated
properties.

There is in fact a hierarchy of boundary segments, for which it is possible
to prove the following results:
\begin{enumerate}
\item The unique lowest dimension boundary segment is the subspace
  $K_{\Dset}$, of dimension $d-n_{\Dset}$.
\item There are boundary segments of every dimension between the minimum
  dimension $d-n_{\Dset}$ and the maximum dimension $d-1$, inclusive.
\item Each boundary segment of non-maximal dimension is a boundary segment
  of a boundary segment of one dimension higher.  If it has the maximal
  dimension $d-1$, it is a boundary segment only of $P_{\Dset}$ itself.
\item $P_{\Dset}$ and non-minimal boundary segments have one or more
  boundary segments of one dimension lower.
\end{enumerate}
In visualizable examples, the existence of this hierarchy and many of its
properties are quite obvious.  But the general case needs a proof, which is
non-trivial.  For the purposes of this paper, we will not need the whole
collection of properties of the hierarchy, so we will not make all the
proofs.

What we do need are the boundary segments of one dimension higher than the
minimal dimension, whose existence we will prove.  Projected onto a
subspace $V_{\perp \Dset}$ that gives a decomposition of the form in Eq.\
(\ref{eq:V.decomp}), the next-to-minimal boundary segments become line
segments. This leads us to the concept of edge vectors specifying the
directions of the next-to-minimal boundary segments.  The edge vectors play
a critical role in our later analysis.

\subsection{Edge vectors \texorpdfstring{$e_L$}{eL}}

Now we construct what we call the edge vectors $e_L$ of $P_{\Dset}$.  Each
edge vectors has a label $L$, whose meaning will be given below.  There are
two cases (with $P_{\Dset}$ being non-empty, as we are assuming): One is
where the subspace $V_{\perp \Dset}$ in Eq.\ (\ref{eq:V.decomp}) has dimension
$n_{\Dset}=1$ and the other is where it has a higher dimension.

\subsubsection{Case \texorpdfstring{$n_{\Dset}=1$}{n(D)=1}}
\label{sec:edge.1}

First is the case $n_{\Dset}=1$, i.e., that the subspace $V_{\perp \Dset}$
defined in Eq.\ (\ref{eq:V.decomp}) has dimension one. As observed below
that equation, there is a region of $V_{\perp \Dset}$ where all the $D_j$ are
positive.  We choose any vector in this region to be the single edge vector
$e$ for $P_{\Dset}$; no more will be needed.  For every $D_j$, $D_j(e)>0$.

Then every vector $v \in V$ is of the form $v=Ce + k$ for some $k\in K_{\Dset}$
and some real number $C$. Then
\begin{equation}
\label{eq:Dj.nD.1}
  D_j(v) = C D_j(e) + D_j(k) = C D_j(e).
\end{equation}
So the condition that $v \in P_{\Dset}$ is simply that $C>0$.  Then
\begin{equation}
\label{eq:std.decomp.1}
  P_{\Dset} = \left\{ Ce + k : C>0 \mbox{ and } k \in K_{\Dset} \right\}.
\end{equation}

Note that $e$ is non-unique, but only up to a scaling by a positive factor
and the addition of an element of $K_{\Dset}$.  Any single choice of $e$ is
sufficient for our purposes.

From Eq.\ (\ref{eq:Dj.nD.1}) it follows that
$D_j=\frac{D_j(e)}{D_1(e)}D_1$ and hence that all the $D_j$ are
proportional to each other, with positive coefficients.

\subsubsection{Case \texorpdfstring{$n_{\Dset}\geq2$}{nD.ge.2}}
\label{sec:edge.ge.2}

For all the higher co-dimension cases, we will see that $P_{\Dset}$
has non-trivial boundary segments, with lower dimension.  These in
turn have boundary segments, etc.  At each stage of taking boundaries,
one has a strictly lower dimension.

The minimum possible dimension for a non-trivial boundary segment is
$d-n_{\Dset} +1$.  Later, we will prove results about the existence and
properties such next-to-minimal boundary segments.  Here we will simply
provide a definition of corresponding edge vectors, i.e., a vector $e_L$
for each next-to-minimum dimension boundary segment $L$.

Let $L$ be one such boundary segment.  We apply to it the argument of Sec.\
\ref{sec:edge.1} but applied for $L$ with respect to $K(L)$ instead of
$P_{\Dset}$ with respect to $V$, and with the set $\hat{Z}(L)$ instead of
$\Dset$.  We then choose a corresponding vector $e_L$ in the boundary
segment.  A general $v$ in $K(L)$ is $\lambda e_L+k$ where $\lambda$ is real and $k \in
K_{\Dset}$.

We find the conditions for $v$ to be in $L$ as follows: For $D_j \in
Z(L)$, $D_j(e_L)=0$ by the construction of $e_L$, so $D_j(v)=0$.  For
$D_j \in \widehat{Z}(L)$, $D_j(v)=\lambda D_j(e_L)$.  Hence
\begin{equation}
  L = \left\{ \lambda e_L + k : \lambda>0 \mbox{ and } k \in K_{\Dset} \right\}.
\end{equation}

\subsubsection{Overall definition of set of edge vectors}
\label{sec:eL.def}

If $n_{\Dset}\geq2$, we define the set of edge vectors to be all the $e_L$
found in Sec.\ \ref{sec:edge.ge.2} for each boundary segment $L$ that obeys
$n(L)=n_{\Dset}-1$.

If $n_{\Dset}=1$, the set of edge vectors is simply the set consisting of
the one element $e$ constructed in Sec.\ \ref{sec:edge.1}.

The name ``edge vector'' is appropriate when $n_{\Dset} \geq 2$, since each
$e_L$ then corresponds to a projection of boundary segment $L$ onto a line
in $V_{\perp\Dset}$, a projection onto a segment of a line.  But ``edge
vector'' is a bit of a misnomer in the case that $V_{\perp\Dset}$ is
one-dimensional, i.e., $n_{\Dset}=1$.

\subsection{The main decomposition theorem}
\label{ref:decomp.thm}

We are now ready to prove the following theorem:
\begin{theorem}
\label{thm:PD.decomp}
Every element of $v$ of $P_{\Dset}$ can be written in the form
  \begin{equation}
  \label{eq:std.decomp}
     v = \sum_L C_L e_L + v_K,
  \end{equation}
  where all the $C_L$ are positive real numbers, $C_L>0$, and $v_K \in
  K_{\Dset}$, and where the set of $e_L$ is a set of edge vectors, as
  defined in Sec.\ \ref{sec:eL.def}.  Conversely, every $v$ of the form
  Eq.\ (\ref{eq:std.decomp}) with positive $C_L$ is in $P_{\Dset}$.

   Thus $P_{\Dset}$ is exactly the set of vectors of the form
   (\ref{eq:std.decomp}) with the stated restrictions.
\end{theorem}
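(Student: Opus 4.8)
The plan is to prove Theorem~\ref{thm:PD.decomp} by induction on the codimension $n_{\Dset}$, using the boundary-segment hierarchy of Sec.~\ref{sec:bdy.decomp}. The base case $n_{\Dset}=1$ is already done: Eq.~(\ref{eq:std.decomp.1}) is exactly the claimed decomposition, with the single edge vector $e$. For the inductive step, assume $n_{\Dset}\geq2$ and that the theorem (in the form: the positive region of a list of dual vectors, relative to the ambient space and modulo the common kernel, is the cone positively spanned by the edge vectors) holds for all smaller codimensions. The converse direction is the easy half: if $v=\sum_L C_L e_L + v_K$ with all $C_L>0$, then for each $D_j$ we have $D_j(v)=\sum_L C_L D_j(e_L)$; since every $e_L$ lies in the (closed) non-trivial boundary, $D_j(e_L)\geq0$ for all $j$, and since at least one edge vector has $D_j(e_L)>0$ (because the union of the next-to-minimal segments, together with $K_{\Dset}$, must escape any single kernel $\ker D_j$), the sum is strictly positive. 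So $v\in P_{\Dset}$.

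For the forward direction, take $v\in P_{\Dset}$ and project it onto $V_{\perp\Dset}$ using the decomposition (\ref{eq:V.decomp}); write $v=v_\perp + v_K$ with $v_K\in K_{\Dset}$ and $v_\perp\in V_{\perp\Dset}$, noting $D_j(v)=D_j(v_\perp)$. The core geometric claim is then that the open convex cone $P_{\Dset}\cap V_{\perp\Dset}$ in the $n_{\Dset}$-dimensional space $V_{\perp\Dset}$ is positively spanned by the projections of the edge vectors. I would argue this by a ``shoot a ray to the boundary'' construction: starting from $v_\perp$, follow the ray $v_\perp - t\, u$ for a suitably chosen direction $u$ and increasing $t\geq0$ until it first meets $\partial P_{\Dset}$; since $P_{\Dset}$ is an open convex cone not containing a line (its lineality space is exactly $K_{\Dset}$, which the projection kills), such a first exit point $w$ exists and lies in some boundary segment $B$. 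If $n(B)=n_{\Dset}-1$ we are on a next-to-minimal segment and can invoke the $n_{\Dset}=1$-style argument of Sec.~\ref{sec:edge.ge.2}: $w=\mu e_L + k$ with $\mu>0$, $k\in K_{\Dset}$, for the edge vector $e_L$ attached to that segment. If $n(B)>n_{\Dset}-1$, recurse inside $B$ (which is itself a positive region relative to $K(B)$ with strictly smaller codimension data, so the inductive hypothesis applies) to write $w$ as a positive combination of edge vectors of $B$; these are themselves edge vectors of $P_{\Dset}$, or can be further reduced to the $n(L)=n_{\Dset}-1$ ones, by the transitivity of the boundary-segment relation. Either way, $v_\perp = w + t\,u$, and one must arrange that $u$ itself (or $t\,u$) is a non-negative combination of edge vectors, so that adding it back keeps all coefficients positive; choosing $u$ to point from $w$ back towards any fixed reference point deep inside $P_{\Dset}$, and writing that reference point in edge-vector coordinates, handles this.

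The main obstacle I expect is precisely the bookkeeping that turns ``$v_\perp$ lies on the ray from some boundary point in some interior direction'' into an honest \emph{positive} combination of the \emph{next-to-minimal} edge vectors $e_L$ with $n(L)=n_{\Dset}-1$, rather than of edge vectors of intermediate-dimensional segments. This requires (i) establishing that the next-to-minimal boundary segments actually exist and that every higher-dimensional boundary segment has one of them on its own boundary --- i.e., the hierarchy properties listed but not yet proved after Thm.~\ref{thm:B.props} --- and (ii) checking that the recursion terminates with the right generators. I would handle (i) by a short separate lemma: inside any boundary segment $B$ with $n(B)\geq n_{\Dset}$... I mean, with $n(B)\ge 2$ relative to its kernel data, the argument of Thm.~\ref{thm:P.basic} applied within $K(B)$ shows $B$ has non-trivial boundary, and iterating drops the codimension by exactly one each time until the minimal segment $K_{\Dset}$ is reached, so segments of every intermediate codimension, in particular $n_{\Dset}-1$, occur. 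For (ii), I would show that if $w$ lies in a segment $B$ with $n(B)=r<n_{\Dset}-1$, then $B$'s own positive region (relative to $K(B)$) has codimension $n_{\Dset}-r\geq2$, so by induction $w$ is a positive combination of $B$'s next-to-minimal edge vectors, which have codimension $n_{\Dset}-1$ in $V$ and hence are among our $e_L$; this closes the loop. The convexity, openness, and cone properties from Thm.~\ref{thm:P.basic} and Thm.~\ref{thm:B.props} are what make the ray construction and the recursion well-defined, so no genuinely new tools are needed --- only care that nothing in the hierarchy is assumed without proof.
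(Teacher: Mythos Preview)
Your overall architecture --- recurse through boundary segments until you reach the next-to-minimal ones where the edge vectors live --- matches the paper's, and your converse is the paper's argument. The gap is in your forward-direction mechanism. Writing $v_\perp = w + t\,u$ with $w$ on the boundary helps only if $t\,u$ is itself a positive combination of edge vectors. Your fix, choosing $u$ to point toward a fixed reference $p_0\in P_{\Dset}$ ``written in edge-vector coordinates'', is circular: knowing $p_0$'s edge-vector decomposition \emph{is} the theorem, applied to $p_0$. Taking $p_0=\sum_L e_L$ does not escape this, since showing $p_0\in P_{\Dset}$ needs that for every $j$ some $D_j(e_L)>0$; your justification (``the union of the next-to-minimal segments must escape any single kernel'') is not established independently --- in the paper that fact is a \emph{corollary} of the forward direction, invoked only afterwards for the converse. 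Your obstacle (i) is also not closed: Thm.~\ref{thm:P.basic} gives convexity and openness, not the existence of a \emph{non-trivial} boundary point, and ``iterating drops the codimension by exactly one'' is false in general (several $D_j$ can vanish simultaneously at the first exit). So before your recursion can even start you do not know that next-to-minimal segments, and hence edge vectors, exist.

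The paper sidesteps all of this with a two-point construction in a plane. Given $v\in P_{\Dset}$ and $n_{\Dset}\ge 2$, it picks two independent $D_a,D_b\in\Dset$, perturbs $v$ to a nearby $w\in P_{\Dset}$ along $\ker D_b$ with $D_a(w)>D_a(v)$, and sweeps the arc $u(\theta)=v\cos\theta+w\sin\theta$ in the 2-plane spanned by $v$ and $w$. The arc exits $P_{\Dset}$ at two points $v_1=u(-\alpha)$ and $v_2=u(\pi-\beta)$ on the \emph{non-trivial} boundary; the specific choice of $D_a,D_b$ forces $0<\alpha<\pi/4\le\beta<\pi/2$, so $v_1,v_2$ are linearly independent and $v$ is a \emph{positive} combination of them (Eq.~(\ref{eq:v.v1.v2})). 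Both $v_1$ and $v_2$ live in boundary segments of strictly lower dimension, and one recurses on each. No reference point is needed, the existence of non-trivial boundary segments at every level down to the next-to-minimal one falls out of the construction itself, and termination is automatic because the dimension strictly drops at every step.
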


Before proving the theorem, we make the following comments:
\begin{itemize}
\item The values $C_L$ need not be unique, since it may happen that
  the number of $e_L$s is larger than the dimension $n_{\Dset}$ of
  $V_{\perp \Dset}$. In that case, the $e_L$s are over-complete as a
  spanning set.  If we removed the extra $e_L$s compared with those
  needed to make a basis for $V_{\perp \Dset}$, we could still express $v$
  in the form (\ref{eq:std.decomp}), but some of the coefficients
  might need to be negative for some values of $v$.
\item The edge vectors $e_L$ are not actually in $P_{\Dset}$ except in the
  almost trivial case that $n_{\Dset}=1$.  In other cases, they are always
  on the boundary of $P_{\Dset}$, as we saw.
\end{itemize}

\subsubsection{Examples}
Before treating the general case, we examine examples with effective
dimension one and two, i.e., $n_{\Dset}=1$ and $n_{\Dset}=2$. Then the
derivation of the corresponding specializations of the theorem will be
elementary.  The trick for the general case is to find a way of
successively reducing the dimension of the problem by repeated application
of the two-dimensional version.

In setting up the examples in a fairly general context, it is useful
to recall the following theorem of linear algebra:
\begin{quote}
  Let $\mathcal{E}=(E_1,\dots,E_A)$ be dual vectors on a vector space
  $V$, and let $K_{\mathcal{E}}$ be the intersection of their kernels,
  as defined earlier.  Let $F$ be another dual vector.  Then $F$ is a
  linear combination of $E_1,\dots,E_A$ if and only if the kernel of
  $F$ contains $K_{\mathcal{E}}$, i.e., $\ker F \supseteq K_{\mathcal{E}}$.
\end{quote}

The example of $n_{\Dset}=1$ was already treated in Sec.\ \ref{sec:edge.1}.
Observe that the common kernel $K_{\Dset}$ of the $D_j$ has its maximum
possible dimension $d-1$, and is equal to the kernel of every $D_j$, and
that all the $D_j$ are all proportional to each other (with positive
coefficients so that $P_{\Dset}$ is non-empty).  We constructed an instance
of the single edge vector needed for the problem, and obtained the
decomposition Eq.\ (\ref{eq:std.decomp.1}).  Positivity constraints can be
obtained by examining values of $D_j$ on the space $V_{\perp \Dset}$, and the
results visualized because it is one-dimensional, as in Fig.\
\ref{fig:PD.1}.

\begin{figure}
  \centering
  \includegraphics[scale=0.7]{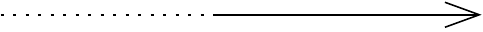}
  \caption{Positivity constraint in $V_{\perp \Dset}$ for the case that it is
    one-dimensional, i.e., $n_{\Dset}=1$.  that $n_{\Dset}=1$.  All the
    $D_j$ are necessarily proportional. The solid line is where $D_j(v)>0$,
    i.e., it is $P_{\Dset}$ projected onto $V_{\perp \Dset}$.  The dotted line
    is where $D_j(v)\leq0$.}
  \label{fig:PD.1}
\end{figure}

In the case $n_{\Dset}=2$, $V_{\perp \Dset}$ is a two-dimensional space
illustrated in Fig.\ \ref{fig:PD.2}.  Each of the $D_j$ has a positive
space delimited by its kernel.  Let us parameterize vectors in $V_{\perp
  \Dset}$ by polar coordinates $(r,\theta)$ with respect to some axes.  Then the
positive region for each $D_j$ is a range $r>0$ with $\theta$ in a continuous
range of size $\pi$.  The kernel of each $D_j$ is a line of fixed $\theta$. The
common positive region is of the form $\alpha<\theta<\beta$, where $0<\beta-\alpha<\pi$.  The most
limiting directions are given by two distinct $D_j$ whose kernels are the
lines of angles $\alpha$ and $\beta$; we use vectors in these directions for the
edge vectors $e_L$, and it is evident that the common positive region
$P_{\Dset}$ is the set of all linear combinations of the two $e_L$ with
positive coefficients.  In polar coordinates, the edge vectors can be
chosen as unit vectors with angles $\alpha$ and $\beta$; they are linearly
independent because $0<\beta-\alpha<\pi$.

\begin{figure}
  \centering
  \includegraphics[scale=0.7]{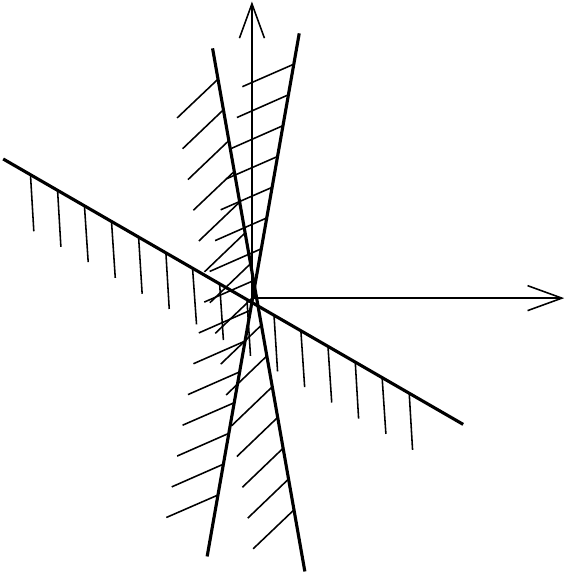}
  \caption{Positivity constraints in $V_{\perp \Dset}$ for a case where it is
    two-dimensional, i.e., $n_{\Dset}=2$.  The diagram depicts the case
    that there are three different $D_j$s involved.  The diagonal lines are
    the locations of the kernels of the $D_j$, and the shaded parts point
    to the negative regions of the $D_j$.}
  \label{fig:PD.2}
\end{figure}

If we had made a mistake in stating the situation, and in fact all the
$D_j$ were proportional to each other, then all the kernels would lie on
top of each other, and we would get the situation shown in Fig.\
\ref{fig:PD.2-1}.  Then the positive range is $\alpha<\theta<\beta$, but now with
$\beta-\alpha=\pi$, so that the would-be $e_L$ vectors from Fig.\ \ref{fig:PD.2}, at
angles $\alpha$ and $\beta$, are exactly opposite to each other, and are therefore
linearly dependent.  These now span one dimension of the kernel space
instead of the positive manifold.  The kernel space has its dimension
increased by one, and correspondingly $V_{\perp \Dset}$ has its dimension
reduced by one. To get an exemplar of the single edge vector that is
needed, we choose a vector pointing in a direction intermediate between
angles $\alpha$ and $\beta$.  To get the results in terms of $V_{\perp \Dset}$, we
simply project onto a one-dimensional space in the direction of the edge
vector, after which we recover a version of Fig.\ \ref{fig:PD.1}.

\begin{figure}
  \centering
  \includegraphics[scale=0.7]{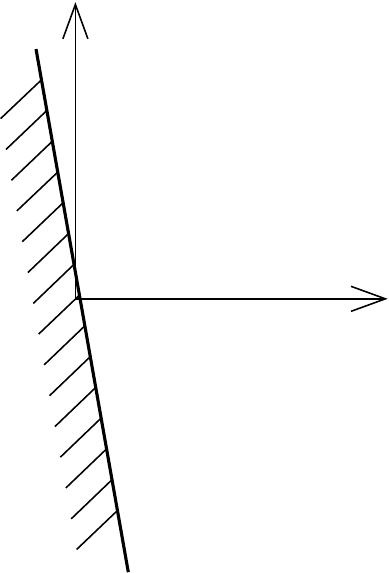}
  \caption{Like Fig.\ \ref{fig:PD.2}, but for the case that all the
    $D_j$ are linearly dependent. The diagram now depicts the space $V_{\perp
      \Dset}$ plus one dimension of the kernel space $K_{\perp \Dset}$.}
  \label{fig:PD.2-1}
\end{figure}

\subsubsection{General case}

If $n_{\Dset}=1$, we already proved the appropriate specialization of
Thm.\ \ref{thm:PD.decomp} in Sec.\ \ref{sec:edge.1}, with $L$ having
one value and the associated $e_L$ being the $e$ of that section.

We now provide a method to deal with all the remaining cases $n_{\Dset}\geq2$
(including the already treated case of $n_{\Dset}=2$).  Necessarily, at
least two of the $D_j$ are linearly independent.  Otherwise all of them
would be proportional to each other (with positive coefficients to allow
$P_{\Dset} \neq \emptyset$), and then the positive space is the positive space for one
$D_j$, so that we get $n_{\Dset}=1$.

Let $v$ be any vector in $P_{\Dset}$.  Then to prove that it is of the
form Eq.\ (\ref{eq:std.decomp}), we adopt the following recursive
strategy
\begin{enumerate}
\item Construct an expression for $v$ as a linear combination of two
  vectors on non-trivial boundary segments.  This we will do quite easily,
  by a simple generalization of the two-dimensional case that was
  illustrated in Fig.\ \ref{fig:PD.2}.
\item For each of these vectors:
  \begin{enumerate}
  \item Either its boundary segment is of the lowest possible dimension for
    a non-trivial boundary segment, i.e., $d-n_{\Dset}+1$, and we can write
    the vector as a positive coefficient times the chosen edge vector for
    the segment, plus a contribution from a vector in the kernel
    $K_{\Dset}$.
  \item Or the boundary segment has a higher dimension, in which case we
    repeat the procedure to express the vector in terms of vectors on
    non-trivial boundary segments of yet lower dimension.
  \end{enumerate}
\item All of this terminates when we get to the lowest dimension
  non-trivial boundary segments.  This gives the desired expansion.
\end{enumerate}

To implement this strategy, given a vector $v$ in $P_{\Dset}$, we first
pick two independent $D_j$ in $\Dset$, and call them $D_a$ and $D_b$.  Then
pick any vector $\delta v$ in the kernel of $D_b$ such that $D_a(\delta v)>0$, and
make it small enough that $v+\delta v$ is still in $P_{\Dset}$.  Then let $w=v+\delta
v$.  The geometry of this situation in the two-dimensional space spanned by
$v$ and $w$ is shown in Fig.\ \ref{fig:P-vw}.  The vectors $v$ and $w$ are
linearly independent, so that they do in fact span a two-dimensional
space.  

\begin{figure}
  \centering
  \includegraphics[scale=0.8]{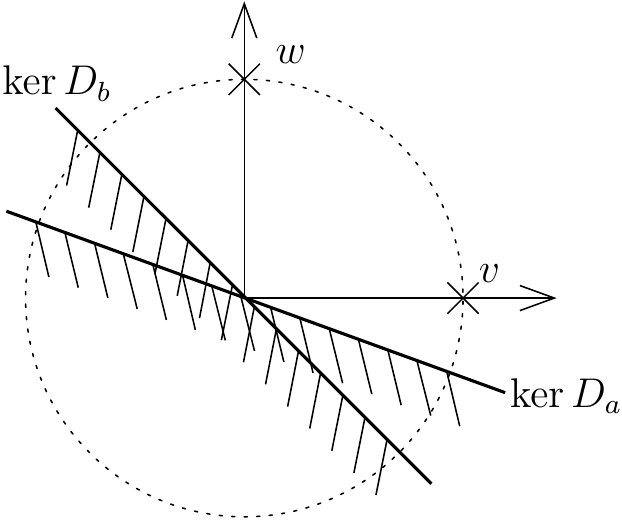}
  \caption{The dotted line is the circle explored to express $v$ in terms
    of boundary vectors, defined to be where the circle first hits the
    kernel of a $D_j$.  Here are seen the intersections of $\ker D_a$ and
    $\ker D_b$ with the two dimensional space spanned by $v$ and $w$.
    \emph{Note that there is not necessarily any metric specified on the
      space $V$, and even if there were there would be no guaranteed
      constraint on the angle between $v$ and $w$.  Nevertheless, it is
      always possible to change the coordinate system by applying a linear
      transformation.  One can do this to go from a situation where $v$ and
      $w$ are in general directions to one where they are drawn at right
      angles, as is the case here.  With this choice of coordinates, the
      loop of vectors in Eq.\ (\ref{eq:loop}) becomes a circle.}}
  \label{fig:P-vw}
\end{figure}

Now $D_j(v)$ and $D_j(w)$ are positive, for all $j$ including $j=a$
and $j=b$, and in addition
\begin{align}
  D_a(w) & = D_a(v) + D_a(\delta v) > D_a(v),
\\
  D_b(w) & = D_b(v) + D_b(\delta v) = D_b(v).
\end{align}
Let $r = D_a(\delta v)/D_a(v) > 0$, so that $D_a(w) = (1+r) D_a(v)$.
Then consider the following loop of vectors in the plane of $v$ and
$w$, parameterized by an angle $\theta$:
\begin{equation}
\label{eq:loop}
  u(\theta) \eqdef v \cos \theta + w \sin \theta,
\end{equation}
on which for a general $D_j$
\begin{equation}
  D_j(u(\theta)) = D_j(v) \cos \theta + D_j(w) \sin \theta.  
\end{equation}
Since both of $D_j(v)$ and $D_j(w)$ are positive, $D_j(u(\theta))$ is
positive in the range $0\leq\theta\leq\pi/2$, and also somewhat beyond
this range. Now define
\begin{equation}
  \theta_j \eqdef \arctan \frac{D_j(v)}{D_j(w)},
\end{equation}
which is in the range $0<\theta_j<\pi/2$.  The zeros of
$D_j(u(\theta))$ are at $\theta=-\theta_j$ and $\theta=\pi-\theta_j$,
so that $D_j(u(\theta))$ is positive when
$-\theta_j<\theta<\pi-\theta_j$.

For the specific cases of $D_a$ and $D_b$
\begin{align}
  D_a(u(\theta)) & = D_a(v) \left[ \cos \theta + (1+r) \sin \theta \right],
\\
  D_b(u(\theta)) & = D_b(v) \left[ \cos \theta + \sin \theta \right],
\end{align}
so that
\begin{equation}
\label{eq:theta.a.b}
  \theta_a = \arctan \frac{1}{1+r} < \frac{\pi}{4},
\qquad
  \theta_b = \frac{\pi}{4}.
\end{equation}

Now define the minimum and maximum values of the $\theta_j$:
\begin{equation}
  \alpha \eqdef \min_j \theta_j,
\qquad
  \beta \eqdef \max_j \theta_j.
\end{equation}
Then for $-\alpha<\theta<\pi-\beta$, all $D_j(u(\theta))$ are
positive, so $u(\theta)\in P_{\Dset}$.  But at each of
$\theta=-\alpha$ and $\theta=\pi-\beta$, at least one $D_j(u(\theta))$
is zero, so that $u(-\alpha)$ and $u(\pi-\beta)$ are on the boundary
of $P_{\Dset}$.  They are in fact on the non-trivial part of the
boundary of $P_{\Dset}$ and are linearly independent.  To see this, we
first observe that from Eq.\ (\ref{eq:theta.a.b}) and from
$0<\theta_j<\pi/2$, it follows that $0<\alpha\leq\theta_a<\pi/4$, while
$\pi/2>\beta\geq\theta_b=\pi/4$.  It follows that $D_b(u(-\alpha))$
and $D_a(u((\pi-\beta))$ are both positive, which puts the vectors
$u(-\alpha)$ and $u(\pi-\beta)$ on the non-trivial part of the
boundary of $P_{\Dset}$, where at least one $D_j$ is positive.
Furthermore, from the same bounds, it follows that $-\alpha$ and
$\pi-\beta$ are not opposite angles, and hence that $u(-\alpha)$ and
$u(\pi-\beta)$ are linearly independent.  See Fig.\ \ref{fig:P-vw-1}
for an illustration of how another $D_j$ can impose a more restrictive
bound on where $u_j(\theta) \in P_{\Dset}$ than is given by $D_a$ and
$D_b$ alone.

Since $u(-\alpha)$ and $u(\pi-\beta)$ are on the non-trivial part of
the boundary of $P_{\Dset}$, we have incidentally proved that for the
case we are treating, $n_{\Dset}\geq2$, there are in fact non-trivial
boundary segments.

\begin{figure}
  \centering
  \includegraphics[scale=0.8]{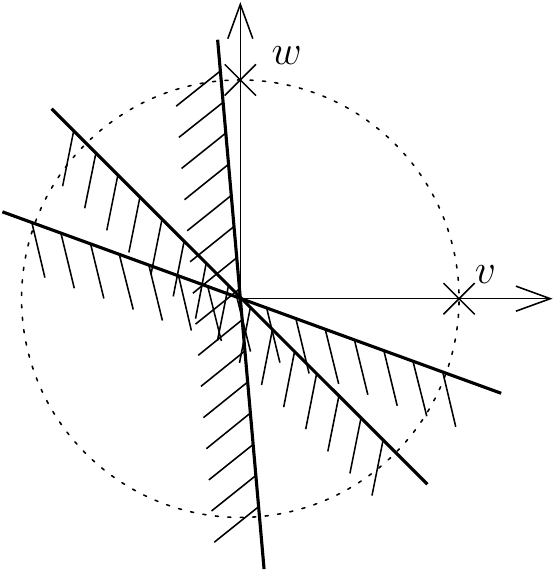}
  \caption{The same as Fig.\ \ref{fig:P-vw}, except that the position of
    the kernel of another $D_j$ is shown, in a situation where it provides
    a more restrictive region of positive $D_j(\theta)$ than is given by $D_a$
    and $D_b$ alone.}
  \label{fig:P-vw-1}
\end{figure}

We can now express $v$ in terms of non-trivial boundary vectors:
\begin{equation}
\label{eq:v.v1.v2}
  v = v_1 \frac{ \sin\beta }{ \sin(\beta-\alpha) }
      + v_2 \frac{ \sin\alpha }{ \sin(\beta-\alpha) }
\end{equation}
where
\begin{equation}
  v_1 = u(-\alpha), 
\qquad
  v_2 = u(\pi-\beta).
\end{equation}
The coefficients in Eq.\ (\ref{eq:v.v1.v2}) are positive, so we have
accomplished our aim of expressing $v$ in terms of vectors on the
non-trivial part of boundary of $P_{\Dset}$ with positive coefficients.
Let the boundary segments in which $v_1$ and $v_2$ lie be $B_1$ and $B_2$

First consider the case that $v_1$'s (non-trivial) boundary segment has the
minimum dimension $d-n(B_1)=d-n_{\Dset}+1$. Then there is an edge vector
for that segment, as defined in Sec.\ \ref{sec:edge.ge.2}, and $v_1$ can be
expressed in terms of the edge vector, with a positive coefficient, plus an
element of the common kernel $K_{\Dset}$.

The other case is that $v_1$'s boundary segment $B_1$ is of higher
dimension.  Then we apply the whole argument of this section to $v_1$, but
now instead of ${\Dset}$ and $V$, we apply the argument to the dual vectors
$\widehat{Z}(B_1)$ that are non-zero at $v_1$ and work in the space
$K(B_1)$.  The argument needs to be extended only by the observation that
all the vectors involved give zero for any $D_j \in Z(B_1)$, i.e., for any
$D_j$ that is zero at $v_1$ and hence on $B_1$.

The result is to express $v_1$ in terms of vectors in
yet lower dimension boundary segments.

The same argument applies equally to $v_2$.

Iterating the argument eventually stops when all the vectors obtained are
proportional to edge vectors (plus elements of $K_{\Dset}$), with positive
coefficients.  Thus any element $v\in P_{\Dset}$ is a linear combination of
edge vectors with positive coefficients, plus a vector in $K_{\Dset}$).

Hence any vector in $P_{\Dset}$ is of the form (\ref{eq:std.decomp}) given
in the statement of Thm.\ \ref{thm:PD.decomp}.  

To complete the proof of Thm.\ \ref{thm:PD.decomp}, we need to show that
any vector of the form (\ref{eq:std.decomp}) is in the positive manifold
$P_{\Dset}$, as opposed to being in its boundary.  So let $v$ be any vector
of the form (\ref{eq:std.decomp}).  For each $D_j$, at least one $D_j(e_L)$
is positive, and so $D_j(v)$ is strictly positive.  Hence $v\in P_{\Dset}$.

\section{The Landau theorem for dual vectors}
\label{sec:landau.proof}

Now we come to the already-stated Thm.\ \ref{thm:main.geom} relating the Landau
condition to the non-existence of good directions, i.e., to the
non-existence of a $v$ for which all of $D_j(v)$ are positive.  To prove
the theorem, we consider the cases that there is and that there is not a
good direction.

We already saw in Sec.\ \ref{sec:elem.parts} that if there is a good
direction, then there can be no Landau point.  It remains to show that if
there is no good direction, then a Landau point exists.  Given that there
fails to be a good direction for $(D_1,\dots,D_N)$, we will construct a set
of $\lambda_j$s that instantiates a Landau point.

It might be that one or more of the $D_j$s is zero.  In that case, let
$D_{j_0}$ be one of the zero dual vectors.  Then set $\lambda_{j_0}=1$
and set the remaining $\lambda_j$ to zero, and we have a Landau point.

So we only need further to consider the case that every $D_j$ is
non-zero.

Consider the following subsets of $D_j$s, where we start with $D_1$,
and successively add an extra $D_j$: $S_1=(D_1)$, $S_2=(D_1,D_2)$,
\dots, $S_N=(D_1,\dots,D_N)$.  Since $D_1\neq 0$, we can find a vector
$v\in V$ with $D_1(v)=1$, and so there exists a good direction
for $S_1$.  But by hypothesis there is no good direction for
$S_N$.  Therefore there is a last one in this sequence, $S_{n_0}$, for
which there is a good direction; for the next set, $S_{n_0+1}$,
there is no good direction.

In the following, two different vector spaces come into play.  One is the
space $V$ on which the $D_j$s act, with an important role played by its
submanifold where all the $D_j$s are positive.  The other space is a space
$\Lset$ of the coefficients $\3\lambda$ used in linear combinations of the form
$\sum_{j=1}^{n_0} \lambda_j D_j + D_{n_0+1}$, with its definition in Eq.\
(\ref{eq:Lset.def}) below.

\subsection{The positive hyperplane \texorpdfstring{$P$}{P}}

Let $P$ be the set of good directions for $S_{n_0}$, i.e., $P$ is
the positive space for the corresponding $D_j$s:
\begin{equation}
  P = P_{S_{n_0}}
    = \left\{
        v \in V :
        D_j(v)>0 \mbox{ whenever $1\leq j \leq n_0$}
      \right\}.
\end{equation}
Then the lack of a good direction for $S_{n_0+1}$ immediately
shows that $D_{n_0+1}(v) \leq 0$ for every $v \in P$.  In fact, strict
inequality holds:
\begin{lemma}
  \begin{equation}
    \label{eq:n0+1.P}
    D_{n_0+1}(v) < 0 \mbox{ for every $v \in P$}.
  \end{equation}
\end{lemma}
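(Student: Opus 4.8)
The plan is to start from the weak inequality $D_{n_0+1}(v)\le0$ for all $v\in P$, which is already forced by the choice of $n_0$ (any $v\in P$ with $D_{n_0+1}(v)>0$ would be a good direction for $S_{n_0+1}$, contrary to its construction), and then to rule out equality. So the only thing left to prove is that $D_{n_0+1}$ cannot vanish at any point of $P$.

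First I would argue by contradiction: suppose $D_{n_0+1}(v_0)=0$ for some $v_0\in P$. The key input is that $P$ is open, which is Thm.\ \ref{thm:P.basic}(d), so some neighborhood of $v_0$ lies inside $P$. The second key input is that we are in the case, established immediately before the lemma, in which every $D_j$ is non-zero; in particular $D_{n_0+1}\neq0$. Hence there is a vector $u\in V$ with $D_{n_0+1}(u)\neq0$, and after replacing $u$ by $-u$ if necessary we may assume $D_{n_0+1}(u)>0$. Then for all sufficiently small $t>0$ the point $v_0+tu$ lies in $P$, yet by linearity $D_{n_0+1}(v_0+tu)=t\,D_{n_0+1}(u)>0$, contradicting the weak inequality on $P$. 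Therefore $D_{n_0+1}$ is nowhere zero on $P$, and combined with $D_{n_0+1}\le0$ this yields the strict inequality (\ref{eq:n0+1.P}).

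I do not anticipate a genuine obstacle here: the entire content is the elementary fact that a nonzero linear functional cannot have constant sign, even non-strictly, on an open set, combined with the openness of $P$. The one point that must not be overlooked is the appeal to the prior reduction to $D_{n_0+1}\neq0$; without that hypothesis the lemma is false, since a vanishing $D_{n_0+1}$ would furnish a Landau point directly, which is exactly why that case was disposed of beforehand.
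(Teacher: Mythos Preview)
Your proof is correct and is essentially identical to the paper's own argument: both use the openness of $P$ (Thm.\ \ref{thm:P.basic}(d)) together with $D_{n_0+1}\neq0$ to perturb a hypothetical zero of $D_{n_0+1}$ inside $P$ to a point where it is strictly positive, contradicting the defining property of $n_0$. The only cosmetic difference is that the paper picks $w$ with $D_{n_0+1}(w)=1$ directly, whereas you pick $u$ with $D_{n_0+1}(u)\neq0$ and flip the sign if needed.
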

\begin{proof}
  We use the fact, following from Thm.\ \ref{thm:P.basic}, that $P$ is an
  open set.  Suppose that the strict inequality did not hold.  Then there
  would be a $v \in P$ for which $D_{n_0+1}(v) = 0$.  Since $D_{n_0+1}$ is
  non-zero, we can find a $w \in V$ for which $D_{n_0+1}(w)=1$.  Then for
  every $\kappa > 0$, $D_{n_0+1}(v+\kappa w) = \kappa >0$.  Since $P$ is an open set, $v+\kappa
  w \in P$ for all small enough $\kappa$, and we would therefore find a vector in
  $P$ on which $D_{n_0+1}$ is positive.  That is, we would find a good
  direction for the set $S_{n_0+1}$.  This is contrary to hypothesis, so we
  need the strict inequality (\ref{eq:n0+1.P}).
\end{proof}

\subsection{Kernels of \texorpdfstring{$D_j$ $(0\leq j \leq
    n_0+1)$}{Dj}}
\label{sec:ker.Dj}

Next, let $K$ be the intersection of the kernels of $D_1$, \dots,
$D_{n_0}$:
\begin{equation}
\label{eq:K.def}
  K \eqdef \left\{ v \in V : D_j(v) = 0
              \mbox{ for $1 \leq j \leq n_0$}
      \right\}.
\end{equation}
It is a vector subspace of $V$.  

Now the other $D_j$ we consider, i.e., $D_{n_0+1}$, is also zero on
$K$.  To see this, suppose otherwise, and we will prove a
contradiction.  Thus, suppose that there is a $u \in K$ such that
$D_{n_0+1}(u) \neq 0$.  By scaling $u$, we can arrange
$D_{n_0+1}(u)=1$, while maintaining $D_j(u) = 0$ for the other $D_j$.
Pick any $v \in P$, so that $D_j(v)>0$ for every $1 \leq j \leq n_0$.
Then for every positive real number $\kappa>0$
\begin{equation}
  D_{n_0+1}(\kappa u + v) = \kappa + D_{n_0+1}(v),
\end{equation}
while
\begin{multline}
  D_j(\kappa u + v) = \kappa D_j(u) + D_j(v) = D_j(v) > 0
\\
  \mbox{for $1 \leq j \leq n_0$}.
\end{multline}
It follows that $\kappa u+v$ is also in $P$.  But by choosing $\kappa$
large enough, we can make $D_{n_0+1}(\kappa u + v)$ positive, which
would give us a good direction for the set $S_{n_0+1}$.  We only
avoid this by having $D_{n_0+1}(u)=0$ for every element $u\in K$.

Thus the kernel of $D_{n_0+1}$ contains the intersection of the kernel of
the other $D_j$s.  It follows, by a standard theorem of linear algebra,
that $D_{n_0+1}$ is a linear combination of the other $D_j$s.  But we do
not need to use this.  In fact, we will prove a stronger result that a
linear combination can be found where all the coefficients are negative or
zero.

\subsection{Spanning vectors of \texorpdfstring{$P$}{P}}

We now recall results from Sec.\ \ref{sec:pos.sets}, but applied with
$\Dset$ set equal to $S_{n_0}=(D_1,\dots,D_{n_0})$ instead of the original
set of dual vectors.  The space $V$ can be decomposed as a direct sum $V=K\oplus
V_\perp$.  Then there is a set of non-zero edge vectors $e_L$ that give
one-dimensional edges for $P\cap V_\perp$, and the general form for a vector $v \in
P$ is
\begin{equation}
  \label{eq:P.decomp}
  v = k + \sum_L C_L e_L,
\end{equation}
where $k \in K$ and all the real-valued coefficients $C_L$ are
strictly positive: $C_L>0$.  The vectors $e_L$ span $V_\perp$, but
they could be an over-complete set; the extra elements are needed to
maintain the positivity property on the $C_L$s for every $v \in P$.

All the edge vectors obey $D_je_L \geq 0$ for $1 \leq j \leq n_0$ and any $L$.  For
every $j$ in the range $1\leq j \leq n_0$, there is at least one value of $L$ for
which $D_je_L$ is strictly positive.  Similarly for every $L$ there is at
least one value of $j$ in the range $1\leq j \leq n_0$ for which $D_je_L$ is
strictly positive.

From the properties that $D_{n_0+1}(v) < 0$ for every $v \in P$ and that
$D_{n_0+1}(v) = 0$ for every $v \in K$, it follows that
\begin{lemma}
  \begin{equation}
    \mbox{For all $L$}, D_{n_0+1}(e_L) \leq 0,
  \end{equation}
  and at least one $D_{n_0+1}(e_L)$ is strictly negative.
\end{lemma}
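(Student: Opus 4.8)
The plan is to exploit the two facts established just before the lemma: $D_{n_0+1}(v)<0$ for every $v\in P$, and $D_{n_0+1}(v)=0$ for every $v\in K$. Each edge vector $e_L$ lies on the non-trivial part of the boundary $\widetilde\partial P$ of $P$ (or, in the degenerate case $n_{S_{n_0}}=1$, it can be taken as an interior point of $P\cap V_\perp$; I will handle that case separately). The first step is to show $D_{n_0+1}(e_L)\leq0$ for every $L$. Since $e_L\in\partial P$, there is a sequence $v_a\in P$ with $v_a\to e_L$; then $D_{n_0+1}(e_L)=\lim_a D_{n_0+1}(v_a)\leq 0$ because each term is strictly negative and $D_{n_0+1}$ is continuous. (In the case $n_{S_{n_0}}=1$ the single edge vector $e$ is itself in $P$, so $D_{n_0+1}(e)<0$ directly, which already gives both parts of the lemma; so from here I assume $n_{S_{n_0}}\geq2$.)

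The second step is to show that at least one $D_{n_0+1}(e_L)$ is strictly negative. Here I would argue by contradiction: suppose $D_{n_0+1}(e_L)=0$ for all $L$. Take any $v\in P$; by Thm.\ \ref{thm:PD.decomp} (applied with $\Dset=S_{n_0}$) we can write $v=k+\sum_L C_L e_L$ with $k\in K$ and all $C_L>0$. Then by linearity
\begin{equation}
  D_{n_0+1}(v) = D_{n_0+1}(k) + \sum_L C_L\, D_{n_0+1}(e_L) = 0 + 0 = 0,
\end{equation}
using $D_{n_0+1}|_K=0$ and the supposed vanishing on every $e_L$. But this contradicts the strict inequality $D_{n_0+1}(v)<0$ for every $v\in P$ (and $P$ is non-empty, since $v$ exists). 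Hence the supposition is false, and some $D_{n_0+1}(e_{L_0})$ must be nonzero; combined with the first step, $D_{n_0+1}(e_{L_0})<0$.

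The only real subtlety is making sure the decomposition (\ref{eq:P.decomp}) / Thm.\ \ref{thm:PD.decomp} is legitimately invoked with the edge vectors of $P=P_{S_{n_0}}$ rather than of the original $\Dset$; this is exactly what Sec.\ \ref{sec:landau.proof} set up, so it is available. I expect the first step (the limit argument placing $D_{n_0+1}(e_L)\leq0$) to be entirely routine, and the main --- though still modest --- obstacle is simply being careful that the contradiction in the second step uses a genuine element of $P$, which is guaranteed because $P\neq\emptyset$ by construction of $n_0$.
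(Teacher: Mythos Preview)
Your proof is correct and follows essentially the same approach as the paper: the paper simply asserts that the lemma follows from the two properties $D_{n_0+1}(v)<0$ on $P$ and $D_{n_0+1}(v)=0$ on $K$, without spelling out the details, and your argument provides exactly the natural fleshing-out of that claim (limit argument for the inequality, decomposition via Thm.~\ref{thm:PD.decomp} for the strict-negativity part, with the $n_{S_{n_0}}=1$ case handled separately).
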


\subsection{Linear combinations of \texorpdfstring{$D_j$s}{Dj}; the
  regions \texorpdfstring{$\Lset$}{Lambda}, \texorpdfstring{$M$}{M},
  and \texorpdfstring{$\notM$}{M-tilde}}
\label{ref:sec.M}

Our aim is to find a set of $\lambda_j$ for which $\sum_{j=1}^{n_0+1}\lambda_jD_j=0$,
with all $\lambda_j\geq0$, and with at least one non-zero (positive) $\lambda_j$.  To
obtain this, it is necessary that the last $\lambda$ is non-zero, i.e.,
$\lambda_{n_0+1}>0$.  This is because if it were zero, we would have the
Landau point for $S_{n_0}=(D_1,\dots,D_{n_0})$, i.e., we would have
$\sum_{j=1}^{n_0}\lambda_jD_j=0$ (with the sum up to $j=n_0$).  But the
definition of $n_0$ is that there is a good direction for $S_{n_0}$,
which implies that there is no Landau point for $S_{n_0}$.

Therefore, to avoid a contradiction, any Landau point for $S_{n_0+1}$ must
have $\lambda_{n_0+1}$ strictly greater than zero.  We can now scale all the
$\lambda_j$'s to make $\lambda_{n_0+1}=1$, and still have a Landau point.  So we will
work with
\begin{equation}
  D(\3\lambda) \eqdef \sum_{j=1}^{n_0} \lambda_j D_j + D_{n_0+1},
\end{equation}
where we use boldface notation $\3\lambda = (\lambda_1,\dots,\lambda_{n_0})$ to denote a
vector of only the first $n_0$ values, and we simply require allowed
values to obey $\lambda_j\geq0$.  Then our aim is to find an allowed $\3\lambda$ for
which $D(\3\lambda)=0$.

Define $\Lset$ to be the set of allowed $\3\lambda$:
\begin{equation}
\label{eq:Lset.def}
  \Lset \eqdef \left\{ \3\lambda : \lambda_j \geq 0 \mbox{ for all $j$} \right\},
\end{equation}
and define the following subset of $\Lset$:
\begin{equation}
\label{eq:M.def}
  M \eqdef \left\{ \3\lambda \in \Lset : D(\3\lambda)(v) \leq 0 
     \mbox{ for all $v \in P$} \right\},
\end{equation}
i.e., $M$ is the set of all $\3\lambda$ in $\Lset$ for which
$D(\3\lambda)$ is negative or zero for every vector that makes all of
$D_1$, \dots, $D_{n_0}$ positive.  Its complement in $\Lset$ is the
set $\3\lambda$ for which we have a positive value for $D(\3\lambda)$
somewhere in $P$:
\begin{align}
  \notM
  \eqdef \Lset \backslash M
  = {}& \bigl\{ \3\lambda \in \Lset : 
       \exists v \in V 
       \mbox{ such that }
       D(\3\lambda)(v) > 0,
\nonumber\\ & ~
       \mbox{ and }
       D_j(v) > 0 \mbox{ for $1 \leq j \leq n_0$}
    \bigl\}.
\end{align}
We will find a Landau point at a certain corner or edge of the set
$M$.

To visualize the kind of set that $M$ is, it is useful to refer to the
simple example given in App.\ \ref{sec:M.simple} below.  It results in
a region for $M$ that is illustrated in Fig.\ \ref{fig:ex.M}.  Notice
that $M$ is convex and is a closed set.  The boundaries are segments
of straight lines.  The value of $\3\lambda$ giving a Landau point is
at the upper right-hand corner.

\begin{figure}
  \centering
  \includegraphics[scale=0.9]{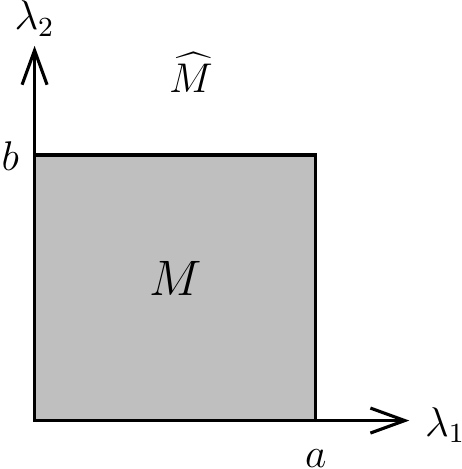}
  \caption{Set $M$ for the example given in App.\ \ref{sec:M.simple}.}
  \label{fig:ex.M}
\end{figure}

\subsection{Properties of \texorpdfstring{$M$}{M}}

We first derive some elementary properties of $M$ and $\notM$ for
the general case:
\begin{enumerate}

\item The zero vector $\3{0}$ is in $M$, so that $M$ is non-empty.
  This is simply because $D(\3{0})=D_{n_0+1}$, and $D_{n_0+1}(v)$ is
  negative for all vectors in $P$, Eq.\ (\ref{eq:n0+1.P}).

\item $M$ is convex.  Suppose that $\3\lambda_a$ and $\3\lambda_b$ are
  any 2 elements of $M$ and that $t$ is any real number obeying $0 \leq
  t \leq 1$.  Then for every $v \in P$
  \begin{multline}
     D\xleft( t\3\lambda_a + (1-t)\3\lambda_b \right)(v)
  \\
     = t D(\3\lambda_a)(v)
       + (1-t) D(\3\lambda_b)(v),
   \end{multline}
   which is zero or negative because each term is.  It follows that
   $t\3\lambda_a + (1-t)\3\lambda_b \in M$.  Hence $M$ is convex.

 \item $M$ is a closed set.  Let $\3\lambda_\alpha$ be any sequence of
   elements of $M$ that converges to some element $\3\lambda$ of
   $\Lset$.  To show that $M$ is closed, we need to show that the
   limit point $\3\lambda$ is actually in $M$.  To do this, we observe
   that for every $v \in P$, all the $\3\lambda_\alpha$ obey
   $D(\3\lambda_\alpha)(v) \leq 0$, by the definition of $M$.  Hence
   \begin{equation}
     D(\3\lambda)(v) = \lim_{\alpha\to\infty} D(\3\lambda_\alpha)(v) \leq 0,
   \end{equation}
   by the continuity of linear functions.  Hence $\3\lambda\in M$.

 \item Consider an arbitrary non-zero $\3\lambda \in \Lset$, and consider
   an arbitrarily scaled value $\kappa \3\lambda$, where $\kappa$ is a
   positive real number. Then for large enough $\kappa$, $D(\kappa
   \3\lambda) \in \notM$, but not $M$

   \emph{Proof}: For any $v \in P$
   \begin{equation}
   \label{eq:D.kappa.v}
     D(\kappa \3\lambda)(v)
     = \kappa \3\lambda \cdot \3D(v) + D_{n_0+1}(v).
   \end{equation}
   Since $v \in P$, at least one $\lambda_j>0$, and the others are non-negative,
   $\3\lambda \cdot \3D(v)$ is positive, so for large enough $\kappa$, the quantity in
   (\ref{eq:D.kappa.v}) is positive, and hence $D(\kappa \3\lambda) \in \notM$, but not
   $M$.  The line of $\kappa \3\lambda$ intersects the boundary of $M$ at some point,
   which may in degenerate situations be at $\3{0}$.

\end{enumerate}
It is easily checked that these properties are obeyed in the example shown
in Fig.\ \ref{fig:ex.M}.

\subsection{Characterization of \texorpdfstring{$M$}{M} in terms of
  properties of edge vectors \texorpdfstring{$e_L$}{eL}}

We have seen in Eq.\ (\ref{eq:P.decomp}) that any vector in $P$ can be
written as a sum of edge vectors with strictly positive coefficients plus
an element of the kernel $K$, i.e., $v=k+\sum_LC_Le_L$.  Since $D_j(k)=0$ for
$1 \leq j \leq n_0+1$, it follows that
\begin{equation}
  D(\3\lambda)(v)= \sum_L C_L f(L,\3\lambda),
\end{equation}
where
\begin{equation}
  f(L,\3{\lambda}) \eqdef \sum_{j=1}^{n_0} \lambda_jD_j(e_L)  + D_{n_0+1}(e_L).
\end{equation}

Therefore for any given $\3\lambda\in \Lset$, we can characterize
whether $\3\lambda$ is in $M$ or $\notM$, and whether it is in the
boundary of $\notM$ by the following exclusive criteria:
\begin{enumerate}

\item \emph{Either at least one $f(L,\3{\lambda})$ is strictly positive.} In this
  case $\3\lambda \in \notM$.

  The last statement is proved by letting $L_0$ be one of the cases for
  which $f(L_0,\3{\lambda})>0$, and we set $C_L = \delta_{L,L_0} + \kappa$, with $\kappa>0$.
  Then $v =\sum_L C_Le_L \in P$ and
  \begin{equation}
    D(\3\lambda)(v)
    = f(L_0,\3{\lambda}) + \kappa \sum_L f(L,\3{\lambda}).
  \end{equation}
  By making $\kappa$ small enough (but non-zero), we can make this
  positive.  Hence $\3\lambda \in \notM$.

\item \emph{Or all of $f(L,\3{\lambda})$ are strictly negative.}  Then $\3\lambda \in M$
  and $\3\lambda$ is in the interior of $M$, not on its boundary with $\notM$.

  First, we observe that the negativity of $f(L,\3{\lambda})$ implies
  that $D(\3\lambda)(v)$ is negative for all $v \in P$, so that
  $\3\lambda \in M$.  Then we consider a nearby point $\3\lambda_a =
  \3\lambda + \delta\3\lambda$ that is still in $\Lset$ (i.e., the
  components obey $\lambda_{a,j}\geq 0$), and we let a general element
  of $P$ be $v = k + \sum_L C_L e_L$.  We let
  \begin{align}
    -F &= \max_{L} f(L,\3\lambda) < 0,
  \\
    G &= \max_{j,L} D_j(e_L) > 0.
  \end{align}
  Then we can bound
  \begin{align}
    D(\3\lambda+\delta\3\lambda)(v)
    & = \sum_L C_L \biggl[ f(L,\3\lambda)
                         + \sum_{j=1}^{n_0} \delta\lambda_j D_j(e_L)
                   \biggr]
  \nonumber\\
    & \leq \sum_L C_L \biggl[ -F  + \sum_{j=1}^{n_0} |\delta\lambda_j| G \biggr].
  \end{align}
  Now take $\sum_{j=1}^{n_0} |\delta\lambda_j| < F/G$.  Then
  $D(\3\lambda+\delta\3\lambda)$ is negative on the whole of $P$, and
  so $\3\lambda+\delta\3\lambda$ is in $M$.  Hence all points
  sufficient close to $\3\lambda$ are themselves in $M$, and not in
  $\notM$.  Therefore $\3\lambda$ is not on the boundary with $\notM$.

\item \emph{Or for all $L$, $f(L,\3{\lambda}) \leq 0$, and at least one is zero.}
  Then $\3\lambda \in M$ and $\3\lambda$ is on its boundary with $\notM$.

  Given that none of $f(L,\3{\lambda})$ is positive, $\3\lambda$ must be in $M$,
  not $\notM$.  It remains to show that it is on the boundary.

  So pick $L_0$ such that $f(L_0,\3{\lambda}) = 0$, and pick $\delta\3\lambda$ such that all
  the $\delta\lambda_j$ are strictly positive, $\delta\lambda_j>0$, for $1 \leq j \leq n_0$.  We will
  show that $\3\lambda+\delta\3\lambda$ is in $\notM$ no matter how small $\delta\3\lambda$ is.  First,
  $\3\lambda+\delta\3\lambda$ is in $\Lset$, because each component of the vector is
  non-negative.  Let
  \begin{equation}
     -A = \min_{L} f(L,\3\lambda) \leq 0,
  \end{equation}
  and choose an element of $P$ by $v = \kappa e_{L_0} + \sum_{L}e_L$,
  with $\kappa>0$.  Then
  \begin{equation}
  \label{eq:A.v}
    D(\3\lambda+\delta\3\lambda)(v)
     \geq - A \#(L)
       + \kappa \sum_{j} \delta\lambda_jD_j(e_{L_0}),
  \end{equation}
  with $\#(L)$ being the number of $e_L$ vectors.  Hence, by choosing $\kappa$
  large enough, we make $D(\3\lambda+\delta\3\lambda)(v)$ positive.  Therefore $D(\3\lambda+\delta\3\lambda)$
  is in $\widehat{M}$ no matter how small the non-zero $\delta\3\lambda$ is, and so
  $\3\lambda$ is on the boundary between $M$ and $\widehat{M}$. 
\end{enumerate}

\subsection{Moving along boundary between \texorpdfstring{$M$}{M}
  and \texorpdfstring{$\notM$}{M-tilde}}
\label{sec:notM.M}

Now consider a point $\3{\lambda}_0$ on the boundary between $M$ and $\notM$.
Such a point exists.  At it, $f(L,\3{\lambda}_0)$ is zero for some number of
edges $L$ of the positive region $P$, and negative for any others.  We will
now show that we can move from $\3{\lambda}_0$ along the boundary of $M$ in such
a way that we find a place where there is an increase in the number of $L$
for which $f(L,\3{\lambda})=0$.  We keep going, repeating this process, which
terminates only when all are zero.  At that point $D(\3{\lambda})$ is itself zero
and we have a Landau point, as we aimed to find.

Let\footnote{Note that $Z$ is now used with a different meaning and type of
  argument than before.} $Z(\3{\lambda}_0)$ and $\widehat{Z}(\3{\lambda}_0)$ be the set
of $L$ for which $f(L,\3{\lambda}_0)$ is zero and non-zero (necessarily
negative):
\begin{align}
  Z(\3{\lambda}_0) & = 
   \left\{ L : f(L,\3{\lambda}_0) = 0 \right\},
\\
  \widehat{Z}(\3{\lambda}_0) & = 
   \left\{ L : f(L,\3{\lambda}_0) < 0 \right\}.
\end{align}
This is a partition of the set of all edges of $P$.

Let $v$ be a general element of $P$, decomposed as in Eq.\
(\ref{eq:P.decomp}). Then
\begin{equation}
\label{eq:D.lambda.0}
  D(\3{\lambda}_0)(v)
  =
  \sum_{\text{all }L} C_L f(L,\3{\lambda}_0)
  =
  \sum_{L \in \widehat{Z}(\3{\lambda}_0)} C_L f(L,\3{\lambda}_0).
\end{equation}

It is possible that all the $F(L,\3{\lambda})$ are zero, so that $D(\3\lambda_0)(v)=0$
for every $v\in P$.  Since $P$ is a manifold of the same dimension as the
whole space $V$, it follows that $D(\3\lambda_0)$ itself is zero, so that we have
a Landau point, and we need go no further.

Otherwise at least one $F(L,\3{\lambda})$ is nonzero and negative.  To deal with
this case, our method will be to first prove that the set of $e_L$ with $L
\in Z(\3{\lambda}_0)$ spans a boundary segment of $P$, rather than the whole of
$P$, and then that there is a $j_0$ for which
\begin{equation}
\label{eq:on.bound.M}
  D_{j_0}(e_L) = 0 \mbox{ for all } L \in Z(\3{\lambda}_0). 
\end{equation}
We will use this to provide a direction $\delta\3\lambda$ in which to
move while staying on the boundary of $M$ and then eventually find a
point where yet another $f(L,\3{\lambda})$ is zero.

The following simple results are useful in the sequel:
\begin{lemma}
\label{lem:neg.D.P}
  $D(\3\lambda_0)(v)$ is negative for every $v$ in $P$.
\end{lemma}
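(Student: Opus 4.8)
The plan is to read the sign of $D(\3{\lambda}_0)(v)$ straight off the edge-vector decomposition of $P$. Recall that $P=P_{S_{n_0}}$, so Thm.~\ref{thm:PD.decomp}, applied with $\Dset$ taken to be $S_{n_0}=(D_1,\dots,D_{n_0})$, says that every $v\in P$ has the form $v=k+\sum_L C_L e_L$ with $k\in K$ and all coefficients $C_L>0$, as recorded in Eq.~(\ref{eq:P.decomp}). I would fix an arbitrary such $v$ and work with this representation.

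First I would discard the kernel part: by the definition~(\ref{eq:K.def}) of $K$ one has $D_j(k)=0$ for $1\le j\le n_0$, and it was shown in Sec.~\ref{sec:ker.Dj} that $D_{n_0+1}(k)=0$ as well, so $D(\3{\lambda}_0)(k)=0$ and therefore $D(\3{\lambda}_0)(v)=\sum_L C_L f(L,\3{\lambda}_0)$ — this is precisely Eq.~(\ref{eq:D.lambda.0}). Next I would use that $\3{\lambda}_0$ lies on the boundary of $M$ with $\notM$: by the third of the three mutually exclusive cases in the edge-vector characterization of $M$, that boundary consists exactly of the $\3{\lambda}\in\Lset$ with $f(L,\3{\lambda})\le0$ for every $L$ and at least one $f(L,\3{\lambda})=0$. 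Hence every term $C_L f(L,\3{\lambda}_0)$ in the sum is $\le0$. To promote this to strict negativity I would invoke the hypothesis of the case actually under discussion, namely that at least one $f(L,\3{\lambda}_0)$ is strictly negative, i.e.\ $\widehat{Z}(\3{\lambda}_0)\neq\emptyset$; picking $L_1\in\widehat{Z}(\3{\lambda}_0)$ and using $C_{L_1}>0$ gives $D(\3{\lambda}_0)(v)\le C_{L_1}f(L_1,\3{\lambda}_0)<0$. Since $v$ was an arbitrary point of $P$, this finishes the argument.

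There is no serious obstacle here; the only point that must be kept straight is that the lemma is invoked only after the degenerate possibility that \emph{all} $f(L,\3{\lambda}_0)$ vanish has been set aside — in that case $D(\3{\lambda}_0)$ would vanish identically on the full-dimensional open set $P$, hence be the zero dual vector, hence already furnish a Landau point and terminate the construction — so it is legitimate to assume $\widehat{Z}(\3{\lambda}_0)$ is non-empty. With that caveat, the lemma is an immediate corollary of the decomposition~(\ref{eq:P.decomp}) together with the characterization of the boundary of $M$, and the proof is essentially one line.
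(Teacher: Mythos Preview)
Your proof is correct and follows essentially the same approach as the paper: use the edge-vector decomposition (\ref{eq:P.decomp}) to write $D(\3\lambda_0)(v)=\sum_L C_L f(L,\3\lambda_0)$, note that on the boundary of $M$ every $f(L,\3\lambda_0)\le 0$ with at least one strictly negative in the case at hand, and conclude strict negativity from $C_L>0$. The paper's proof is the same argument compressed into two sentences; your explicit handling of the kernel term and the caveat about the degenerate all-zero case are both accurate and match the surrounding text.
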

\begin{proof}
  The $F(L,\lambda_0)$ in (\ref{eq:D.lambda.0}) are negative or zero, and at
  least is nonzero.  Hence $D(\3\lambda_0)(v)$ is negative for every $v$ in $P$,
  since all the $C_L$ are strictly positive.
\end{proof}

\begin{lemma}
\label{lem:null.D.P}
  There is a null intersection between the kernel of 
  $D(\3\lambda_0)$ and the positive region $P$:
  \begin{equation}
  \label{eq:K.D.intersect}
    \ker D(\3\lambda_0) \cap P = \emptyset.
  \end{equation}
\end{lemma}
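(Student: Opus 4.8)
The plan is to derive this immediately from the preceding Lemma~\ref{lem:neg.D.P}, which has already established that $D(\3\lambda_0)(v)$ is strictly negative for every $v\in P$. First I would suppose, for contradiction, that the intersection is nonempty: let $v\in\ker D(\3\lambda_0)\cap P$. Since $v\in\ker D(\3\lambda_0)$, we have $D(\3\lambda_0)(v)=0$. But since $v\in P$, Lemma~\ref{lem:neg.D.P} gives $D(\3\lambda_0)(v)<0$, which is incompatible with $D(\3\lambda_0)(v)=0$. Hence no such $v$ exists, and $\ker D(\3\lambda_0)\cap P=\emptyset$.

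The essential content of the argument therefore resides not in this lemma but upstream, in Lemma~\ref{lem:neg.D.P}: the strictness of the inequality $D(\3\lambda_0)(v)<0$ is what rules out a vector of $P$ lying on the kernel hyperplane. That strictness in turn comes from the edge-vector decomposition $v=k+\sum_L C_L e_L$ of Eq.~(\ref{eq:P.decomp}), in which all the coefficients $C_L$ are strictly positive, combined with the fact that we are in the case where at least one $f(L,\3\lambda_0)$ is strictly negative (the case where all of them vanish having already been disposed of, as that would directly yield a Landau point). So the only thing to check here is that ``strictly negative on all of $P$'' literally excludes ``zero somewhere on $P$'', which is immediate.

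There is essentially no obstacle: the step is a one-line deduction once Lemma~\ref{lem:neg.D.P} is in hand. The only point worth flagging is that one must use the \emph{strict} version of Lemma~\ref{lem:neg.D.P} rather than the weaker ``$\leq 0$'' statement, since a non-strict inequality would be perfectly consistent with $v\in\ker D(\3\lambda_0)$. I would keep the proof to two or three sentences and not belabor it.
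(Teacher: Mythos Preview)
Your proposal is correct and matches the paper's own proof essentially verbatim: the paper simply says ``This follows from Lemma~\ref{lem:neg.D.P}, since $D(\3\lambda_0)$ is nonzero on the whole of $P$.'' Your additional remarks about why the strict inequality in Lemma~\ref{lem:neg.D.P} is essential are accurate and useful context, but the core argument is the same one-line deduction.
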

\begin{proof}
  This follows from Lemma \ref{lem:neg.D.P}, since $D(\3\lambda_0)$ is nonzero on
  the whole of $P$.
\end{proof}

\begin{lemma}
\label{lem:P.and.bound}
   For every $v$ in both $P$ and its boundary, $D(\lambda_0)(v) \leq 0$.
\end{lemma}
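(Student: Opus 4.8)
The plan is to read ``$v$ in both $P$ and its boundary'' as $v$ in the closure $\overline{P} = P \cup \partial P$, and then to upgrade the strict inequality of Lemma \ref{lem:neg.D.P} to the non-strict inequality on $\overline{P}$ by a routine continuity argument. Since $D(\3\lambda_0)$ is linear, it is continuous on the finite-dimensional space $V$, so the closed half-space $H = \{ v \in V : D(\3\lambda_0)(v) \le 0 \}$ is a closed subset of $V$. By Lemma \ref{lem:neg.D.P} we have $P \subseteq H$ (in fact $P$ lies in the open half-space $D(\3\lambda_0)(v)<0$). Since $\overline{P}$ is the smallest closed set containing $P$, it follows that $\overline{P} \subseteq H$, which is exactly the assertion.

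Equivalently, and more in keeping with the level of the surrounding arguments, I would argue pointwise. For $v \in P$ the claim is immediate from Lemma \ref{lem:neg.D.P}. For $v \in \partial P$, recall that by definition $v$ is a limit point of $P$, so there is a sequence $v_a \in P$ with $v_a \to v$ (one may also cite Thm.\ \ref{thm:bdy.char}, applied with $\Dset = S_{n_0}$, to exhibit such points). Then, by continuity of the linear functional $D(\3\lambda_0)$,
\begin{equation}
  D(\3\lambda_0)(v) = \lim_{a\to\infty} D(\3\lambda_0)(v_a) \le 0 ,
\end{equation}
because each $D(\3\lambda_0)(v_a)$ is strictly negative by Lemma \ref{lem:neg.D.P} and the limit of a convergent sequence of negative numbers is non-positive. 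Together the two cases cover all of $\overline{P}$.

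There is essentially no obstacle here; the only point worth emphasizing is why the inequality is not strict. On the boundary $D(\3\lambda_0)$ can genuinely vanish, and locating the boundary points where it vanishes --- equivalently, the boundary segments of $P$ along which $f(L,\3\lambda_0)=0$ --- is precisely the mechanism used in the remainder of this subsection to step towards a Landau point. Thus Lemma \ref{lem:P.and.bound} is simply the ``boundary version'' of Lemma \ref{lem:neg.D.P}, recording that passing to the closure costs at most the strictness of the inequality.
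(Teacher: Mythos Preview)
Your proof is correct and is essentially the same as the paper's: both use Lemma~\ref{lem:neg.D.P} for points of $P$ and then pass to boundary points by taking limits, using continuity of the linear functional $D(\3\lambda_0)$ to conclude that the strict inequality weakens to a non-strict one. The paper states this in two sentences; your version is simply a more detailed write-up of the same argument.
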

\begin{proof}
  We know that $D(\3\lambda_0)$ is negative for every element of $P$.  A boundary
  point is obtained by taking a limit of points in $P$.  Therefore
  $D(\3\lambda_0)$ is negative or zero on the boundary of $P$.
\end{proof}

\begin{lemma}
  For every $k \in K$ and any $\3\lambda$,
  \begin{equation}
  \label{eq:D.lambda.K}
    D(\3\lambda)(k) = 0.
  \end{equation}
\end{lemma}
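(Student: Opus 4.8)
The plan is to reduce the identity to two facts about the subspace $K$ that are already available. By the definition of $K$ in Eq.~(\ref{eq:K.def}), every $k\in K$ obeys $D_j(k)=0$ for $1\le j\le n_0$. The only extra ingredient needed is the result established in Sec.~\ref{sec:ker.Dj}: there it was shown, by the argument that otherwise one could manufacture a good direction for $S_{n_0+1}$, that $D_{n_0+1}$ also annihilates $K$, i.e.\ $D_{n_0+1}(u)=0$ for every $u\in K$.

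With these in hand, the computation I would carry out is immediate. Fix an arbitrary $k\in K$ and an arbitrary $\3\lambda=(\lambda_1,\dots,\lambda_{n_0})$, with no restriction on the signs of the $\lambda_j$. Then, using linearity of each $D_j$ together with the two facts just recalled,
\[
  D(\3\lambda)(k)=\sum_{j=1}^{n_0}\lambda_j\,D_j(k)+D_{n_0+1}(k)
              =\sum_{j=1}^{n_0}\lambda_j\cdot 0+0=0.
\]
In particular the identity holds for \emph{every} $\3\lambda$, not merely for those in $\Lset$, which matches the wording of the lemma.

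I do not expect any real obstacle here: all of the content sits in the already-proved inclusion $K\subseteq\ker D_{n_0+1}$, and once that is invoked the statement is a one-line consequence of the definitions of $K$ and of $D(\3\lambda)=\sum_{j=1}^{n_0}\lambda_j D_j+D_{n_0+1}$. The only points requiring a moment's care are to cite the correct earlier result (the one in Sec.~\ref{sec:ker.Dj}) rather than merely the weaker statement that $D_{n_0+1}$ is \emph{some} linear combination of $D_1,\dots,D_{n_0}$, and to keep track that the sum defining $D(\3\lambda)$ runs only to $j=n_0$ with $D_{n_0+1}$ appended separately with coefficient one.
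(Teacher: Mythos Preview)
Your proposal is correct and follows essentially the same approach as the paper: both invoke the definition of $K$ to get $D_j(k)=0$ for $1\le j\le n_0$, then cite the result of Sec.~\ref{sec:ker.Dj} that $\ker D_{n_0+1}\supseteq K$ to get $D_{n_0+1}(k)=0$, and conclude immediately. Your additional remarks about the lemma holding for arbitrary $\3\lambda$ (not just those in $\Lset$) and about which earlier result to cite are accurate and harmless elaborations.
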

\begin{proof}
  Since $k\in K$, every $D_j(k)=0$, for $1 \leq j \leq n_0$.  We have
  also seen in Sec.\ \ref{sec:ker.Dj} that $\ker D_{n_0+1} \supseteq
  K$, so $D_{n_0+1}(k)=0$. Equation (\ref{eq:D.lambda.K}) follows.
\end{proof}

Now consider vectors of the form
\begin{equation}
\label{eq:v.Z}
  w = k + \sum_{L \in Z(\3{\lambda}_0)} C_L e_L,
\end{equation}
where $C_L>0$, $k \in K$, and we have only used the subset of $e_L$ for which
$f(L,\3{\lambda}_0)=0$.  The vector $w$ is in the kernel of $D(\3\lambda_0)$, i.e.,
$D(\3\lambda_0)(w)=0$.  So by Lemma \ref{lem:null.D.P} it cannot be in $P$.  But
by adding a term $\kappa \sum_{L \in \hat{Z}(\3{\lambda}_0)} e_L$, with $\kappa$ non-zero and
positive, but arbitrarily small, we get a vector in $P$ itself.  Hence the
given $w$ is in a non-trivial boundary segment $B$ of $P$.

Now from Thm.\ \ref{thm:bdy.char}, applied to $P$ instead of $P_{\Dset}$,
we know that on the boundary segment $B$, there is a value $j_0$ for which
$D_{j_0}$ is zero on $B$, and hence $D_{j_0}(w)=0$.
 
Now, in Eq.\ (\ref{eq:v.Z}), for all $j$ in the range $1\leq j \leq n_0$,
$D_j(k)=0$ and for all $L$ $D_j(e_L)\geq0$.  Hence from the zero value of
$D_{j_0}(w)$ it follows that $D_{j_0}(e_L)$ is zero for all $L \in
Z(\3{\lambda}_0)$, i.e., for all those $L$ for which $D(\3\lambda_0)(e_L)$ is zero
rather than negative.  (At least one such $L$ exists, since $\3\lambda_0$ is on
the boundary of $M$.)

\medskip

Now let us make an increment $\delta \3\lambda$ to $\3\lambda_0$ defined by
\begin{equation}
  \delta\lambda_j = \delta_{jj_0}\kappa,
\end{equation}
with $\kappa \geq 0$.  All its components are non-negative, so 
$\3\lambda_0 +\delta \3\lambda$ remains in $\Lset$.  To determine
its location with regards to $M$ and $\notM$, we calculate
\begin{equation}
  f(L,\3{\lambda}_0 + \3\delta\lambda)
 = 
  f(L,\3{\lambda}_0) + \kappa D_{j_0}(e_L).
\end{equation}
When $L\in Z(\3\lambda_0)$, this is zero.  When, instead, $L\in \widehat{Z}(\3\lambda_0)$,
this starts out negative and either stays at the same value or increases
with $\kappa$, depending on whether $D_{j_0}(e_L)$ is zero or not.  (Recall that
every $D_j(e_L)$ is positive or zero.)

Thus for small enough $\kappa$, $\3{\lambda}_0 + \3\delta\lambda$ is still on the boundary of
$M$.  But given $j_0$, there is at least one $e_L$ for which $D_{j_0}(e_L)$
is strictly positive; this $e_L$ is necessarily one of those corresponding
to $\widehat{Z}(\3\lambda_0)$.  So at least one of the initially negative
$f(L,\3{\lambda}_0 + \3\delta\lambda)$ increases.  There is a least $\kappa$ for which one (or
more) of these reaches zero.  Let $\3\lambda_1$ be the resulting position on the
boundary of $M$.

At this point, we go back to the start of this Sec.\ \ref{sec:notM.M}, and
replace $\3\lambda_0$ by $\3\lambda_1$.  We keep iterating this procedure, getting a
sequence of boundary points $\3\lambda_i$, with at each stage getting an
increased number of $L$ for which $f(L,\3{\lambda}_i)=0$.  
Eventually this procedure has to stop because we run out of
values of $L$, and the only way this happens in the argument is that there
are no values of $L$ for which $F(L,\3{\lambda}_{i_{\rm max}})$ is negative,
i.e., that all the $F(L,\3{\lambda}_{i_{\rm max}})$ are zero, it follows that
$D(\3\lambda_{i_{\rm max}})=0$, i.e., we have a Landau point.

This completes our proof of Thm.\ \ref{thm:main.geom}.

\section{Contour deformations and pinches}
\label{sec:deal.with.zero.first.order}

We now return to the determination of the conditions for the existence of a
pinch at a particular value $w_S$ of the integration variable in an
integral of the form given in Eq.\ (\ref{eq:integral}).  We will complete
the proof of Thm.\ \ref{thm:main.contour}, that there is a pinch if and
only if the corresponding Landau condition holds.  To do this, we need to
prove Thm.\ \ref{thm:pinch.to.1st.order.shift} relating a pinch to the
non-existence of an allowed deformation with positive first-order shifts in
the imaginary parts of the relevant denominators. Once that is proved, the
already-proved geometric Thm.\ \ref{thm:main.geom} gives Thm.\
\ref{thm:main.contour} as an immediate consequence.

We saw in Sec.\ \ref{sec:elem.parts} that if an allowed deformation exists
with positive first-order shifts in the relevant denominators, then the
integration is not trapped.  So it remains to show that if there is no such
deformation, then the integral is trapped.  Now in the one-dimensional
case, this is easy to show, because a contour deformation avoids a
singularity due to a zero of a denominator if and only if the first order
shift in the denominator is positive --- see App.\ \ref{sec:1D.first.order}
for explicit details. But in higher dimensions, the example in App.\
\ref{sec:2D.first.order} shows that a singularity can be avoided while
having a first-order shift that is zero, i.e., the direction of contour
deformation can be tangent to the singularity surface.  Hence a more
detailed argument is needed for the general case, and it will turn out to
be annoyingly difficult for such an apparently elementary result.

\subsection{Elementary results}

First we prove some elementary results that strongly restrict the kinds of
contour deformation that do or do not avoid singularities.

Given a denominator $A_j(w_R)$ that is zero at $w_R=w_S$, define its
derivative by $D_j = \partial A_j(w_S)$.  Then consider a candidate contour
deformation specified by $v(w_R)$, and let $v_S=v(w_S)$, the direction of
deformation at $w_S$.  We classify what happens by the sign of $D_j(v_S) =
v_S\cdot \partial A_j$, and codify the results in some theorems.

It is useful to define $\delta w=w_R-w_S$ and $\delta v(\delta w) = v(w_S+\delta w)-v_S$, i.e., the
deviations from values at $w_S$.

First, for positive $D_j(v_S)$:
\begin{theorem}
  \label{thm:v.D.pos}
  Suppose, with the notation and conditions just stated, that the
  deformation is allowed and that $D_j(v_S)>0$. Then the deformation avoids
  the singularity at $w_S$ associated with the zero of $A_j$.
\end{theorem}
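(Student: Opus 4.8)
The plan is to exploit the analyticity observations from Sec.~\ref{sec:pinch.to.geom} together with the positivity of $D_j(v_S)$ to show directly that $A_j(w) + i\epsilon$ cannot vanish anywhere on the deformed contour in a neighborhood of $w_S$, for any $\lambda \in [0,1]$ and any sufficiently small $\epsilon > 0$ --- which is exactly what Definition~\ref{def:sing.avoid,j.kS} requires for singularity avoidance. First I would write $w = w_R + i\lambda v(w_R)$ with $w_R = w_S + \delta w$, so that $v(w_R) = v_S + \delta v(\delta w)$ with $\delta v(\delta w) \to 0$ as $\delta w \to 0$, by continuity of $v$. Taylor-expanding the (analytic) denominator about $w_S$ and using $A_j(w_S) = 0$ gives
\begin{equation}
  A_j(w) = \partial A_j(w_S)\cdot(\delta w + i\lambda v_S + i\lambda\,\delta v) + O\bigl(|\delta w + i\lambda v_S + i\lambda\,\delta v|^2\bigr).
\end{equation}
The crucial point is to isolate the imaginary part: $\mathrm{Im}\,A_j(w) = \lambda\,D_j(v_S) + \lambda\,D_j(\delta v) + (\text{real part of the quadratic remainder})$, and to bound the error terms. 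Since $D_j(v_S) > 0$, for $w_R$ in a small enough neighborhood of $w_S$ (so that $|\delta w|$ and $|\delta v(\delta w)|$ are small) and $0 \le \lambda \le 1$, the dominant behavior of $\mathrm{Im}\,A_j(w)$ is $\lambda\,D_j(v_S) \ge 0$, with the $O(\lambda^2)$ and $O(\lambda|\delta w|)$ corrections controlled. When $\lambda > 0$ this forces $\mathrm{Im}\,A_j(w) > 0$ outright (so $A_j(w) + i\epsilon \ne 0$ for all $\epsilon \ge 0$), and when $\lambda = 0$ we have $A_j(w) = A_j(w_R)$ real, so $A_j(w_R) + i\epsilon$ has nonzero imaginary part for any $\epsilon > 0$. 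Either way $A_j(w) + i\epsilon \ne 0$, except possibly at $\lambda = \epsilon = 0$, which is permitted.

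I anticipate that the main obstacle is handling the quadratic (and higher) remainder uniformly in $\lambda$ over the \emph{whole} range $[0,1]$, not just for infinitesimal $\lambda$. Naively the Taylor estimate only controls small $\lambda$; for $\lambda$ of order one the remainder need not be small compared to $\lambda D_j(v_S)$. This is precisely where the hypothesis that the deformation is \emph{allowed} must be brought in: by Definition~\ref{def:allowed.deform} the full deformation is compatible with all denominators, so $A_j(w_R + i\lambda v(w_R)) + i\epsilon$ is already known to be nonzero for $w_R$ in a neighborhood of $w_S$, $0 \le \lambda \le 1$, and $0 < \epsilon \le \epsilon_0$. The new content of the theorem is only the endpoint $\epsilon = 0$ with $0 < \lambda \le 1$; so I really only need the small-$\lambda$ Taylor argument to cover a sub-neighborhood $0 < \lambda \le \lambda_0$, and then appeal to a compactness/continuity argument on the compact interval $[\lambda_0, 1]$ where $A_j \ne 0$ already holds by the allowed-deformation hypothesis (continuity lets the $\epsilon = 0$ non-vanishing persist from $\epsilon > 0$ there, since on that range $A_j$ stays bounded away from zero). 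An alternative, perhaps cleaner, route is the one-dimensional analyticity remark following Sec.~\ref{sec:pinch.to.geom}: fix $w_R$ near $w_S$ and view $f_{j,w_R}(\lambda) = A_j(w_R + i\lambda v(w_R))$ as an analytic function of the single real variable $\lambda$; its imaginary part at small $\lambda$ is $\lambda D_j(v(w_R)) + O(\lambda^2)$, which is strictly positive for small $\lambda > 0$ by continuity of $D_j(v(\cdot))$ and positivity at $w_S$, so $f_{j,w_R}(\lambda) \ne 0$ there, and this combines with the allowed-deformation hypothesis for the remaining $\lambda$.

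In writing this up I would keep the estimates schematic: introduce constants $c = D_j(v_S) > 0$, a bound $|\partial A_j| \le B$ on the derivative and a Lipschitz-type bound on the relevant quadratic terms (available since $A_j$ is analytic, and in the setting of Thm.~\ref{thm:main.contour} even just quadratic), and a bound $|\delta v(\delta w)| \le \eta(|\delta w|)$ with $\eta \to 0$; then choose the neighborhood of $w_S$ small enough that the correction to $\mathrm{Im}\,A_j$ is less than $\tfrac12 c\lambda$ in absolute value for all $\lambda \in [0,\lambda_0]$. No delicate maneuver is needed beyond this bookkeeping, so the proof should be short; the only subtlety worth flagging explicitly to the reader is that this is genuinely easier than the converse direction (Thm.~\ref{thm:pinch.to.1st.order.shift}), because here we are verifying that a \emph{given} positive-first-order deformation works, not ruling out the pathological anomalous deformations discussed in item~\ref{item:first.order} of the introduction.
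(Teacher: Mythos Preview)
Your core approach matches the paper's: Taylor expand about $w_S$, look at the imaginary part, and use $D_j(v_S)>0$ to dominate the corrections. The paper streamlines your imaginary-part computation by noting explicitly that, since $A_j$ is real for real arguments, \emph{all} Taylor coefficients about $w_S$ are real, so the imaginary part of $A_j(w_S+\delta w+i\lambda(v_S+\delta v))$ comes solely from terms odd in $\lambda$; this gives the clean factored form
\[
  \Im(A_j+i\epsilon) \;=\; \epsilon + \lambda\bigl[D_j(v_S) + O(\lambda^2) + O(\delta w)\bigr],
\]
a sum of two non-negative terms once $\delta w$ and $\lambda$ are small enough. Your phrase ``real part of the quadratic remainder'' in the expression for $\Im A_j$ is muddled; the above observation replaces it.

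Where you diverge from the paper is in worrying about the full range $\lambda\in[0,1]$. The paper simply works for small $\lambda$ and stops there (the deformation can always be rescaled; cf.\ the discussion around Fig.~\ref{fig:eps.lam.sing}). Your compactness argument for $\lambda\in[\lambda_0,1]$ has a genuine gap: the allowed-deformation hypothesis only gives $A_j+i\epsilon\neq 0$ for $\epsilon>0$, and this does \emph{not} by continuity force $A_j\neq 0$ at $\epsilon=0$ --- indeed $A_j$ could be exactly zero (real) on the deformed contour and still satisfy the allowed condition, which is precisely the scenario of Fig.~\ref{fig:eps.lam.sing}(c). So drop that paragraph; the small-$\lambda$ argument is all that is needed (and all the paper gives).
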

\begin{proof}
  Although this result is elementary,we will give a detailed argument,
  since this will introduce techniques to be used in more difficult
  situations.

  The denominator is $A_j(w_S+\delta w+i\lambda (v_S + \delta v))+i\epsilon$.  Now we always
  require that $A_j(w)$ is analytic and that it is real when $w$ is real.
  Therefore all the Taylor coefficients for an expansion about $w_S$ are
  real.  We expand the denominator $A_j+i\epsilon$ in powers of $\delta w$ and $\lambda$.

  The imaginary part comes solely from the odd terms in $\lambda$, and hence
  \begin{equation}
    \label{eq:im.aj}
    \Im(A_j+i\epsilon) = \epsilon + \lambda \left[ D_j(v_S) + O(\lambda^2) + O(\delta w) \right].
  \end{equation}
  For small enough $\delta w$ and $\lambda$, the correction terms are smaller in size
  than the positive $D_j(v_S)>0$ term, and we therefore have a sum of two
  non-negative terms.  Therefore the imaginary part of $A_j$ is zero only
  if both $\epsilon$ and $\lambda$ are zero.

  Now a zero of the denominator occurs when both its real and imaginary
  parts are zero. Hence in a neighborhood of $w_S$, the denominator is
  nonzero when $\lambda$ is small and positive, and therefore the singularity due
  to $A_j(w_S)=0$ is avoided by the contour deformation.
\end{proof}

Next, if  $D_j(v_S)$ is negative, the candidate deformation is not even
allowed:
\begin{theorem}
  \label{thm:v.D.neg}
  Suppose that $v(w_R)$ specifies a candidate deformation, and that
  $D_j(v_s) < 0$.  Then the deformation is not allowed.
\end{theorem}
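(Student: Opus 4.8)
The plan is to negate the compatibility condition of Definition~\ref{def:compat.deform.point} head-on. By Definitions~\ref{def:compat.deform.point} and~\ref{def:allowed.deform}, to show the deformation is not allowed it suffices to produce, for \emph{every} candidate $\epsilon_0>0$ and \emph{every} neighborhood $N$ of $w_S$, a point $w_R\in N$, a value $\lambda\in(0,1]$, and an $\epsilon\in(0,\epsilon_0]$ at which $A_j(w_R+i\lambda v(w_R))+i\epsilon=0$ --- i.e.\ the deformation is not even compatible with the single denominator $A_j$ at $w_S$. I would set this up exactly as in the proof of Thm.~\ref{thm:v.D.pos}: since $A_j$ is analytic and real on real arguments, its Taylor coefficients about $w_S$ are real, so with $\delta w=w_R-w_S$ and $\zeta=\delta w+i\lambda v(w_S+\delta w)$ the denominator has real part $D_j(\delta w)+R$ and imaginary part $\epsilon+\lambda D_j(v(w_R))+I$, where $R$ and $I$ collect the terms of order $\geq2$ and are $O(|\delta w|^2+\lambda^2)$ uniformly for small $\delta w,\lambda$ (uniformity because $A_j$ is entire and $v$ is bounded near $w_S$). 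Note that $D_j(v_S)<0$ forces $D_j=\partial A_j(w_S)\neq0$.

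The construction then proceeds by picking a fixed direction $u$ with $a\eqdef D_j(u)\neq0$ --- the natural choice is $u=v_S$, for which $a=D_j(v_S)<0$ --- and restricting to $\delta w=tu$, $t\in\mathbb{R}$. Along this line the real part of the denominator is $g(t)\eqdef at+R(t,\lambda)$, continuous in $t$ (as $v$ is continuous) with $|R(t,\lambda)|\leq C(t^2+\lambda^2)$. Choosing $\delta>0$ with $C\delta<|a|/2$, and then $\lambda$ small enough that $C\lambda^2<|a|\delta/4$, makes $g(\delta)$ and $g(-\delta)$ have opposite signs, so by the intermediate value theorem there is $t^*(\lambda)\in(-\delta,\delta)$ with $g(t^*)=0$; the same inequalities give the a priori bound $|t^*(\lambda)|\leq(2C/|a|)\lambda^2$, hence $t^*(\lambda)\to0$ as $\lambda\to0^+$. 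At this $t^*$ the real part vanishes and the imaginary part, apart from $\epsilon$, equals $\lambda D_j(v(w_S+t^*u))+I$; since $v$ is continuous, $D_j(v(w_S+t^*u))\to D_j(v_S)<0$, so for small $\lambda$ this is strictly negative and of size $O(\lambda)$. Therefore $\epsilon(\lambda)\eqdef-\Im\bigl[A_j(w_S+t^*u+i\lambda v(w_S+t^*u))\bigr]$ is positive, is $O(\lambda)$ hence $\leq\epsilon_0$ once $\lambda$ is small, and by construction zeros the denominator. As $t^*(\lambda)\to0$ and $\epsilon(\lambda)\to0$, these zeros land inside any prescribed $N$ and below any prescribed $\epsilon_0$; this contradicts compatibility with $A_j$ at $w_S$, \emph{a fortiori} the deformation is not compatible with all denominators there, hence not locally allowed at $w_S$, hence not allowed.

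The step I expect to require the most care is handling the real part without any differentiability of $v$ at $w_S$: one cannot apply an implicit-function argument to the full complex-valued map $(t,\lambda)\mapsto A_j(w_S+tu+i\lambda v(w_S+tu))$ because $v$ is only continuous, so the device is to solve the \emph{real} equation $g(t)=0$ for $t$ at each fixed $\lambda$ by the intermediate value theorem (which needs only continuity) and then dispose of the imaginary part for free via its strict sign, inherited from $D_j(v_S)<0$. The only other point to watch is the uniformity of the $O(t^2+\lambda^2)$ bounds over a small but fixed range of $t$, which is routine since $A_j$ is entire and $v$ is locally bounded. As in Thm.~\ref{thm:v.D.pos}, the quadratic restriction from Thm.~\ref{thm:main.contour} is not needed here; analyticity and reality of $A_j$ suffice.
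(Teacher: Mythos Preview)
Your proof is correct and follows essentially the same route as the paper's: both restrict to the line $\delta w=tv_S$ and use the dominant linear term $(t+i\lambda)D_j(v_S)$ to locate a point where $A_j$ lands on the negative imaginary axis, with higher-order corrections too small to spoil this. The only cosmetic difference is that the paper parametrizes by $\zeta=t+i\lambda=re^{i\theta}$ and argues that the leading semicircle must cross the negative imaginary axis, whereas you fix $\lambda$ and apply the intermediate value theorem in $t$ to kill the real part directly; your version is slightly more explicit about needing only continuity of $v$.
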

\begin{proof}
  We need to show that if $D_j(v_S)$ is negative then we have a situation
  like that shown in Fig.\ \ref{fig:eps.lam.sing}(a). That is, for all
  small $\lambda$, there is a zero of $A_j\!\bigl(w_S+\delta w+i\lambda (v_S+\delta v(\delta
  w))\bigr)+i\epsilon$ for some small positive $\epsilon$ and some value(s) of $\delta w$.
  Furthermore (at least one of) these values of $\epsilon$ and $\delta w$ approach zero
  as $\lambda\to0+$.  This corresponds to the negation of the definition of an
  allowed deformation given as given by Defns.\
  \ref{def:compat.deform.point} and \ref{def:allowed.deform}.

  We start with $\epsilon$ slightly positive, increase $\lambda$ from zero, and then
  decrease $\epsilon$ to zero.  We encounter a situation where the imaginary part
  of $A_j+i\epsilon$ is zero.  This can be seen from Eq.\ (\ref{eq:im.aj}) given
  that $D_j(v_S)$ is negative.  The zero of the imaginary part occurs both
  when $\delta w=0$ and for all nearby values of $\delta w$, and it occurs for all
  small positive $\lambda$.

  But a zero in the imaginary part of the denominator does not itself show
  that the deformation encounters a singularity from a zero in $A_j+i\epsilon$,
  because to get a zero we also need the real part to be zero, and we need
  to show that such a zero occurs independently of any higher order terms
  in the Taylor expansion of $A_j$ in powers of small quantities $\lambda$, $\delta w$
  and $\delta v$.

  Choose $w_R=w_S+xv_S$, i.e., $\delta w=xv_S$.  Thus $x$ parameterizes a
  particular line in the space of $w_R$.  Then $\delta v=O(x)$ as $x\to0$.  Hence
  applying a Taylor expansion of $A_j(w)$ about $w=w_S$ gives
  \begin{multline}
    \label{eq:case.linear}
    A_j(w_S+xv_S + i\lambda(v_S+\delta v)) +i\epsilon
  \\
    = i\epsilon + (x+i\lambda) D_j(v_S) + O\xleft( |x|^2, \lambda^2, |x|\lambda \right) .
  \end{multline}
  Recall that $v_S\cdot D_j$ is real, and is negative in the situation that we
  are currently considering.  Define a complex variable
  \begin{equation}
    \zeta = x+i\lambda,
  \end{equation}
  and consider values $\zeta=re^{i\theta}$ for positive $r$ and for $0\leq\theta\leq\pi$, so that
  both $x$ and $\lambda$ are at most of size $r$, with $x=r\cos\theta$ and $\lambda=r\sin\theta$.
  As we increase $\theta$ from 0 to $\pi$, $(x+i\lambda) v_S \cdot D_j$ traces out the
  semicircle in the lower half plane shown in Fig.\ \ref{fig:tour1}. It
  necessarily crosses the negative imaginary axis. The value of $A_j$
  differs from $(x+i\lambda) v_S \cdot D_j$ by terms of order $r^2$, so for small
  enough $r$, they only slightly modify the path in Fig.\ \ref{fig:tour1}.
  It still starts on the negative real axis and ends on the positive real
  axis, and crosses the negative imaginary axis.  But it crosses the
  imaginary axis with a value of $x$ that is order $r^2$ (and hence of
  order $\lambda^2$), instead of exactly zero.  We therefore get a zero of
  $A_j+i\epsilon$ for any small $\lambda$ for some small $\epsilon$, and have not avoided a
  singularity.  This gives the situation shown in Fig.\
  \ref{fig:eps.lam.sing}(a), which shows where, as we take $\epsilon$ and $\lambda$
  through the values used to try to get a successful deformation, we first
  encounter a singularity.

  Hence the deformation is not allowed.

\begin{figure}
  \centering
  \includegraphics[scale=0.7]{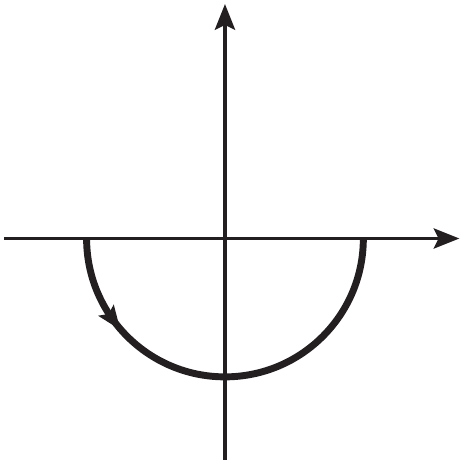}
  \caption{Value of $A_j$ takes approximately this tour in complex
    plane, in the case of (\ref{eq:case.linear}), with
    $\zeta=x+i\lambda=re^{i\theta}$, over $\theta$ from $0$ to $\pi$.}
  \label{fig:tour1}
\end{figure}
\end{proof}

From Thm.\ \ref{thm:v.D.neg}, the following property of allowed
deformations immediately follows:
\begin{theorem}
  \label{thm:v.D.neg.bis}
  Suppose that $v(w_R)$ specifies an allowed deformation. Then $v(w_S)\cdot D_j
  \geq 0$, whenever $A_j(w_S)=0$, for every $j$ and every real $w_S$ in the
  integration range.
\end{theorem}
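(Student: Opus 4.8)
The plan is to read off Thm.\ \ref{thm:v.D.neg.bis} as the contrapositive of Thm.\ \ref{thm:v.D.neg}, applied pointwise. Fix an allowed deformation specified by $v(w_R)$, pick any real $w_S$ in the integration range and any index $j$ with $A_j(w_S)=0$, and write $v_S = v(w_S)$ and $D_j = \partial A_j(w_S)$. The entire goal is then to exclude the possibility $D_j(v_S)<0$, from which $v(w_S)\cdot D_j \geq 0$ follows, and since $w_S$ and $j$ were arbitrary (subject only to $A_j(w_S)=0$), the stated conclusion holds in full generality.

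The steps I would carry out are as follows. First, note that by Defn.\ \ref{def:allowed.deform} an allowed deformation is in particular \emph{locally} allowed at every $w_S$ in the integration range, hence (by Defn.\ \ref{def:compat.deform.point}) compatible there with the denominator $A_j$ in question. Second, suppose toward a contradiction that $D_j(v_S)<0$. Then the hypotheses of Thm.\ \ref{thm:v.D.neg} are satisfied verbatim by this same deformation $v(w_R)$ and this same $j$, so that theorem forces the deformation not to be allowed. This contradicts our standing assumption that it is allowed. Third, conclude that $D_j(v_S)\geq0$, i.e.\ $v(w_S)\cdot D_j\geq0$, and quantify over all admissible $w_S$ and $j$.

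There is essentially no genuine obstacle here: all of the analytic work --- expanding $A_j\bigl(w_S+xv_S+i\lambda(v_S+\delta v)\bigr)+i\epsilon$ along the line $\delta w = x v_S$, tracing the value of $A_j$ around the semicircle $\zeta = x+i\lambda = re^{i\theta}$, $0\leq\theta\leq\pi$, and showing that a true zero of $A_j+i\epsilon$ is forced for arbitrarily small $\lambda$ and some small $\epsilon$ --- has already been discharged in the proof of Thm.\ \ref{thm:v.D.neg}. The only point that merits being stated carefully is the logical one above: that the global notion of an ``allowed deformation'' used in the hypothesis entails the local non-vanishing condition at the single point $w_S$, which is precisely what Thm.\ \ref{thm:v.D.neg} shows cannot hold when $v(w_S)\cdot D_j<0$. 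Thus the proof is a one-line reduction, and its main purpose is to package a frequently used consequence of the preceding theorem in a clean form for later use.
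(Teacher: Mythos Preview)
Your proposal is correct and matches the paper's own proof exactly: the paper states that Thm.\ \ref{thm:v.D.neg.bis} ``immediately follows'' from Thm.\ \ref{thm:v.D.neg}, which is precisely the contrapositive argument you spell out. Your additional remarks about unpacking Defns.\ \ref{def:compat.deform.point} and \ref{def:allowed.deform} to justify the pointwise application are accurate but more detailed than what the paper itself provides.
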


The remaining case for $D_j(v_S)$ is that it is zero.  One possibility is
that $v_S$ itself is zero.  In that case $A_j(w_S+i\lambda v(w_S))=A_j(w_S)=0$,
Hence
\begin{theorem}
  \label{thm:v.zero}
  Suppose that $v_s=0$.  Then the deformation does not avoid the
  singularity due to the zero of $A_j$ at $w_S$.
\end{theorem}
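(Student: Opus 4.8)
The plan is to note that the claim is immediate once one unwinds the operational definition of singularity avoidance, Definition \ref{def:sing.avoid,j.kS}. First I would write out the denominator on the deformed contour $\Gamma_\lambda$ at the particular real point $w_R = w_S$: since $w = w_R + i\lambda v(w_R)$ and $v_S = v(w_S) = 0$ by hypothesis, this point is mapped to $w = w_S + i\lambda \cdot 0 = w_S$ for \emph{every} value of $\lambda$. Hence the denominator there is $A_j(w_S) + i\epsilon = i\epsilon$, using $A_j(w_S) = 0$. Equivalently, in the notation of Sec.\ \ref{sec:pinch.to.geom}, $f_{j,w_S}(\lambda) = A_j(w_S + i\lambda v(w_S))$ is identically zero in $\lambda$, which is precisely the case flagged there as one in which the singularity has not been avoided by the contour deformation under consideration.

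Next I would compare this with what Definition \ref{def:sing.avoid,j.kS} requires in order for the deformation to avoid the $A_j$-associated singularity at $w_S$: namely that $A_j(w_R + i\lambda v(w_R)) + i\epsilon$ be non-zero for all $w_R$ in some neighborhood of $w_S$, for $0 < \lambda \leq 1$, at $\epsilon = 0$. The computation above shows that at $w_R = w_S$ --- a point contained in every neighborhood of $w_S$ --- this quantity equals $i\epsilon$, and hence vanishes when $\epsilon = 0$, for every $\lambda$ in the range $0 < \lambda \leq 1$. Therefore no neighborhood of $w_S$ can satisfy the non-vanishing condition, and the deformation does not avoid the singularity. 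This completes the argument.

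There is no genuine obstacle here; the only point requiring care is to draw the conclusion from the precise operational definition rather than from an informal mental picture, since the whole purpose of that definition is to make such statements mechanical. It is worth recording that $v_S = 0$ is one of the degenerate sub-cases of the remaining case $D_j(v_S) = 0$; the non-degenerate sub-case, with $v_S \neq 0$ but still $D_j(v_S) = 0$, is the ``anomalous deformation'' that genuinely requires the more delicate analysis developed in the rest of this section.
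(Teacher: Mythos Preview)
Your proof is correct and takes essentially the same approach as the paper: the paper's argument is the single observation (placed just before the theorem statement) that $A_j(w_S+i\lambda v(w_S))=A_j(w_S)=0$ for all $\lambda$, which is exactly your computation. Your version is simply more explicit in tracing the conclusion back through Definition~\ref{def:sing.avoid,j.kS}.
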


This leaves one situation to treat, that $D_j(v_S)=0$ but $v_S\neq0$, which we
defer to Sec.\ \ref{sec:v.D.zero}.

The difficulties in its analysis concern the possibility of a non-constant
dependence of $v(w_R)$ on $w_R$.  So it is useful to prove the simple
results that obtain if $v(w_R)$ is independent of $w_R$, at least in a
neighborhood of $w_S$.  
\begin{theorem}
  \label{thm:v.D.zero.nonzero.const}
  Suppose that $v(w_R)$ is a candidate deformation that has no dependence
  on $w_R$ near $w_S$, and that $D_j(v(w_S)) = 0$ but $A_j(w_S+i\lambda v_S)$ is
  non-zero for some (non-zero) $\lambda$.  Then the deformation is not allowed.
\end{theorem}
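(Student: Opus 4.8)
\medskip

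The plan is to exploit two features that make this case completely rigid. First, since $v$ is constant (equal to $v_S$) in a neighbourhood of $w_S$, the correction term $\delta v$ of the general analysis vanishes identically there. Second, under the standing hypotheses of Thm.\ \ref{thm:main.contour} the denominator $A_j$ is at most quadratic in $w$, so its restriction to the complex line through $w_S$ in the direction $v_S$ is an \emph{exact} quadratic polynomial, with no remainder to estimate. Concretely, I would restrict attention to real integration points $w_R=w_S+x\,v_S$ with $x$ small and, as in the proof of Thm.\ \ref{thm:v.D.neg}, introduce the complex variable $\zeta=x+i\lambda$ and set $g(\zeta)\eqdef A_j(w_S+\zeta v_S)$; note that $w_R+i\lambda v(w_R)=w_S+\zeta v_S$ since $v\equiv v_S$ near $w_S$. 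Because $A_j$ has degree at most two, $g$ is a polynomial of degree at most two in $\zeta$; its constant term is $A_j(w_S)=0$ and its linear coefficient is $\partial A_j(w_S)\cdot v_S=D_j(v_S)=0$ by hypothesis. Hence $g(\zeta)=c\,\zeta^2$ identically, where $c$ is the value on $v_S$ of the homogeneous quadratic part of $A_j$; since $A_j$ has real Taylor coefficients and $w_S$, $v_S$ are real, $c$ is a real number.

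Next I would bring in the remaining hypothesis, which says precisely that $A_j(w_S+i\lambda v_S)=g(i\lambda)=-c\lambda^2$ is nonzero for some nonzero $\lambda$, hence $c\neq0$. I would then solve $g(\zeta)+i\epsilon=c\zeta^2+i\epsilon=0$ in closed form: for each small $\lambda>0$ take $x=-\lambda$ and $\epsilon=2c\lambda^2$ when $c>0$, or $x=+\lambda$ and $\epsilon=-2c\lambda^2$ when $c<0$; in either case $\zeta=x+i\lambda$ gives $c\zeta^2=-i\epsilon$, so the denominator $A_j(w_R+i\lambda v(w_R))+i\epsilon$ is exactly zero at $w_R=w_S+x\,v_S$. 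Since $x=\mp\lambda\to0$ and $\epsilon=2|c|\lambda^2\to0$ as $\lambda\to0+$, this produces, for every prescribed neighbourhood of $w_S$ and every $\epsilon_0>0$, a point $w_R$ in that neighbourhood and a value $\epsilon\in(0,\epsilon_0]$ at which the relevant denominator vanishes, with the locus of such $(\epsilon,\lambda)$ running all the way into the origin.

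This is exactly the configuration of Fig.\ \ref{fig:eps.lam.sing}(a): a curve of denominator zeros in the $(\epsilon,\lambda)$-plane reaching the origin. By Defns.\ \ref{def:compat.deform.point} and \ref{def:allowed.deform} the deformation is therefore not compatible with $\{A_j\}$ at $w_S$, hence not locally allowed at $w_S$, hence not an allowed deformation --- which is the assertion. I do not expect a genuine obstacle here: the rigidity of the quadratic collapses the ``tour in the complex plane'' used in Thm.\ \ref{thm:v.D.neg} to an exact closed-form solution, and in fact the sign restriction (\ref{eq:restrict}) on the quadratic terms plays no role in this lemma --- it will become essential only in the genuinely delicate case $D_j(v_S)=0$ with $v$ \emph{non-constant}, deferred to Sec.\ \ref{sec:v.D.zero}. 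The only point requiring a little care is matching the quantifier structure in the definition of ``allowed'': one must exhibit such a pair $(w_R,\epsilon)$ for every candidate neighbourhood of $w_S$ and every $\epsilon_0$, which the explicit one-parameter family above does.
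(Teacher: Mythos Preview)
Your argument is internally correct, but it imports a hypothesis that the theorem does not carry: you assume $A_j$ is at most quadratic by appealing to ``the standing hypotheses of Thm.\ \ref{thm:main.contour}''. The statement of Thm.\ \ref{thm:v.D.zero.nonzero.const} has no such restriction, and the paper is explicit about this: immediately after the companion Thm.\ \ref{thm:v.D.avoid.const} it remarks that these constant-$v$ results hold \emph{without any of the extra restrictions on the denominator} appearing in Thm.\ \ref{thm:main.contour}. Moreover the paper later invokes this circle of ideas in Sec.\ \ref{sec:one-denom} for the Feynman-parametrized denominator, which is cubic. So as written you have proved only the $n=2$ special case, and your method --- an exact closed-form solve of $c\zeta^2+i\epsilon=0$ --- does not survive the passage to general analytic $A_j$, where there is a remainder to control.

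The paper's proof replaces your explicit solve by a winding argument. With $g(\zeta)=A_j(w_S+\zeta v_S)$ analytic, $g(0)=0$, $g'(0)=D_j(v_S)=0$, and $g$ not identically zero (this is exactly the hypothesis that $A_j(w_S+i\lambda v_S)\neq0$ for some $\lambda$), one has $g(\zeta)=C\zeta^n+O(\zeta^{n+1})$ for some real $C\neq0$ and some $n\geq2$. Setting $\zeta=re^{i\theta}$ with $0\leq\theta\leq\pi$, the leading term $Cr^ne^{in\theta}$ makes at least one full turn about the origin and hence crosses the negative imaginary axis; for small $r$ the $O(r^{n+1})$ correction is too small to undo this crossing. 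That produces, for every small $\lambda>0$, a nearby $w_R=w_S+xv_S$ and a small $\epsilon>0$ with $A_j+i\epsilon=0$, which is the obstruction. Your computation is precisely the $n=2$ instance of this picture; swapping the exact solve for the winding argument costs nothing and recovers the full generality the theorem claims.
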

\begin{proof}
  The non-zero value of $A_j(w_S+i\lambda v_S)$ implies that on the deformed
  contour we no longer need encounter a zero of $A_j+i\epsilon$ when we restrict
  attention to $w_R=w_S$.  To see this, first observe that the analyticity
  of $A_j(w_S+i\lambda v_S)$ as a function of $\lambda$ and its nonzero value for some
  value of $\lambda$ imply that the zero of $A_j$ at $\lambda=0$ is isolated. Then we
  can get a situation where no zero of $A_j(w_S+i\lambda v_S)+i\epsilon$ is encountered,
  for all small enough non-zero $\lambda$,

  But there are, in fact, zeros at nearby values of $w_R$, and these
  obstruct the deformation, as we now show.  Consider values $w_R =
  w_S+xv_s$, with $x$ real, so that on the deformed contour we have
  $A_j(w_S+(x+i\lambda)v_s)$.  This is an analytic function of $x+i\lambda$. The
  function is zero when $x=\lambda=0$, and by the hypothesis of the theorem is
  not zero for some values of $x+i\lambda$.

  Therefore there is a first non-zero term in the Taylor expansion:
  \begin{equation}
  \label{eq:v.D.zero.Taylor}
    A_j(w_S+(x+i\lambda)v_s) = C (x+i\lambda)^n + O\xleft( |x+i\lambda|^{n+1} \right),
  \end{equation}
  with $n\geq2$ since $v_S \cdot D_j=0$. Since $A_j$ is real for real values of
  its argument, so is $C$.  Set $x+i\lambda = r e^{i\theta}$, with $r$ positive.
  Allowed values have $0\leq\theta\leq\pi$, and any small $r$ is possible.  The value of
  $A_j$ is
  \begin{equation}
    Cr^n e^{i\theta n} + O(r^{n+1}).
  \end{equation}
  The value is real when $\theta$ is $0$ or $\pi$.  As $\theta$ is increased from $0$
  to $\pi$, the value $A_j$ must go round the origin $n/2$ times, i.e., at
  least once.  So it crosses the negative imaginary axis.  The order
  $r^{n+1}$ term from higher terms in the Taylor expansion can affect the
  position of this crossing, but do not affect its existence, at least when
  $r$ is small enough.

  Hence for small $\lambda$ we find a zero of $A_j+i\epsilon$ during the contour
  deformation, which therefore encounters a singularity, as in Fig.\
  \ref{fig:eps.lam.sing}(a).  Hence the deformation was not allowed,
  contrary to hypothesis.
\end{proof}

\begin{theorem}
  \label{thm:v.D.zero.const}
  Suppose that $v(w_R)$ is also required to be an \emph{allowed}
  deformation as well as having no dependence on $w_R$ near $w_S$, and that
  $v(w_S)\cdot D_j = 0$.  Then $A(w_S+i\lambda v_S)=0$ for all $\lambda$, and so the
  singularity due to the zero in $A_j$ is not avoided.
\end{theorem}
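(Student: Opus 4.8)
The plan is to argue from the standard dichotomy for zeros of an analytic function of one variable — already recorded at the end of Sec.\ \ref{sec:pinch.to.geom} — and to eliminate one branch of the dichotomy by appealing to Thm.\ \ref{thm:v.D.zero.nonzero.const}. Set $f_{j,w_S}(\lambda) = A_j(w_S + i\lambda v_S)$. Because $A_j(w)$ is analytic in $w$ and $w_S + i\lambda v_S$ depends affinely (hence analytically) on $\lambda$ regardless of whether $v$ is analytic in $w_R$, the function $f_{j,w_S}$ is analytic in the single variable $\lambda$, and since $A_j(w_S)=0$ we have $f_{j,w_S}(0)=0$. Therefore either $\lambda=0$ is an isolated zero of $f_{j,w_S}$, or $f_{j,w_S}\equiv 0$.

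First I would dispose of the isolated-zero case. If $\lambda=0$ were an isolated zero, then $f_{j,w_S}(\lambda)$ would be non-zero for some (indeed for all sufficiently small non-zero) $\lambda$. Since by hypothesis $v(w_R)$ has no dependence on $w_R$ near $w_S$ and $v(w_S)\cdot D_j = 0$, the assumptions of Thm.\ \ref{thm:v.D.zero.nonzero.const} are met, so that theorem forces the deformation not to be allowed. This contradicts the standing hypothesis that $v$ specifies an allowed deformation. Hence the isolated-zero case cannot occur, and we are left with $f_{j,w_S}\equiv 0$, i.e., $A_j(w_S + i\lambda v_S)=0$ for all $\lambda$ — which is exactly the first assertion of the theorem. (The degenerate sub-case $v_S=0$ is automatically included: then $f_{j,w_S}(\lambda)=A_j(w_S)=0$ for all $\lambda$ trivially; this sub-case is also the content of Thm.\ \ref{thm:v.zero}.)

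It remains to observe that this forces the singularity not to be avoided. On the deformed contour, at the particular real point $w_R=w_S$, the denominator factor equals $A_j(w_S+i\lambda v_S)+i\epsilon = i\epsilon$, which vanishes precisely when $\epsilon=0$ — and it does so for every $\lambda$ in $[0,1]$, not merely at $\lambda=0$. This is the configuration of Fig.\ \ref{fig:eps.lam.sing}(c): a line of integrand singularities running along $\epsilon=0$, $\lambda>0$, all the way down to the origin. By Defn.\ \ref{def:sing.avoid,j.kS}, avoidance of the $A_j$-associated singularity at $w_S$ would require $A_j+i\epsilon$ to be non-zero at $\epsilon=0$ for $0<\lambda\le1$ and $w_R$ in a neighborhood of $w_S$; since it is in fact zero there already at $w_R=w_S$, the deformation does not avoid the singularity.

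There is no genuinely hard step here: the statement is a short corollary of Thm.\ \ref{thm:v.D.zero.nonzero.const} combined with the analytic dichotomy. The only point deserving a little care is confirming that $f_{j,w_S}$ is analytic in $\lambda$ with $w_S$ held fixed (so that the two-way dichotomy genuinely applies), and that one then reads off ``not avoided'' from the right clause of Defn.\ \ref{def:sing.avoid,j.kS} rather than from the clause about allowedness — the deformation here is, after all, allowed; what fails is avoidance at $\epsilon=0$.
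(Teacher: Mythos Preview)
Your proof is correct and is exactly the unpacking of what the paper states in one line: ``This is an immediate consequence of Thm.\ \ref{thm:v.D.zero.nonzero.const}.'' You have simply made explicit the contrapositive (allowed $\Rightarrow$ $A_j(w_S+i\lambda v_S)$ cannot be nonzero for any $\lambda$) and then read off non-avoidance from Defn.\ \ref{def:sing.avoid,j.kS}, which is the intended reasoning.
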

\begin{proof}
  This is an immediate consequence of Thm.\
  \ref{thm:v.D.zero.nonzero.const}:
\end{proof}

\begin{theorem}
  \label{thm:v.D.avoid.const}
  Suppose that $v(w_R)$ is required to be an \emph{allowed} deformation as
  well as having no dependence on $w_R$ near $w_S$, and that $A_j(w_S)=0$
  for one or more $A_j$, where $w_S$ is real.  Then the singularity due to
  the zero in $A_j$ is avoided if and only if $v(w_S)\cdot D_j$ is strictly
  positive, i.e., $D_j(v(w_S)) > 0$, for every one of the zero
  denominators. 
\end{theorem}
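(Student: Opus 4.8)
The plan is to read the theorem off as a short case analysis assembled from the results already proved for deformations that are constant near $w_S$. Write $v_S = v(w_S)$ and $D_j = \partial A_j(w_S)$, and let $\mathcal{A}$ be the set of denominators with $A_j(w_S) = 0$; by hypothesis $v(w_R)$ is an allowed deformation and equals $v_S$ throughout a neighborhood of $w_S$.

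For the ``if'' direction, suppose $D_j(v_S) > 0$ for every $A_j \in \mathcal{A}$. Because the deformation is allowed, Thm.\ \ref{thm:v.D.pos} applies to each such $j$ and says that the deformation avoids the singularity at $w_S$ due to the zero of that $A_j$; since the theorem concerns precisely those denominators that vanish at $w_S$, this is the claim. (The denominators with $A_j(w_S) \neq 0$ are non-zero throughout a neighborhood of $w_S$ in complex $w$-space, and so, possibly after harmlessly scaling $v$ down, give no obstruction there; hence one also obtains ``avoids any singularity at $w_S$'' in the sense of Defn.\ \ref{def:sing.avoid,kS}.)

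For the converse, suppose the deformation avoids the singularity at $w_S$ coming from every $A_j \in \mathcal{A}$, and fix one such $A_j$. Allowedness together with Thm.\ \ref{thm:v.D.neg.bis} already gives $D_j(v_S) \geq 0$, so it remains only to rule out $D_j(v_S) = 0$. First, if $v_S = 0$ then $A_j(w_S + i\lambda v_S) = A_j(w_S) = 0$ for every $\lambda$, and Thm.\ \ref{thm:v.zero} says the singularity is then not avoided, contradicting the hypothesis; hence $v_S \neq 0$. But with $v_S \neq 0$, the deformation allowed and constant near $w_S$, and $v_S \cdot D_j = 0$, Thm.\ \ref{thm:v.D.zero.const} forces $A_j(w_S + i\lambda v_S) = 0$ for all $\lambda$, so once more the singularity from $A_j$ is not avoided --- a contradiction. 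Therefore $D_j(v_S) > 0$ for every $A_j \in \mathcal{A}$, and the theorem follows.

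There is no real obstacle left in this statement itself: the substantive work was done in Thms.\ \ref{thm:v.D.neg} and \ref{thm:v.D.zero.nonzero.const} (and their corollaries \ref{thm:v.D.neg.bis} and \ref{thm:v.D.zero.const}), and what remains is just to glue those together. The only points that deserve a moment's care are the separate disposal of the degenerate case $v_S = 0$ before Thm.\ \ref{thm:v.D.zero.const} can be invoked, and the observation that denominators which are non-zero at $w_S$ play no local role, so that the clean per-denominator statement is equivalent to the global one.
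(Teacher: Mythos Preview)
Your proposal is correct and takes essentially the same approach as the paper: the paper's proof is a single sentence stating that the result ``follows directly from the application of the last few theorems proved so far to multiple denominators, together with the results of Sec.\ \ref{sec:elem.parts}'', and you have simply spelled out which theorems (\ref{thm:v.D.pos}, \ref{thm:v.D.neg.bis}, \ref{thm:v.zero}, \ref{thm:v.D.zero.const}) and how they combine. Your separate treatment of the case $v_S=0$ before invoking Thm.\ \ref{thm:v.D.zero.const} is slightly more cautious than strictly necessary, since that theorem already covers it, but this does no harm.
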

\begin{proof}
  This follows directly from the application of the last few theorems
  proved so far to multiple denominators, together with the results of
  Sec.\ \ref{sec:elem.parts}.
\end{proof}

This last theorem is Thm.\ \ref{thm:pinch.to.1st.order.shift} with a
restriction on the $w_R$ dependence of $v(w_R)$, but without any of the
extra restrictions on the denominator that appear in the statement of Thm.\
\ref{thm:main.contour}.
  
Combined with Thm.\ \ref{thm:main.geom}, it gives our primary Theorem
\ref{thm:main.contour} under the same conditions.

\subsection{Analysis of neighborhood of singularity of integrand}
\label{sec:v.D.zero}

Consider a point $w_S$ in an integral where some denominators are zero.

First consider the case that there is a vector $v_S$ such that $D_j(v_S)>0$
for the derivatives of all the zero denominators.  Then we choose a contour
deformation function which at $w_S$ is equal to $v_S$ (or proportional to
it with a positive coefficient).  Then we saw in Sec.\ \ref{sec:elem.parts}
that the deformation avoids the singularity at $w_S$.  This gives part of
the result stated in Thm.\ \ref{thm:pinch.to.1st.order.shift}.

To complete the proof of Thm.\ \ref{thm:pinch.to.1st.order.shift} (and
hence of Thm.\ \ref{thm:main.contour}) we now show that if no such $v_S$
exists, then the integration is trapped at $w_S$, i.e., that no allowed
deformation avoids the integrand's singularity at $w_S$.

We start by assuming that we have an allowed deformation, given by
$v(w_R)$, and that there is no $v_S$ such that $D_j(v_S)$ is strictly
greater than zero for all those $D_j$ that correspond to zero denominators.
We will obtain constraints that $v(w_R)$ must obey, and hence show that in
all cases the deformation does not avoid the singularity, thereby
completing the proof of the theorem.  As given in the statement of the
theorem, we will restrict attention to denominators that are at most
quadratic in the integration variable, and the reason for the remaining
restriction in the statement of the theorem will emerge in the course of
making the proof.

To simplify the notation, we shift the integration variable so that
$w_S=0$.  We define $v_0=v(0)$, the deformation at the singular point being
examined.

Since we cannot make all the relevant $D_j(v_0)$ positive, Thm.\
\ref{thm:main.geom} shows that the array of $D_j$s has a Landau point, i.e.,
there are values $\alpha_j$ such that
\begin{equation}
\label{eq:Landau.pt.deriv}
  \sum_j \alpha_j D_j = 0,
\end{equation}
with all the $\alpha_j$s being non-negative and at least one being positive. The
denominators for which $\alpha_j=0$ will play no role in the proof, and so we
now focus attention on only those values of $j$ with nonzero $\alpha_j$.  With
this focus, the denominators $A_j(w)$ in the retained set are zero at
$w=0$ and have strictly positive $\alpha_j$ in Eq.\
(\ref{eq:Landau.pt.deriv}). 

Now for an allowed deformation, $D_j(v_0)\geq0$.  From Eq.\
(\ref{eq:Landau.pt.deriv}), $\sum_j \alpha_j D_j(v_0) = 0$.  So strict positivity
of the $\alpha_j$ implies that each $D_j(v_0)$ is actually zero.

We expand each denominator in powers of $w$:
\begin{align}
\label{eq:denom.series}
  A_j(w) &= \sum_a D_{j,a} w^a + \frac12 \sum_{a,b} w^a E_{j,ab} w^b
\nonumber\\
       &= D_j \cdot w + \frac12 w \cdot E_j \cdot w,
\end{align}
given that the $A_j$ are at most quadratic in the integration variables.

On the deformed contour $w=w_R+i\lambda v(w_R)$, as usual.  
For the particular case of $w_R=0$, a denominator on the deformed contour
is 
\begin{equation}
\label{eq:denom.0}
  A_j(i\lambda v_0) = i\lambda D_j \cdot v_0 - \frac{\lambda^2}{2} v_0 \cdot E_j \cdot v_0 = -\frac{\lambda^2}{2} v_0 \cdot E_j \cdot v_0.
\end{equation}
If $v_0 \cdot E_j \cdot v_0$ is zero for at least one $j$, then we have a zero of
$A_j$ on the deformed contour, so that the deformation has not avoided the
singularity.  Then we need go no further for proving the target result.

So we now restrict attention to the case that $v_0\cdot E_j \cdot v_0$ is nonzero
for all of the attended denominators.  We now examine the denominators near
the origin, to search for possible zeros and to determine whether or not
they are avoided.  Define $\delta v$ by
\begin{equation}
  v(w_R) = v_0 + \delta v(w_R),
\end{equation}
so that $\delta v(w_R)$ goes to zero as $w_R$ goes to zero, i.e., $\delta(w_R) =
o(1)$ in this limit.  Then
\begin{align}
  A_j(w_R+i\lambda v(w_R)) 
\hspace*{-20mm}&
\nonumber\\
     = {}& D_j \cdot w_R + \frac12 w_R \cdot E_j \cdot w_R - \frac{\lambda^2}{2} v_0 \cdot E_j \cdot v_0
\nonumber\\ &
                       - \lambda^2 v_0 \cdot E_j \cdot \delta v
                       - \frac{\lambda^2}{2} \delta v \cdot E_j \cdot \delta v
\nonumber\\ &
      + i\lambda \left[ D_j \cdot\delta v + w_R \cdot E_j \cdot v_0 + w_R \cdot E_j \cdot \delta v  \right],
\end{align}
where the first two lines give the real part and the last line gives the
imaginary part.

We now search for possible obstructions to the contour deformation, i.e.,
zeros of $A_j+i\epsilon$ for small $w_R$, $\lambda$ and $\epsilon$.  Avoiding these will give
constraints on the functional form of $\delta v(w_R)$.  We do this by choosing a
small value of $w_R$, finding a value of $\lambda$ for which the real part of
$A_j+i\epsilon$ is zero, and then investigating the imaginary part.  A zero of the
real part is obtained by setting $\lambda=\lambda(w_R)$, where
\begin{equation}
  \lambda(w_R)
 =
 \sqrt{
 \frac{ 2 D_j \cdot w_R + w_R \cdot E_j \cdot w_R }
      { v_0 \cdot E_j \cdot v_0 + 2 v_0 \cdot E_j \cdot \delta v + \delta v \cdot E_j \cdot \delta v }
  },
\end{equation}
provided that the argument of the square root is positive.

Now let us consider the particular case that $w_R$ is in the direction
$v_0$ and set $w_R = xv_0$. Then
\begin{align}
  \lambda(xv_0)
 &=
 \sqrt{
 \frac{ x^2 v_0 \cdot E_j \cdot v_0 }
      { v_0 \cdot E_j \cdot v_0 + 2 v_0 \cdot E_j \cdot \delta v + \delta v \cdot E_j \cdot \delta v }
  }
\nonumber\\
 & =
 |x| ~
 \left( 1 + \frac{ 2 v_0 \cdot E_j \cdot \delta v + \delta v \cdot E_j \cdot \delta v }{ v_0 \cdot E_j \cdot v_0}  \right)^{-1/2}.
\end{align}
Then there is a zero of the real part of $A_j$ for all small enough $x$,
both positive and negative, and the solution has $\lambda(xv_0)\simeq |x|$. There the
value of $A_j$ only arises from its imaginary part
\begin{align}
   A_j(x v_0+i\lambda(x v_0) \, v(x v_0)) 
\hspace*{-35mm}&
\nonumber\\
      &= i \lambda(x v_0) \left[ D_j \cdot\delta v + xv_0 \cdot E_j \cdot v_0 + xv_0 \cdot E_j \cdot \delta v  \right]
\nonumber\\
      &= \lambda(x v_0) \left[ D_j\cdot \delta v + x v_0 \cdot E_j \cdot v_0 + o(x)  \right].
\end{align}
If at any point we were to get a negative imaginary part for all small $x$,
then a zero of $A_j+i\epsilon$ would be encountered in the contour deformation, so
that the deformation would not be allowed.  We therefore ask what
constraints apply to $\delta v(w_R)$ to avoid such a negative imaginary part.

For the deformation to be allowed, we must have
\begin{equation}
\label{eq:Im.Aj}
  D_j \cdot\delta v(x v_0) + x v_0 \cdot E_j \cdot v_0 + x v_0 \cdot E_j \cdot \delta v(x v_0) \geq 0
\end{equation}
for all small $x$.

First notice that $x$ can have either sign, and that when $x$ has the
opposite sign to $v_0 \cdot E_j \cdot v_0$, the term $x v_0 \cdot E_j \cdot v_0$ is
negative.  The $o(x)$ term is strictly smaller (in the limit $x\to0$). 

There are two cases to consider, according to whether $D_j$ itself is zero
or not.

If $D_j$ is zero, then $D_j\cdot v=0$, and the negative term $x v_0 \cdot E_j \cdot
v_0$ dominates; we have a negative imaginary part, and the contour
deformation is not allowed, contrary to our initial assumption.

Therefore $D_j$ must be nonzero. Then it is conceivable that the $D_j\delta v$
term compensates the negativity of $x v_0 \cdot E_j \cdot v_0$.

As announced in the statement of the theorem, we now restrict\footnote{It
  would be desirable to make a proof without this restriction, but it would
  require a harder proof beyond the scope of this paper.}  attention only
to cases with the property that all non-zero $v_0 \cdot E_j \cdot v_0$ have the
same sign.  Thus
\begin{align}
  \label{eq:restrict}
  \mbox{Either for all ``relevant'' $j$, $v_0 \cdot E_j \cdot v_0 \geq 0$;}
  \nonumber\\
  \mbox{or for all ``relevant'' $j$, $v_0 \cdot E_j \cdot v_0 \leq 0$},
\end{align}
where a ``relevant'' $j$ is one for which $\alpha_j$ is non-zero in Eq.\
(\ref{eq:Landau.pt.deriv}), and for which $A_j$ is zero, $D_j$ is nonzero,
and $D_j\cdot v_0=0$, both at the point of integration space under
consideration.  As already mentioned, if $v_0 \cdot E_j \cdot v_0$ is zero for at
least one relevant $j$, then the contour is definitely trapped, and we only
now examine the case where all the $v_0 \cdot E_j \cdot v_0$ are nonzero.

The restriction is obeyed for standard applications to Feynman graphs and
certain generalizations. To see this, observe that the standard Feynman
denominator for a line of a Feynman graph has the form $A_j=k^2-m^2$, where
$k$ is the line's momentum.  It is zero when $k^2=m^2$.  Let the projection
of an allowed deformation onto the momentum of the line be $\hat{v}_0$.  We
then have $D_j\cdot v_0 = 2k\cdot \hat{v}_0$ and $\frac12 v_0\cdot E_j \cdot v_0 =
\hat{v}_0 \cdot \hat{v}_0$.  For a massive line (i.e., $m\neq0$), all deformations
that obey $D_j\cdot v_0=0$ must have a space-like (or zero) $\hat{v}_0$, and
hence $v_0\cdot E_j \cdot v_0\leq0$.  In the massless case with $k^2=0$ and $k$
nonzero, $\hat{v}_0$ is either space-like or null (or zero), and again
$v_0\cdot E_j \cdot v_0\leq0$.  In the massless case with $k=0$, i.e., a soft line,
$D_j=2k=0$, so the denominator is not one of the relevant ones in Eq.\
(\ref{eq:restrict}). Another important case is of a Wilson line, for which
the denominator is simply linear: $A_j=k\cdot n$ for some vector $n$, and hence
$E_j$ itself is zero.  A non-relativistic propagator, with denominator
$E-\3p^2/2m$ has the for the quadratic term as in the massive relativistic
case.  One other case that can be met in QCD is an approximation where a
longitudinal light-front component of momentum is set to zero, but
transverse momenta are preserved.  Then the quadratic terms involve only
transverse momentum, and the quadratic terms obey the same sign condition
as for an unapproximated denominator.

Hence in all of these cases, the restriction (\ref{eq:restrict}) is
obeyed. 

Given this restriction (independently now of which sign occurs), the second
term in (\ref{eq:Im.Aj}) is negative when we give $x$ the opposite sign to
$v_0\cdot E_j \cdot v_0$.  Most importantly, the same value of $x$ can be used for
all the relevant denominators.  The third term in the imaginary part is
always smaller when the size of $x$ is small enough.  So the only hope for
getting a non-negative imaginary part is for the first term, $D_j \cdot\delta v(x
v_0)$, to compensate by being sufficiently positive.  Since $\delta v$ is zero
when $x$ is zero, this compensation relies on $x$-dependence in $v(x v_0)$,
and hence on $w_R$ dependence in $v(w_R)$.

In the present case, there is a Landau point, so that $\sum_j \alpha_j D_j \cdot\delta
v(xv_0)=0$. Hence at least one $D_j \cdot\delta v(xv_0)$ is not positive.  Hence,
for at least one $j$, the first term in (\ref{eq:Im.Aj}) cannot compensate
the negative value of the sum of the second and third terms.  Then there is
a zero in $A_j$ that causes an obstruction to the contour deformation, and
the proposed deformation would not be allowed.  

We have now covered all the cases, so that given the existence of a Landau
point, we have shown that all allowed contour deformations fail to avoid
the singularity.  This completes the proof of Thm.\
\ref{thm:pinch.to.1st.order.shift} and hence of Thm.\
\ref{thm:main.contour}.  Notice how we used the existence of a Landau
point, which was shown by a use of the geometrical Thm.\
\ref{thm:main.geom}.

We now revisit the rationale for extra restriction (\ref{eq:restrict}).  If
the restriction were not obeyed, then there would be at least one positive
and one negative $v_0 \cdot E_j \cdot v_0$.  The negative values of the second term
in in (\ref{eq:Im.Aj}) would occur for opposite signs of $x$ for different
denominators. Hence an appeal to $\sum_j \alpha_j D_j \cdot \delta v(xv_0) = 0$ would not be
sufficient to rule out a compensation of the negative terms by some choice
of $\delta v(xv_0)$.  A better argument would be needed, but I have not found
one that is watertight.

\subsection{Anomalous deformations}

All but the very last part of the derivation in the previous subsection
gives a strategy for finding examples like that in App.\
\ref{sec:2D.first.order}, where a singularity is avoided by a deformation
that has zero first-order shifts at the singular point(s).  Let us call
such a deformation an ``anomalous deformation'', formally defined by:
\begin{definition}
  \label{def:anom.def}
  An \emph{anomalous deformation} means an allowed deformation that avoids
  the singularity due to a zero of one or more denominators $A_j$, but
  where the first-order imaginary part is zero.
\end{definition}

What we did in the previous section, was to exclude the possibility that
when the Landau condition is obeyed an anomalous deformation could exist
that avoids singularities due to all of the denominators.  But the proof
relied on the extra restrictions on the denominators stated in Thm.\
\ref{thm:main.contour}.

If, in contrast, there is no Landau point, then we can find a vector that
gives positive first-order shifts in the denominators.  Hence, in this
situation of no Landau point, given the existence of an anomalous
deformation we can find another that is not anomalous and is still
singularity-avoiding.

\subsection{Patching local deformations to global}

The arguments in the preceding sections as to whether or not an integrand's
singularity can be avoided by a contour deformation were local.  That is,
the arguments were applied at each position $w_S$ where there a
singularity, and they involved determining (a) which directions of
deformation at $w_S$ are compatible with the integrand's singularities, and
most importantly (b) which directions avoid a singularity.

The question now arises as to whether such locally determined directions
can be globally patched together consistently, so as to give a contour
deformation $v(w_R)$ for all $w_R$ that has one of the determined
directions at each point of singularity of the integrand, and that can be
implemented without some kind of discontinuity.

As an indication of possible issues, the example of determining normal
directions to a M\"obius strip comes to mind.  This is a situation in which
the global topology of a surface prevents global patching of locally
determined vectors.  But the present situation is different.  At each point
on the initial integration contour, properties of the denominators
determine a manifolds of directions of singularity avoiding deformations
(and similarly for allowed deformations that don't avoid singularities).
The boundary of the manifold of possible directions depends continuously on
position in the manifold, and the denominators are single-valued functions
of position.  So we can steer the deformation to stay within the allowed
manifold.  Of course, at some parts of the original contour we may find
that no deformation is avoids singularities.

If we take a tour of the initial integration contour going from some
initial point back to the same point, then we have the same restrictions on
the direction of deformation at the start and end, and no inconsistency.

This is an extremely simple-minded argument, and undoubtedly too weak to be
fully persuasive.  An improved argument would be useful.

Of course, if we changed our integral to one in which a denominator $A_j$
had a branch cut on the initial integration contour, then the situation
would be different.  But that is not the case for the integrals that we
consider in this paper.  Now there are allowed to be non-integer exponents
in Eq.\ (\ref{eq:integral}), so that the integrand itself can have branch
points and cut(s) that are on the initial integration contour. But that
does not affect the possible directions of deformation, which are all
determined by the denominator functions themselves, $A_j$, which we require
always to be analytic and single valued.

\subsection{Case of one denominator}
\label{sec:one-denom}

We now examine the situation when there is only one denominator.  This is a
common special case, because it occurs when Feynman parameters are used.
Its analysis has some special features compared with the case of multiple
denominators, so it is useful to treat this case specially.  In particular,
we will understand explicitly why Coleman and Norton \cite{Coleman:1965xm}
needed to put the restriction on their proof, that the matrix of second
derivatives of the denominator has no zero eigenvalues.

Let the denominator be $A(w)$.  The Landau criterion for a putative pinch
at some point $w_S$ is simply that the denominator and its first derivative
$D(w)\eqdef \diff{A(w)}/\diff{w}$ are zero at $w_S$.  The aim is to show,
if possible, that the contour of integration is trapped at that point.  Of
course, given the zero derivative, any deformation that avoids the
singularity has a zero first-order shift in the denominator, and hence is
anomalous.  If we don't succeed in excluding the possibility of an
anomalous deformation, then at least we can strongly constraint its
properties and those of the denominator.  Of course, if $A(w_s)=0$ but the
derivative were non-zero, then we can certainly avoid the singularity at
$w_S$ by a deformation $w\mapsto w_R + i\lambda v(w_S)$ with $D(w_S)\cdot v(w_S)>0$.  So
the case of a zero derivative is the only one to examine further.

If the denominator is quadratic in the integration variables, then the
restrictions in Thm.\ \ref{thm:main.contour} are obeyed, and the derivation
in previous sections is valid.  But the denominator from applying the
Feynman parameter method to a standard Feynman graph is cubic if the
momentum integrals are not performed.

If the momentum integrals are performed, as can be done analytically for
standard Feynman graphs, then the denominator is a polynomial of order one
plus the number of loops.  It can therefore be of arbitrarily high order.
But as we have already observed, a pinch in momentum space does not always
entail a pinch in parameter space, so this case isn't so generally useful.

Given the significance of the Feynman parameter representation of a Feynman
graph before the momentum integrals are performed, we will restrict
attention to the case that the denominator is at most cubic in the
integration variables. As before, given a point $w_S$ where the denominator
and its derivative are zero, we simplify the notation by shifting variables
so that $w_S=0$.  Then the denominator has the form:
\begin{align}
\label{eq:one.denom}
  A(w) &= \frac12 \sum_{ab} w^a E_{ab} w^b + \frac16 \sum_{abc} w^aw^bw^c F_{abc}
\nonumber\\
       &= \frac12 w \cdot E \cdot w + \frac16 w w w \cdot F.
\end{align}
where each of the arrays $E$ and $F$ is symmetric in its indices.
At certain points it will be useful to follow Coleman and Norton, and
diagonalize $E$ by a change of variable, to write
\begin{equation}
  \label{eq:E.diag}
  w \cdot E \cdot w = \sum_j c_j \eta_j^2,
\end{equation}
with each $\eta_j$ being a linear combination of $w^a$s.  By rescaling the
$\eta_a$, we can arrange that each non-zero $c_a$ has absolute value unity.
Thus without loss of generality, we can arrange that each $c_a$ is either
$+1$, $-1$ or $0$.

There are several different cases to consider, so it is convenient to
encapsulate in a lemma each of the separate cases, as well as several
subsidiary results.  The first lemma is elementary:
\begin{lemma}
  For a contour deformation $w = w_R +i\lambda v(w_R)$ to avoid the singularity at
  caused by the denominator (\ref{eq:one.denom}) at $w=0$, it is necessary
  (but not sufficient, as we will see), for $v_0\cdot E\cdot v_0$ or $v_0v_0v_0\cdot F$
  (or both) to be nonzero.  Here $v_0=v(0)$.

  Conversely, if both of $v_0\cdot E\cdot v_0$ and $v_0v_0v_0\cdot F$ are zero, the
  singularity is not avoided.
\end{lemma}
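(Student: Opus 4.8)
The plan is to reduce the whole question to what happens on the single line $w_R=0$ in the real integration variable. On that line the deformed point is simply $w=i\lambda v_0$ with $v_0=v(0)$, and since the denominator (\ref{eq:one.denom}) has only a quadratic and a cubic term (recall $w_S=0$ is the candidate pinch, so $A$ and its first derivative vanish there), substitution gives the exact identity
\[
  A(i\lambda v_0) \;=\; -\frac{\lambda^2}{2}\, v_0\cdot E\cdot v_0 \;-\; \frac{i\lambda^3}{6}\, v_0 v_0 v_0\cdot F ,
\]
where the quadratic piece acquires a factor $i^2=-1$ and the cubic piece a factor $i^3=-i$, so the first term is the real part and the second the imaginary part of $A$ along this ray.

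Next I would argue the contrapositive of the ``necessary'' clause, which is precisely the stated converse. Suppose both $v_0\cdot E\cdot v_0=0$ and $v_0v_0v_0\cdot F=0$. Then the displayed formula shows that $A(i\lambda v_0)=0$ identically in $\lambda$; in the language of the analytic function $\lambda\mapsto A(w_S+i\lambda v(w_S))$ introduced earlier, this is the degenerate alternative in which it vanishes for all $\lambda$ rather than having an isolated zero at $\lambda=0$. Hence at the point $w_R=0$ of the integration range, for every $\lambda\in(0,1]$ we have $A+i\epsilon\to 0$ as $\epsilon\to0+$, so the nonzero condition at $\epsilon=0$ in Definition \ref{def:sing.avoid,j.kS} fails and the deformation does not avoid the singularity at $w=0$; this is exactly the situation of Fig.\ \ref{fig:eps.lam.sing}(c). (If the candidate deformation is not even allowed, it again trivially fails to avoid the singularity, and the subcase $v_0=0$ is already covered by Thm.\ \ref{thm:v.zero}.) The forward ``necessary'' direction is then just the contrapositive of this.

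I do not expect any real obstacle here: the only points requiring care are (i) bookkeeping the powers of $i$ so that the quadratic and cubic contributions along the imaginary ray come out as $-\tfrac{\lambda^2}{2}v_0\cdot E\cdot v_0$ and $-\tfrac{i\lambda^3}{6}v_0v_0v_0\cdot F$, and (ii) observing that both coefficients vanishing forces $A$ to vanish along the \emph{entire} ray $w=i\lambda v_0$, not merely to leading order, which is what makes the singularity genuinely unavoidable there. No appeal to the higher-order analysis of the preceding subsections, nor to the extra sign restriction (\ref{eq:restrict}), is needed; this lemma is simply the easy base case that isolates which combinations of $E$ and $F$ could conceivably permit an anomalous (zero first-order shift) deformation, the sufficiency question being treated later.
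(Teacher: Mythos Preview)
Your proof is correct and takes essentially the same approach as the paper: evaluate $A$ at $w_R=0$ on the deformed contour and observe that if both $v_0\cdot E\cdot v_0$ and $v_0v_0v_0\cdot F$ vanish then $A(i\lambda v_0)\equiv0$, so the singularity is not avoided. The paper's own proof is a one-line remark to this effect; your explicit computation of the powers of $i$ and the appeal to Definition~\ref{def:sing.avoid,j.kS} merely spell out the same observation.
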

\begin{proof}
  The trivial proof is to observe that $A(w_R+i\lambda v(w_R)$ needs to be nonzero
  at $w_R=0$ if the singularity is to be avoided.
\end{proof}

\begin{lemma}
  For an allowed deformation, $v_0\cdot E\cdot v_0$ must be zero. 
\end{lemma}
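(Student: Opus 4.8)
The plan is to argue by contradiction: assuming $v_0\cdot E\cdot v_0\neq0$, I will show that the deformation cannot be allowed, by exactly the mechanism used for the quadratic case in Sec.\ \ref{sec:v.D.zero} and in the proofs of Thms.\ \ref{thm:v.D.neg} and \ref{thm:v.D.zero.nonzero.const}. In the present setting $A(0)=0$ and $D(0)=0$, so $A(w)=\frac12 w\cdot E\cdot w+\frac16 www\cdot F$ has no constant or linear term. Write $a=v_0\cdot E\cdot v_0$ and suppose $a\neq0$. I would restrict the real part of the integration variable to the line $w_R=xv_0$ with $x$ real and small, set $v(xv_0)=v_0+\delta v(x)$, and invoke continuity of the deformation function to conclude $\delta v(x)\to0$ as $x\to0$.

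On the deformed contour the denominator is $A\bigl(xv_0+i\lambda(v_0+\delta v)\bigr)$. First I would write down its real part, which, after absorbing the $\delta v$-dependent pieces and the cubic $F$-pieces into small corrections, has the form $\tfrac{x^2}{2}a-\tfrac{\lambda^2}{2}\bigl(a+o(1)\bigr)+O(x^3)+O(\lambda^2 x)$, so that the equation $\Re A=0$ reads $\lambda^2\bigl(a+o(1)\bigr)=x^2 a+O(x^3)$; since $a\neq0$ this has a positive root $\lambda(x)=|x|\bigl(1+o(1)\bigr)$ for all sufficiently small $x$. Substituting this $\lambda(x)$ into the imaginary part, which equals $\lambda\bigl[x\,v_0\cdot E\cdot v_0+O(x^2)\bigr]-\tfrac{\lambda^3}{6}v_0v_0v_0\cdot F+(\text{$\delta v$ terms})$, every term except $\lambda x a$ is $o(x^2)$ or $O(|x|^3)$, so $\Im A=x|x|\,a+o(x^2)$. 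Choosing $x$ with sign opposite to that of $a$ makes $\Im A$ strictly negative for small enough $|x|$.

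Thus at the point $w=xv_0+i\lambda(x)v(xv_0)$ on the contour $\Gamma_{\lambda(x)}$ we have $\Re A=0$ and $\Im A=-c$ with $c>0$, hence $A+i\epsilon$ vanishes there for $\epsilon=c$. Since $c=x^2|a|\bigl(1+o(1)\bigr)$ and $\lambda(x)$ both tend to $0$ as $x\to0$, this is precisely the disallowed configuration of Fig.\ \ref{fig:eps.lam.sing}(a): given any proposed $\epsilon_0>0$ in Defn.\ \ref{def:compat.deform.point}, one finds an $x$ with $0<c<\epsilon_0$ and $0<\lambda(x)<1$ at which the integrand is singular. Hence the deformation is not allowed, a contradiction, and therefore $v_0\cdot E\cdot v_0=0$.

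I expect the only delicate point to be the order-counting in the middle paragraph. Because $\delta v(x)$ is an uncontrolled function that merely tends to zero --- it need not be $O(x)$ --- one must check that the terms it generates in both the real and imaginary parts are genuinely of lower order than the $a$-terms, and that the newly present cubic contributions from $F$ (absent in the quadratic analysis) are subleading. Since the latter are $O(|x|^3)$ while the real-part solution has $\lambda(x)\sim|x|$, they cause no trouble; this is exactly why the single-denominator cubic case reduces to the single-denominator quadratic case with $D=0$.
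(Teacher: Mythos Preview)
Your proposal is correct and follows essentially the same approach as the paper: argue by contradiction, restrict to $w_R=xv_0$, solve $\Re A=0$ to get $\lambda\simeq|x|$, and then observe that the leading term $x\,v_0\cdot E\cdot v_0$ in the imaginary part has no compensating $D\cdot\delta v$ term (since $D=0$), so choosing $x$ of the opposite sign to $a$ forces $\Im A<0$ and disallows the deformation. The paper phrases this as a minor modification of the argument leading to Eq.~(\ref{eq:Im.Aj}) and simply remarks that the cubic term does not affect the conclusion for small $\lambda$ and $x$; your version spells out the order counting for the $F$-terms and the $\delta v$-terms more explicitly, which is fine.
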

\begin{proof}
  The proof is a minor modification of the argument leading to Eq.\
  (\ref{eq:Im.Aj}).  If $v_0\cdot E\cdot v_0$ were nonzero, then we would have a
  zero of the real part of $A$ with $\lambda$ close to $|x|$.  The higher-than
  quadratic terms that we now have do not affect that result.  But in Eq.\
  (\ref{eq:Im.Aj}) we now longer have the single-derivative term, which is
  therefore not available to compensate the negative value of $xv_0\cdot E\cdot
  v_0$ that occurs when $x$ has the opposite sign to $v_0\cdot E\cdot v_0$.  The
  cubic term does not affect that for small $\lambda$ (and $x$).  So the
  constraint Eq.\ (\ref{eq:Im.Aj}) for an allowed deformation cannot be
  obeyed. 

  This leaves only the case of zero $v_0\cdot E\cdot v_0$ for an allowed deformation.
\end{proof}

\begin{lemma}
  For an allowed deformation, $v_0$ must be an eigenvector of $E$ with
  eigenvalue 0.
\end{lemma}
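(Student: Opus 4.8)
The plan is to argue by contradiction. Suppose the deformation $v(w_R)$ is allowed, write $v_0=v(0)$, and suppose, toward a contradiction, that $Ev_0\neq0$. (If $v_0=0$ there is nothing to prove, and that degenerate case is anyway covered by Thm.~\ref{thm:v.zero}, so I may take $v_0\neq0$.) From the preceding lemma I already have $v_0\cdot E\cdot v_0=0$, so since $Ev_0\neq0$ there is a vector $u$ with $c\eqdef v_0\cdot E\cdot u\neq0$, and $u$ is necessarily not parallel to $v_0$ because $v_0\cdot E\cdot(tv_0)=0$. I would then produce, for each small real $s$ of a suitable sign, a real point $w_R(s)\to0$ and a value $\lambda(s)=|s|\in(0,1]$ at which $\Re A=0$ and $\Im A<0$; taking $\epsilon(s)=-\Im A(s)\to0^+$ exhibits a zero of $A+i\epsilon$ arbitrarily close to the configuration $\lambda=\epsilon=0$, which contradicts Defns.~\ref{def:compat.deform.point}--\ref{def:allowed.deform}.

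First I would expand $A$ on the deformed contour $w=w_R+i\lambda v(w_R)$ along the two-parameter family $w_R=\tau v_0+su$, writing $v(w_R)=v_0+\delta v(w_R)$ with $\delta v\to0$ as $w_R\to0$ by continuity of $v$. Using $v_0\cdot E\cdot v_0=0$ and $w_R\cdot E\cdot v_0=\tau(v_0\cdot E\cdot v_0)+sc=sc$, the real part is $\Re A=sc\,\tau+\tfrac12 s^2\,u\cdot E\cdot u-\tfrac{\lambda^2}{2}\,v\cdot E\cdot v+O(|w_R+i\lambda v|^3)$ and the imaginary part factors as $\Im A=\lambda\bigl[\,sc+w_R\cdot E\cdot\delta v+O(|w_R|^2)\,\bigr]$, the remaining $O(\cdot)$'s collecting the cubic-in-$w$ contributions. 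The essential difficulty, and in my view the main obstacle, is that because $v_0\cdot E\cdot v_0=0$ the coefficient $v\cdot E\cdot v=2v_0\cdot E\cdot\delta v+\delta v\cdot E\cdot\delta v$ of $\lambda^2$ in $\Re A$ is only $o(1)$, not bounded away from zero; so along a one-parameter slice $w_R\parallel u$ the real part of $A$ may simply never vanish (its sign can be pinned), and the naive argument breaks down. The remedy is the mixed term $sc\,\tau$ with $c\neq0$, which lets me reach $\Re A=0$ by moving in $\tau$ regardless of the sign of $v\cdot E\cdot v$.

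Concretely, I would fix a small $s\neq0$, set $\lambda=|s|$, and view $\Re A$ as a continuous function of $\tau$ on $[\tau_0-|s|,\tau_0+|s|]$ with $\tau_0\eqdef -s\,(u\cdot E\cdot u)/(2c)$: the ``main part'' equals $sc(\tau-\tau_0)$, which varies monotonically through a range of size $2|c|\,s^2$, while all other contributions ($v\cdot E\cdot v$ times $\lambda^2=s^2$, and the cubic terms) are $o(s^2)$ uniformly in $\tau$, so for $|s|$ small the intermediate value theorem yields $\tau^*(s)=\tau_0+O(|s|)$ with $\Re A=0$. At $w_R(s)=\tau^*(s)v_0+su$ one then gets $\Im A=|s|\,[\,sc+o(|s|)\,]$, so choosing the sign of $s$ so that $sc<0$ makes $\Im A<0$ for all small $|s|$, with $|\Im A|=|c|\,s^2(1+o(1))\to0$. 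Since $w_R(s)\to0$, $\lambda=|s|\to0$ and $\epsilon(s)=-\Im A(s)\to0^+$, no positive $\epsilon_0$ can make $A+i\epsilon$ nonzero in a neighborhood of $w_S$ for all $0<\epsilon\le\epsilon_0$ and $0\le\lambda\le1$; the deformation is not allowed, the contradiction sought. Hence $Ev_0=0$, i.e.\ $v_0$ is an eigenvector of $E$ with eigenvalue $0$.

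It is worth noting what is doing the work and what is not. The cubic term $F$ is harmless: it enters only at order $|s|^3$, which is why, as advertised in the previous lemma, this is a minor modification of the argument around Eq.~(\ref{eq:Im.Aj}). And because $v(w_R)$, hence $\delta v$, is only assumed continuous (not differentiable), the vanishing of $\Re A$ must be obtained from an intermediate-value argument in $\tau$ rather than from an implicit-function argument; tracking that the perturbations really are $o(s^2)$ and $o(|s|)$ uniformly in $\tau$ is the one place where care is needed.
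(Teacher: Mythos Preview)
Your proof is correct, and it takes a genuinely different route from the paper's. The paper diagonalizes $E$, writes $v_0\cdot E\cdot v_0=\sum_j c_j h_j^2=0$ (with each $c_j\in\{+1,-1,0\}$), and observes that if $v_0$ is not in $\ker E$ then both a $+1$ and a $-1$ term must be present; it then takes $w_R$ in the two-dimensional eigenspace with $c_1=+1$, $c_2=-1$, so that the leading real part is $\tfrac12(x^2-y^2)$ with an explicit zero on $|x|=|y|$, and the imaginary part $\lambda(xh_1-yh_2)$ can be made negative by sign choices. Your argument is coordinate-free: you use $Ev_0\neq0$ directly to pick $u$ with $c=v_0\cdot E\cdot u\neq0$, take the slice $w_R=\tau v_0+su$, and obtain the zero of $\Re A$ via an intermediate-value argument in $\tau$ rather than algebraically. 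What the paper's route buys is an explicit leading form for $\Re A$ with a visible zero locus; what yours buys is avoiding the diagonalization altogether and keeping the hypothesis $Ev_0\neq0$ front and center. Your bookkeeping of the $o(s^2)$ corrections (in particular the observation that $v\cdot E\cdot v=2v_0\cdot E\cdot\delta v+\delta v\cdot E\cdot\delta v=o(1)$ so $\lambda^2 v\cdot E\cdot v=o(s^2)$) is in fact more explicit than the paper's.
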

\begin{proof}
  We now use the change of variables that gives the diagonalized form for
  the quadratic term, Eq.\ (\ref{eq:E.diag}), with each $c_a$ being either
  $+1$, $-1$ or $0$.  Let $h$ be the result of applying the change of
  variables to $v_0$.  Then
  \begin{equation}
  \label{eq:vEv.diag}
    v_0\cdot E\cdot v_0 = \sum_j c_j h_j^2 = 0.
  \end{equation}

  There are two cases to consider.  One is where the only nonzero values of
  $h_j$ are with $c_j=0$.  Then $v_0$ is an eigenvector of $E$ with
  eigenvalue zero, so we are done.

  The other case is where there is at least one $j$ with both of $h_j$ and
  $c_j$ nonzero.  To get the zero value in Eq.\ (\ref{eq:vEv.diag}), there
  must be at least one positive term and one negative term.  Permute the
  labels so that the $j=1$ term is positive and the $j=2$ term is
  negative: $v_0\cdot E\cdot v_0 = h_1^2-h_2^2 + \mbox{terms from other $j$}$, with
  both of $h_1$ and $h_2$ nonzero.

  We now find a zero of the denominator that gives deformation-obstructing
  singularity in the integrand.  Choose $w_R$ to correspond to
  \begin{equation}
    \eta_j = x \delta_{j1} + y \delta_{j2}.
  \end{equation}
  \begin{widetext}
  Then the denominator is
  \begin{align}
    A(w_R(x,y) + i \lambda (v_0+\delta v))
    ={}& \frac12 x^2 -  \frac12 y^2 + O(\lambda^2\delta v^2) + O(\lambda^3)
         +i\lambda \left[ xh_1 -yh_2 + O(|x|\delta v) + O(|y|\delta v) + O(\lambda^2) \right].
  \end{align}
  \end{widetext}
  If the first two terms in the real part had no corrections, then it would
  be zero whenever $|x|=|y|$, with all combinations of signs allowed.  By
  taking $x$ to have the opposite sign to $h_1$ and $y$ to have the same
  sign as $h_2$, we get a negative value for the first two terms in the
  imaginary part.  We now choose $x$ and $y$ to be of order $\lambda$, and take
  $\lambda$ to zero.  Then the correction terms in the real part are smaller than
  the first two terms, and cause the position of the zero to move slightly,
  with the fractional change decreasing to zero as $\lambda\to0$.  The correction
  terms in the imaginary part are similarly small than the first two terms,
  and leave the imaginary part negative.  

  Hence when $\lambda$ and $\epsilon$ are decreased to zero, a zero of
  $A_j(w_R+i\lambda v(w_R))+i\epsilon$ is always encountered somewhere on the integration
  contour, and hence the deformation is obstructed by a singularity and is
  not allowed.

  Hence the case that $v_0$ is not an eigenvector of $E$ of eigenvalue zero
  is ruled out, and the lemma is established.
\end{proof}

\begin{lemma}
  \label{lemma:PSS.tangent}
  Suppose $w=0$ is part of a manifold $M$ of points satisfying the Landau
  condition, i.e., $A=0$ and $D=0$.  Then (a) any tangent vector $t$ to the
  manifold (at $w=0$) is an eigenvector of $E$ of eigenvalue zero; (b) $t t
  t \cdot F=0$; (c) hence a contour deformation whose $v_0$ is tangent to $M$
  at $w=0$ does not avoid the singularity.
\end{lemma}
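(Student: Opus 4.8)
The plan is to differentiate the equations defining $M$ along a curve through the origin and then feed the outcome into the converse half of the first lemma of Sec.\ \ref{sec:one-denom}. Since $0\in M$ and $M$ is a manifold, choose a twice-differentiable curve $s\mapsto w(s)$ lying in $M$ with $w(0)=0$ and $w'(0)=t$, so that $w(s)=st+\frac12 s^2 a+O(s^3)$ for some vector $a$. Because every point of $M$ satisfies the Landau condition for the single denominator $A$ of (\ref{eq:one.denom}), we have along the curve not only the scalar identity $A(w(s))\equiv0$ but the \emph{vector}-valued identity $D(w(s))\equiv0$, where $D(w)=\partial A(w)=E\cdot w+\frac12 w\cdot F\cdot w$.

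For part (a) I would differentiate $D(w(s))\equiv0$ once and set $s=0$. The Hessian of $A$ at the origin is just $E$, so this gives $E\cdot t=0$, i.e.\ $t$ is an eigenvector of $E$ with eigenvalue zero. This step is where the hypothesis that $0$ lies on a whole manifold of Landau points genuinely enters: differentiating the weaker scalar identity $A(w(s))\equiv0$ by itself would only yield $t\cdot E\cdot t=0$, placing $t$ on the null cone of $E$ rather than in its kernel.

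For part (b) I would instead expand $A(w(s))$ as a power series in $s$ and read off the coefficient of $s^3$. The quadratic piece $\frac12 w(s)\cdot E\cdot w(s)$ contributes $\frac12 s^3\,(t\cdot E\cdot a)$ at that order, which vanishes by part (a) since $E\cdot t=0$; the cubic piece $\frac16 w(s)w(s)w(s)\cdot F$ contributes $\frac16 s^3\,(ttt\cdot F)$. Since $A(w(s))\equiv0$ every coefficient vanishes, hence $ttt\cdot F=0$. (The same conclusion also follows by differentiating $D(w(s))\equiv0$ twice and contracting with $t$, using $a\cdot E\cdot t=0$.) Part (c) is then immediate: with $v_0=t$ we have $v_0\cdot E\cdot v_0=t\cdot E\cdot t=0$ from part (a) and $v_0v_0v_0\cdot F=ttt\cdot F=0$ from part (b), so both of the quantities in the first lemma of Sec.\ \ref{sec:one-denom} are zero, and its converse half states precisely that the singularity at $w=0$ is not avoided.

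There is no serious obstacle here; it is essentially a two-line Taylor-expansion argument once one is careful to exploit the full vector identity $D\equiv0$ along $M$, not merely $A\equiv0$. The one place calling for a moment's thought is the regularity of $M$: we need a differentiable curve through $0$ realizing the prescribed tangent, which the manifold hypothesis supplies directly. If one wanted to relax ``manifold'' to ``analytic variety'', one would have to argue a little more carefully (e.g.\ working with analytic arcs through $0$), but that refinement is not needed for the applications considered in this paper.
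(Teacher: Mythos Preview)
Your proposal is correct and follows essentially the same route as the paper: choose a curve in $M$ through the origin with tangent $t$, differentiate the vector identity $D(w(s))\equiv0$ once to get $E\cdot t=0$, and then use the vanishing of the cubic-order term to obtain $ttt\cdot F=0$. The only cosmetic difference is that for part (b) the paper differentiates $D\equiv0$ twice and contracts with $t$, whereas your primary argument reads off the $s^3$ coefficient of $A(w(s))\equiv0$; you even mention the paper's variant parenthetically, and the two are equivalent once $E\cdot t=0$ is in hand.
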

\begin{proof}
  Consider a path within the manifold $M$, starting at the origin and with
  initial direction $t$.  Let the path be parameterized by $P(x)$ where $x$
  is real, $P(0)=0$, and $P'(0)=t$, with $w'(x)$ denoting $\diff{P(x)}/
  \diff{x}$.  Then for all $x$ for which $P(t)\in M$,
  \begin{subequations}
  \begin{align}
    \label{eq:A.path.0}
    0 & = A(w(x)) = \frac12 P(x) \cdot E \cdot P(x) + \frac16 P(x)P(x)P(x) \cdot F,
  \\
    \label{eq:D.path.0}
    0 & = D(w(x)) = E \cdot P(x) + \frac12 P(x)P(x) \cdot F.
  \end{align}
  \end{subequations}
  with the second equation meaning
  \begin{align}
    0 & = D_a(w(x)) = \sum_b E_{ab} P^b(x) + \frac12 \sum_{bc} F_{abc} P^b(x)P^c(x).
  \end{align}
  
  Differentiate Eq.\ (\ref{eq:D.path.0}) with respect to $x$ to get
  \begin{equation}
    0 = \frac{\diff{D(w(x)}}{\diff{x}}
       = E \cdot P'(x) + P'(x)P(x) \cdot F.
  \end{equation}
  Set $x=0$ to get $E\cdot t=0$, i.e., $t$ is an eigenvector of $E$ of
  eigenvalue zero. 

  Now differentiate Eq.\ (\ref{eq:D.path.0}) twice with respect to $x$ to
  get
  \begin{equation}
    0 = E \cdot P''(x) + P'(x)P'(x) \cdot F + P(x)P''(x) \cdot F.
  \end{equation}
  Setting $x=0$, using $P(0)=0$ and $P'(0)=t$, and contracting with $t$
  gives
  \begin{equation}
    t t t \cdot F = 0,
  \end{equation}
  which gives item (b) in the lemma.

  Then because both $t\cdot E\cdot t$ and $ttt \cdot F$ are zero, a contour deformation
  with $v_0\propto t$ gives a zero of $A$ at $w_R=0$ on the deformed contour, so
  that the singularity is not avoided.  This proves item (c).
\end{proof}

\begin{lemma}
  For any contour deformation that avoids the singularity (necessarily an
  anomalous deformation), $v_0$ has eigenvalue zero, but is not tangent to
  any manifold such as $M$ in the previous lemma.
\end{lemma}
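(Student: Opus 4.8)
The plan is to read the statement as an essentially immediate corollary of the two preceding lemmas together with the definition chain for allowed and singularity-avoiding deformations, so that the work lies in assembling these ingredients rather than in any fresh computation. For the first half, I would note that a deformation which avoids the singularity is, by Defn.\ \ref{def:sing.avoid,j.kS}, in particular an \emph{allowed} deformation; hence the earlier lemma of this subsection --- the one stating that for an allowed deformation $v_0$ must be an eigenvector of $E$ with eigenvalue zero --- applies directly and gives $E\cdot v_0=0$. (Here $v_0\neq0$, since $v_0=0$ would fail to avoid the singularity by Thm.\ \ref{thm:v.zero}, so ``eigenvector'' is meaningful.) I would also record, as the parenthetical ``necessarily an anomalous deformation'' requires, that because $D=0$ at $w_S=0$ the first-order imaginary shift $\lambda D\cdot v_0$ vanishes identically, which is exactly the condition in Defn.\ \ref{def:anom.def}.

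For the second half I would argue by contradiction. Suppose $v_0$ were tangent at $w=0$ to a manifold $M$ on which $A=0$ and $D=0$ --- that is, $v_0$ is (a positive multiple of) the initial velocity of a differentiable path lying in $M$ and passing through the origin. Then $v_0$ is exactly a tangent vector $t$ of the kind treated in Lemma \ref{lemma:PSS.tangent}, so parts (a) and (b) of that lemma give $E\cdot v_0=0$ and $v_0v_0v_0\cdot F=0$. Consequently, on the deformed contour at $w_R=0$ we have $A(i\lambda v_0)=-\frac{\lambda^2}{2}\,v_0\cdot E\cdot v_0-\frac{i\lambda^3}{6}\,v_0v_0v_0\cdot F=0$ for every $\lambda$, so the zero of $A$ is not removed by the deformation; this is precisely part (c) of Lemma \ref{lemma:PSS.tangent}. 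That contradicts the hypothesis that the deformation avoids the singularity, and hence no manifold $M$ of Landau points through $w=0$ can have $v_0$ among its tangent vectors.

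I do not expect a genuine obstacle here: both halves are repackagings of results already established. The only points that need care are (i) unwinding the definitions so that ``avoids the singularity'' is correctly taken to entail ``allowed'', which is what licenses invoking the eigenvector lemma, and (ii) fixing the meaning of ``tangent to a manifold such as $M$'' to be exactly the hypothesis under which Lemma \ref{lemma:PSS.tangent} was proved, namely the initial velocity of a differentiable curve lying in the Landau-point set, so that part (c) of that lemma applies verbatim. Should one instead want a stronger notion (for instance membership in a tangent cone at a singular point of the Landau variety), one would need a short supplementary argument expanding $A$ and $D$ along the direction in question, but that lies outside the stated lemma.
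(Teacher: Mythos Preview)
Your proposal is correct and takes essentially the same approach as the paper, which simply says ``This is immediate from the previous two lemmas.'' You have unpacked precisely those two ingredients --- the eigenvector-of-$E$ lemma (via the fact that a singularity-avoiding deformation is in particular allowed) and part (c) of Lemma~\ref{lemma:PSS.tangent} --- and your added care about $v_0\neq0$ and the meaning of ``tangent'' is appropriate but not strictly needed beyond what the paper assumes.
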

\begin{proof}
  This is immediate from the previous two lemmas.
\end{proof}

\begin{lemma}
  \label{lemma:PSS.anomalous}
  For there to exist a contour deformation that avoids the singularity, $E$
  must have an eigenvector of eigenvalue zero that is not in the space of
  directions of manifolds of the form of $M$.
\end{lemma}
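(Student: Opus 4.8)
The plan is to read this off as a corollary of the lemmas just proved, the only real work being to make precise what ``the space of directions of manifolds of the form of $M$'' should mean. First I would invoke the lemma immediately preceding: any contour deformation that avoids the singularity at $w=0$ is necessarily anomalous, and its value $v_0=v(0)$ at the singular point is an eigenvector of $E$ of eigenvalue zero (and $v_0\neq0$, since by Thm.~\ref{thm:v.zero} the choice $v_0=0$ never avoids the singularity). So it remains only to show that $v_0$ lies outside the set of tangent directions at $w=0$ of the submanifolds of the pinch-singular surface $\{A=0,\,D=0\}$ through the origin.

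Next I would identify that set. By Lemma~\ref{lemma:PSS.tangent}(a), every tangent vector $t$ at $w=0$ of a manifold $M$ of the stated form satisfies $E\cdot t=0$, so all these directions already sit inside $\ker E$. To separate $v_0$ from them I would use the cubic term: the first lemma of this subsection says that avoidance of the singularity forces $v_0v_0v_0\cdot F$ to be nonzero, because $v_0\cdot E\cdot v_0=0$ holds automatically for an allowed deformation, whereas Lemma~\ref{lemma:PSS.tangent}(b) and~(c) show that any pinch-singular-surface tangent direction $t$ has $ttt\cdot F=0$ and hence never gives a singularity-avoiding deformation. Thus $v_0$ is distinct from every such tangent direction, and $\ker E$ therefore contains a zero-eigenvalue eigenvector lying outside the space of pinch-singular-surface directions, which is the assertion.

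The part I expect to be delicate is not the logical core --- which is essentially bookkeeping on top of the earlier lemmas --- but the geometric hygiene around the pinch-singular surface when it fails to be smooth at $w=0$ or breaks into several branches of different dimension. Because $F$ is cubic, one cannot pass directly from the pointwise condition $ttt\cdot F=0$ on tangent vectors to a linear condition on arbitrary combinations of them, so I would phrase ``the space of directions'' as the union of the tangent cones of the branches through the origin rather than as a single linear subspace, this being exactly the object to which Lemma~\ref{lemma:PSS.tangent} can be applied branch by branch. I would finish with the remark that this lemma is precisely what underlies the Coleman and Norton restriction: if $E$ has no zero eigenvalue the contour is automatically trapped, and more generally a trap can be evaded only when $\ker E$ holds a direction transverse to the pinch-singular surface, in which case the cubic term $F$ must then be examined to settle the matter.
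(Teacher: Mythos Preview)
Your proposal is correct and follows essentially the same approach as the paper, which simply writes ``Immediate from the previous lemma.'' You have unpacked that one-line proof by tracing back through the chain of earlier lemmas: the previous lemma already asserts that $v_0$ is a zero-eigenvector of $E$ not tangent to any $M$, and your discussion of the cubic term $F$ (distinguishing $v_0v_0v_0\cdot F\neq0$ from $ttt\cdot F=0$) is precisely the mechanism behind the proof of \emph{that} lemma, which the paper likewise dismisses as ``immediate from the previous two lemmas.'' Your care about what ``the space of directions'' means and your closing remark on the Coleman--Norton restriction are useful context that the paper does not spell out at this point.
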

\begin{proof}
  Immediate from the previous lemma.
\end{proof}

We now see why Coleman and Norton needed their restriction on the
eigenvalues of $E$.  However, they did not explain why, and the argument in
this section appears to show that the derivation is non-trivial.  They
refer to Ref.\ \cite{Eden:1961} for situations when the zero eigenvalue
problem arises.

There is one common case of zero eigenvalues in a massless theory, and that
is when there is a collinear region.  In that case the corresponding
pinch-singular-surface is not simply a point, being parameterized by
longitudinal momentum fraction(s).  But Lemmas \ref{lemma:PSS.tangent} and
\ref{lemma:PSS.anomalous} show that if the only zero eigenvalues are for
tangents to the surface, the singularity is not avoided.  There is perhaps
an obscure reference to this in the third and fourth lines of p.\ 441 of
Ref.\ \cite{Coleman:1965xm}.

Undoubtedly, it is possible to examine in more detail the case with a zero
eigenvalue, and to find further constraints on allowed deformations.  If
the contour deformation direction $v(w_R)$ were required to be independent
of $w_R$, or sufficiently slowing varying, then Thms.\
\ref{thm:v.D.zero.nonzero.const} and \ref{thm:v.D.zero.const} show that in
full generality it is not possible to avoid the singularity due to a zero
of the denominator and its derivative.  The elementary proof simply uses
the first non-zero term in the Taylor expansion of $A(zv_0)$ in powers of
$z$.  In our case it would be the cubic term that is relevant.

However, the deformation direction can depend on $w_R$.  The
non-trivial problem is there can then arise a nonzero contribution to the
quadratic term involving $\lambda\delta v$, and this is potentially capable of
compensating the part of the cubic term that would otherwise result in an
unavoidable singularity in the contour deformation.

\section{Conclusions and implications}

In this paper, I have provided a complete proof of the necessity and
sufficiency of the Landau condition for a pinch in the kind of integral
typified by Feynman graphs in the physical region.  The proof overcomes a
number of deficiencies in existing work, and it can be applied directly to
Feynman graphs in momentum space (unlike many previous proofs).  The
analysis of pinch singularities is foundational to perturbative QCD, so it
is important not only to have a full explicit proof, but to have one whose
domain of application, as here, includes Feynman graphs with massless
propagators as well as massive ones, and also modified propagators such as
the Wilson-line denominators that are common in QCD applications.

The methods and intermediate results in the proof have further
implications, beyond simply determining where pinches occur.  For example,
an analysis of coordinate-space behavior can be made by deforming a contour
of integration as much as possible to convert rapidly oscillating
exponential factors into strongly decaying exponentials.  Dominant regions
are determined by the locations where such a deformation cannot be made.
Allowed directions of deformation are constrained not only by the need to
avoid singularities of the integrand, but also to avoid making the
exponentials rapidly growing. Dominant regions of the integration variables
are where the constraints cannot be satisfied, and it is useful to have an
analysis that works at all orders of perturbation theory.

Another possible application, especially of the geometric results in Sec.\
\ref{sec:overall}--\ref{sec:landau.proof}, is to improve algorithmic
methods for deforming contours in numerical calculations of Feynman
graphs, as in Refs.\ \cite{Gong:2008ww,Becker:2012nk,Becker:2012bi}. 

In constructing the proof, some interesting subsidiary results were found.
Some of these simply resulted from a close analysis of treatments in the
classic literature (which give a strong inspiration to treatments in
textbooks).  Particular problems and even a demonstrably false assumption
were found.  Awareness of such issues is important to provide sound and
properly persuasive pedagogical treatments.

Another notable case was to recognize the possibility of avoiding a
singularity in the integrand by a contour deformation in a direction that
is tangent to the singularity surface.  Such a deformation I termed
``anomalous''.  With such a deformation, the first order shift of a
denominator due to the contour deformation is zero.  This contrasts with
the natural intuition (engendered by experience in one dimensional cases)
that, in order to avoid a singularity of the integrand, the contour
deformation must give a positive first-order shift to the imaginary part of
the denominator, matching the sign of the $i\epsilon$.  An example of an anomalous
deformation was found.  But this was a case where the contour is not
trapped, in which case there are also a non-anomalous contour deformations
that avoid the singularity.

As regards the proof given here, considerable complications were
encountered in excluding the possibility that one can have a situation
where a Landau condition is obeyed, but the contour is \emph{not} trapped;
that is, it was necessary to rule out the possibility of an anomalous
deformation when a Landau condition is obeyed.  A proof was found only when
the denominators in the integral obeyed certain conditions.  Luckily, these
conditions are indeed obeyed for Feynman graphs --- see the statement of
Thm.\ \ref{thm:main.contour} and Eq.\ (\ref{eq:restrict}).  A more general
proof (or counterexample) would be obviously be useful.

Here are some possible directions for further work.
\begin{enumerate}
\item It would be useful to apply the methods to give a fully systematic
  and general account in coordinate space of the large-$Q$ behavior of
  amplitudes, such as appear in QCD factorization.  This would extend, for
  example, the work of Erdo\u{g}an and Sterman
  \cite{Erdogan:2014gha,Erdogan:2016ylj,Erdogan:2017gyf}.

\item Another direction is to determine from the geometrical considerations
  given in this paper the possible directions for allowed contour
  deformations at a pinch.  Given the existence of a pinch at a particular
  point or at a points on some manifold, there is a certain set of
  denominator(s) whose corresponding singularity/ies of the integrand
  cannot be avoided.  These effectively are the denominators that actually
  cause the pinch.  But it is possible that other denominators are zero,
  but that the corresponding singularities can be avoided by a contour
  deformation that respects the constraints given by the pinching
  denominators.  It would be useful to have a determination of the range of
  allowed directions.

  Such issues were not important in the original application of the Landau
  analysis to determine singularities of an integral as a function of
  \emph{external} parameters.  But in QCD applications, the focus is rather
  on the momentum configurations at a pinch and their neighborhoods.  The
  exact pinches of relevance in standard pQCD applications are in a
  massless theory, whereas the true theory is not massless; the massless
  version is simply a useful tool for locating relevant regions in the
  space of loop momenta.  Moreover, a subtracted hard scattering
  coefficient, calculated in the massless limit as usual, is singular at
  zero mass, but the singularity is not strong enough to make the hard
  scattering actually divergent there, given the subtractions.  (The same
  is not true of the derivatives of sufficiently high order with respect to
  mass at zero mass.)

\item Consider the Coleman-Norton result that locations where a Landau
  condition is obeyed correspond to possible classical processes.  Their
  result is very useful for readily determining the well-known results on
  regions involved in asymptotic large $Q$ behavior, notably the
  classification into hard, collinear, and soft subgraphs.  Coleman and
  Norton's proof works in the massive case, but it becomes singular in the
  massless case and doesn't fully capture \cite{Ma:2019hjq} what is
  actually needed in QCD applications.  It would be useful to remedy this
  problem, perhaps in conjunction with a systematic treatment in coordinate
  space.

\end{enumerate}

\section*{Acknowledgments}

I thank Marko Berghoff, Yao Ma, Maximilian M\"uhlbauer, Dave Soper, and
George Sterman for useful conversations.


\appendix

\section{Contour trapping without singularity}
\label{sec:trap.no.sing}

Consider the following dimensionless function:
\begin{equation}
\label{eq:trap}
  I(Q/m) = 
    m^2Q \int_{-\infty}^{\infty} 
         \frac{\diff{k}}
              { (k-m+i\epsilon) (k-i\epsilon)^2 (k+Q-i\epsilon) } .
\end{equation}
This is intended to be a simple analog of a QCD Feynman graph with a large
momentum scale $Q$, a mass scale $m$, and with a certain numerator factor.
The singularities of the integrand as a function of $k$ are shown in Fig.\
\ref{fig:trap.no.sing}. There is a double pole at $k=0$ just above the
contour of integration, a pole at $k=m$ below the contour and a pole at
$k=-Q$.  When $m\to0$, the contour is evidently trapped at $k=0$.

\begin{figure}
  \centering
  \includegraphics[scale=0.65]{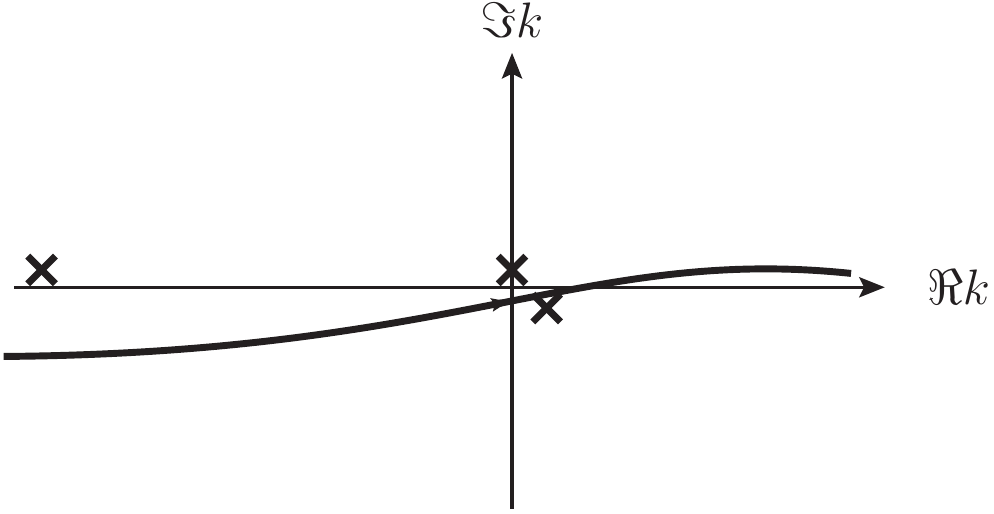}
  \caption{Singularities in $k$-plane for integrand in Eq.\
    (\ref{eq:trap}), with a trapping of the contour of integration at
    $k$ of order $m$.} 
  \label{fig:trap.no.sing}
\end{figure}

For the integral itself, without the $m^2Q$ prefactor, elementary contour
integration using the residue theorem shows that the value is $-2\pi
i/[m^2(m+Q)]$.  But with the explicit prefactor $m^2Q$, we find that the
function $I(Q/m)$ equals $-2\pi i/(1+m/Q)$, which has no singularity at
$m=0$.

The reason is for the lack of singularity is entirely trivial: The would-be
power singularity is canceled by an explicit numerator factor $m^2$.

Since numerator factors can occur in Feynman graphs, their presence
allows a potential violation of an absolute connection between the
Landau condition and actual singularities of a Feynman graph as a
function of external momenta or masses. Note that most singularities
we treat in QCD are logarithmic, and hence are not so easily removed.

However, even though there is no singularity in the function $I(Q/m)$,
there is a pinch in the integral in Eq.\ (\ref{eq:trap}), and the prefactor
$m^2$ does not remove this pinch.  Thus when $m/Q\to0$, the computation of
the integral is dominated by small values of $k$, of order $m$.  The
contribution of much larger values of $k$ is power suppressed because of
the large number of denominator factors in the integrand.

If it were possible to deform the contour away from small values of
$k$, i.e., values of order $m$, then $|k|$ would be of order $Q$
instead of being sometimes much smaller on the contour of integration.
This could happen with a different choice of $i\epsilon$ prescriptions.  In
such a case, the result for $I$ have been of order $m^2/Q^2$ instead
of order unity, as $Q/m\to\infty$.  Thus the order unity result for $I$, in
conjunction with power counting for the ``ultra-violet'' region of
large $k$, is a symptom that the integration is trapped at small $k$.

The importance of this result in QCD is given by considering the statement
by Libby and Sterman at the beginning of their paper \cite{Libby:1978bx}.
They say that quantities in QCD with a large external scale $Q$ can be
effectively computed provided that there are no mass divergences.  Taken
literally, this statement is falsified by examples like Eq.\
(\ref{eq:trap}).  But Libby and Sterman's statement becomes correct if the
no-singularity property is replaced by a no-pinch property.  In that case,
the Landau condition is both necessary and sufficient. Moreover, it is, in
fact, the presence or absence of pinches that is relevant for the QCD
applications.

\section{Singularity avoidance without a first-order shift in the denominator}
\label{sec:zero.first.order}

\subsection{Example of singularity avoidance with zero first-order shift in
  denominator}
\label{sec:2D.first.order}

Consider an integral of the following form
\begin{equation}
  I = \int \diff{E} \diff{p} \frac{i}{E-p^2/(2m)+i\epsilon}
                          f(E,p).
\end{equation}
The first factor has a singularity at $E=p^2/(2m)$, and we will
consider contour deformations to avoid it.  The other factor $f(E,p)$
generally has singularities.  But for the purposes of constructing an
example, we will assume they are far enough away not to concern us.
We could choose a function like $f= 1/(E^2+p^2+Q^2)^2$, with $Q\gg m$;
this factor has no singularities for real $E$ and $p$ and gives good
convergence of the integral in the ultra-violet (i.e., at large $p$
and $E$).

The first factor is of the form of the propagator for a non-relativistic
particle, indicating that this example is directly relevant to physics.
Similar treatments to the one in this section can be applied in the
relativistic case, but with more complication.

The derivative of the propagator with respect to the two-dimensional
integration variable is
\begin{equation}
  D = (1, -p/m).
\end{equation}

Let a contour deformation be made:
\begin{equation}
  (E,p) = (E_R,p_R) + i\lambda(\eta,\xi),
\end{equation}
where $E_R$, $p_R$, $\eta$ and $\xi$ are real, and $\eta$ and $\xi$ are
functions of $E_R$ and $p_R$.  As usual, $0\leq\lambda\leq1$.

A natural and obvious candidate for a contour deformation to avoid the
pole simply has $\eta$ positive and $\xi=0$, so that
\begin{equation}
  (\eta,0) \cdot D = \eta > 0.
\end{equation}
for then the singular factor is
\begin{equation}
  \frac{i}{ E_R - \frac{p_R^2}{2m} + i\lambda\eta + i\epsilon }
\end{equation}
and we avoid encountering a singularity as we deform the contour.

We now construct two examples of contour deformation that avoid the pole,
but which are anomalous, i.e., they obey
\begin{equation}
\label{eq:lin}
  \bigl( \eta(E_R,p_R), \xi(E_R,p_R) \bigr) \cdot D = 0
\end{equation}
in one or more situations where there is a singularity before
deformation, i.e., where $E_R=p_R^2/(2m)$.  The first example will
obey this condition at $E_R=p_R=0$.  The second will obey it at all
values of $p_R$.  In both cases, the deformation direction is along a
tangent to the surface of singularity.

To see how to construct such an example, consider the denominator on
the deformed contour, in the case of a general deformation:
\begin{align}
\label{eq:den}
  E - \frac{p^2}{2m} + i\epsilon
  ={}& E_R -\frac{p_R^2}{2m}
   + \lambda^2\frac{\xi^2}{2m}
\nonumber\\&
    + i\lambda(\eta-\xi p_Rm) +i\epsilon.
\end{align}
For a deformation to be allowed and to avoid the pole, we must arrange
$\eta(E_R,p_R)$ and $\xi(E_R,p_R)$, such that the only zeros of the
denominator occur at $\epsilon=\lambda=0$.  

In the first example, we choose $\eta(0,0)=0$ and $\xi(0,0)=m$, so that the
first order shift in the denominator, $\left( \eta(E_R,p_R), \xi(E_R,p_R)
\right) \cdot D$, is zero at $E_R=p_R=0$. Because of the non-zero $\lambda^2$ term in
(\ref{eq:den}), we no longer encounter a singularity on the deformed
contour, for the case that $E_R=p_R=0$.  Now, if $\eta$ and $\xi$ were given no
dependence on $E_R$ and $p_R$, the denominator (\ref{eq:den}) would have a
negative imaginary $\lambda$ term when $p_R$ is positive.  We could make the real
part of the denominator zero by choice of $E_R$, and then the whole
denominator becomes zero at some point as we reduce $\epsilon$ to zero.  In this
case, the contour deformation crosses a singularity somewhere, and is not
allowed.

But by choosing the $E_R$ and $p_R$ dependence of $\eta$ and $\xi$
appropriately, we can compensate the negative imaginary part.

We first make the following choice:
\begin{equation}
  (\eta,\xi) = ( |E_R|+2|p_R|, m ),
\end{equation}
so that the denominator is
\begin{align}
  E - \frac{p^2}{2m} + i\epsilon
  ={}& E_R - \frac{p_R^2}{2m} + \frac{\lambda^2m}{2}
\nonumber\\&
    + i\lambda \left( |E_R| + 2|p_R| - p_R \right) +i\epsilon.
\end{align}
We ask when this is zero.  On the deformed contour, i.e., with $\lambda$
positive, the term linear in $\lambda$ is positive and non-zero except for
being zero at $E_R=p_R=0$.  Because $\lambda$ is non-zero, the real part is
non-zero here.  So in deforming the contour we encounter no poles; the
deformation is allowed, despite the zero first-order shift.

Notice how the deformation has a acquired a component in the direction
of the natural deformation.  At a general values of the integration
variables, the first order shift is $i\lambda(|E_R|+2|p_R|-p_R)$, which is
always positive, except at $E_R=p_R=0$.  So the deformation is only
anomalous at this one point.

The second example of a deformation is more striking because it avoids
the singularity for all $p_R$, but is anomalous everywhere:
\begin{equation}
  (\eta,\xi) = \left( p_R+ \frac{p_R^2}{2m} - E_R, m \right),
\end{equation}
so that the denominator is
\begin{multline}
\label{eq:anom.deform2}
  E - \frac{p^2}{2m} + i\epsilon
  = E_R - \frac{p_R^2}{2m} + \frac{\lambda^2m}{2}
    + i\lambda \left( \frac{p_R^2}{2m} - E_R \right) +i\epsilon.
\end{multline}
This looks more dangerous because the imaginary part in the $i\lambda$ term
is not always positive.  However, to get a zero of the whole
denominator, both the real and imaginary parts must be zero.  A zero
real part gives $E_R = \frac{p_R^2}{2m} - \frac{\lambda^2m}{2}$, and then the imaginary part
is
\begin{equation}
\label{eq:ex2.Im}
    i \frac{\lambda^3m}{2} + i\epsilon.
\end{equation}
This is zero if and only if $\lambda=\epsilon=0$, and again we have avoided the pole.
Notice how the positive $\lambda$-dependent imaginary part in (\ref{eq:ex2.Im})
is of order $\lambda^3$ instead of its usual size $\lambda$; the careful choice of
deformation has canceled bigger terms.

The $i\lambda$ term in (\ref{eq:anom.deform2}) does go negative, but only where
the real part of the denominator is definitely non-zero.

\subsection{In one dimension, singularity avoiding deformation requires
  positive first-order shift in denominator}
\label{sec:1D.first.order}

To obtain a singularity-avoiding deformation with a zero first-order shift,
the deformation vector $v(w_R)$ at a singular point needed to be non-zero
and tangent to the singularity surface.  But in one dimension,
singularities are at points, and there is no surface to which a tangent can
be constructed.  So we expect that a minimum example of an anomalous
deformation must be in two dimensions.  In this section we analyze the
one-dimensional case in more detail.

The denominators are $A_j(z)+i\epsilon$, with $z$ being an ordinary complex
number.  As usual for this paper, $A_j$ is real when $z$ is real and is an
analytic function of its argument.  Suppose that a particular $A_j(z)$ has
a zero at a real value $z=x_S$.  To simplify the notation, shift the
integration variable so that the zero is at $z=0$.  We now investigate the
conditions under which the corresponding singularity is or is not avoided
by a contour deformation $x\mapsto x+i\lambda v(x) = x+i\lambda(v_0+\delta v(x))$.  Here $v_0=v(0)$,
the direction of contour deformation at $x=0$, so that $\delta v(0)$ = 0.

Let the Taylor expansion of $A_j$ be
\begin{equation}
  A_j(z) = \sum_{n=1}^\infty a_n z^n,
\end{equation}
with all the $a_n$ real.  Then $D_j = A_j'(0)=a_1$.

If $D_jv_0>0$, then the deformed contour avoids singularities due to $A_j$
in a neighborhood of $x=0$, since the imaginary part of $A_j$ on the
deformed contour for all $x$ near zero.

If $D_jv_0<0$, then the denominator is negative imaginary at $x=0$.  Then
the deformation crosses a singularity, and hence is not allowed.

If $v_0=0$, then $A_j(x+i\lambda v(x))$ is zero at $x=0$; the singularity is not
avoided.

The above cases (and the trivial proofs) are no different than in the
multi-dimensional case.

The remaining case is $D_j v_0=0$ with $v_0$ nonzero.  This entails
$a_1=0$.  Here is the first difference between the one-dimensional case and
higher dimensions.  In higher dimensions $D_jv_0$ can be zero while having
both $D_j$ and $v_0$ be nonzero.

Let the lowest order non-zero $a_n$ be for $n=n_0$ with $n_0\geq2$. Then
\begin{equation}
  A_j(z) = \sum_{n=n_0}^\infty a_n z^n,
\end{equation}
Redefine the real value $x$ to be $\hat{x}v_0$, so that
\begin{multline}
  A_j(x+i\lambda v(x)) = a_{n_0}v_0^{n_0} (\hat{x}+i\lambda)^{n_0}
                 + O(|x+i\lambda|^{n_0}\delta v(x))
\\
                 + O(|x+i\lambda|^{n_0+1}).
\end{multline}
Set $\hat{x}+i\lambda=re^{i\theta}$.  Then
\begin{equation}
\label{eq:A.one-dim}
  A_j(x+i\lambda v(x)) = a_{n_0}(rv_0)^{n_0} e^{in_0\theta}
                 + o(r^{n_0}),
\end{equation}
where the $o(r^{n_0})$ notation means that this (``correction'') term
divided by $r^{n_0}$ goes to zero as $r\to0$.  When $v_0$ is positive,
allowed (non-negative) values of $\lambda$ correspond to $0\leq\theta\leq\pi$.  If is $v_0$ is
negative, the allowed values correspond to $0\geq\theta\geq-\pi$. Because $n_0\geq2$, the
first term on the right of Eq.\ (\ref{eq:A.one-dim}) goes in a circle round
the origin at least once when $\theta$ goes through its allowed values.
Therefore, there is at least one allowed value of $\theta$ where the real part
is zero and the imaginary part is negative.  The correction term modifies
this position of this situation by only a small amount if $r$ is small
enough, but doesn't affect its occurrence. Hence as the contour is deformed
from $\lambda=0$ and as $\epsilon$ is taken to zero, a zero of $A_j+i\epsilon$ is always
encountered, and therefore the deformation is not allowed.

Combining all these cases shows that the contour deformation in one
dimension avoids the singularity if and only if $D_j v_0$ is strictly
positive.  The singularity is not avoided if $D_jv_0=0$.  It follows that
anomalous deformations only exist in two or more dimensions.

\section{Illustrative examples}

\subsection{Elementary example of Landau singularity}
\label{sec:self-energy}

\begin{figure}
  \centering
  \includegraphics[scale=0.7]{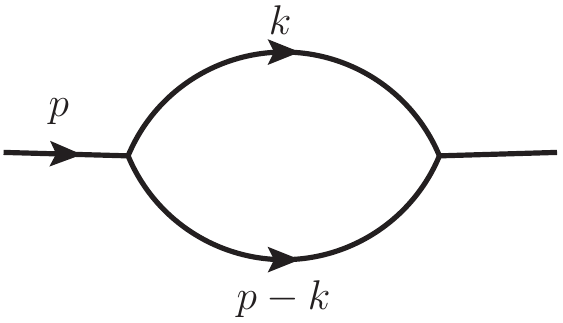}
  \caption{Self-energy graph.}
  \label{fig:self.energy}
\end{figure}

To provide elementary example of a Landau singularity, and to examine it in
the light of the approach used in this paper, and also to be able to later
pinpoint differences between the massive and massless cases, we consider
the self-energy graph of Fig.\ \ref{fig:self.energy}.  The integral for it,
with couplings, external propagators and symmetry factor omitted, is
\begin{equation}
  \label{eq:self.energy}
  \Pi(p,m) = \int \frac{\diff[n]{k}}{(2\pi)^n}
             \frac{1}{ \left[k^2-m^2+i\epsilon\right] \left[(p-k)^2-m^2+i\epsilon\right]},
\end{equation}
where we work in an $n$-dimensional space-time, and where $p$ is the
external momentum and $m$ the mass for each line, which we will assume to
be non-zero for the moment.  Since the graph is ultra-violet (UV) divergent
when $n\geq4$, we will use $n<4$.  The modification to (\ref{eq:self.energy})
to implement renormalization of UV divergences is in fact
irrelevant\footnote{See Ref.\ \cite[p.\ 5]{Hwa.Teplitz.1966} and an
  unpublished paper by Hepp cited there.} for the pinch/singularity
analysis, but it is easier not to have to bother with the issue.

Since our analysis concerns pinches and singularities in the
physical-region, $p$ is real.  It is well-known that the only
physical-region singularity in $\Pi(p,m)$ is the normal threshold singularity
at $p^2=4m^2$. In the Coleman-Norton analysis, this corresponds to a pinch
where $k=p/2$, so that both lines are on-shell, and have equal momenta.
The corresponding classical process corresponds to two particles of
momentum $p/2$ starting at the same point.  The particles propagate for an
arbitrary time, with both of them having the same trajectory, so that they
can then recombine again.

To derive this result explicitly, we first observe that the integrand is
only singular at values of $k$ where one or both lines is on-shell.  There
can be no pinch at other values of $k$.  

There are two cases. The first is that only one line is on-shell, for which
we choose the first propagator.  Let $k=k_1$ be the position of one zero of
the denominator.  Then we can avoid the corresponding singularity by a
contour deformation $k\to k_R +i\lambda v(k_R)$ such that $v(k_1)\cdot k_1 > 0$.  It is
always possible to find such a $v(k_1)$ for a non-zero value of $k_1$,
e.g., $v(k_1)=k_1$ for a massive on-shell momentum.  Then the imaginary
part of the denominator is positive and the singularity is avoided in a
neighborhood of $k_R=k_1$.

The second case is when both propagators are on-shell, i.e., at the
position $k=k_1$, so that $k_1^2=(p-k_1)^2=m^2$.  The imaginary parts of
the denominators are $2\lambda v(k_1)\cdot k_1$ and $2\lambda v(k_1)\cdot(k_1-p)$.  Here $2k_1$
and $2(k_1-p)$ are the (non-zero) derivatives of the denominators.  If the
two derivatives are linearly independent, then $v(k_1)$ can be chosen to
make both imaginary parts positive, and hence the singularity is avoided,
i.e., there is no pinch.  

This cannot necessarily be done if the two vectors are proportional to each
other, i.e., $k_1 = c(k_1-p)$ for some non-zero number $c$.  If $c$ is
positive, then by a choice of $v(k_1)$, e.g., $v(k_1)=k_1$, we can make
both imaginary parts positive.  But if $c$ is negative, then the signs of
the imaginary parts of the denominators are always opposite.  If one is
positive, the other is negative and the singularities are not avoided.

It is conceivable that a cunningly chosen $k$-dependent deformation obeying
$v(k_1)\cdot k_1=0$ could avoid both singularities, i.e., an anomalous
deformation as treated in App.\ \ref{sec:2D.first.order} for an unpinched
situation.  But when $c$ is negative, this possibility is ruled out in the
course of our proof in Sec.\ \ref{sec:deal.with.zero.first.order}.  

Therefore the contour is pinched if and only if both denominators are zero
and $k_1 = c(k_1-p)$ for negative $c$.  We can rewrite this in the standard
form of a Landau condition as $k_1+a(k_1-p)=0$ with $a=-c>0$.  

Solving this equation and the on-shell conditions gives $a=1$ and
$k_1=p-k_1=p/2$.  Hence $p^2=4m^2$ which is the standard normal
threshold. 

Observe that given $p$, there is exactly a single pinch point. 

Applying the Feynman parameter method gives
\begin{multline}
  \Pi(p,m) = \int_0^1 \diff{\alpha} \int \frac{\diff[n]{k}}{(2\pi)^n}
\\
             \frac{1}{ \left( k^2-2k\cdot p \alpha + p^2\alpha-m^2+i\epsilon \right)^2 }.
\end{multline}
With a single denominator, the Landau condition for a pinch is just that
the denominator and its first derivative are zero. This gives $p^2=4m^2$,
$\alpha=1/2$, and $k = p-k = p/2$, corresponding exactly to the previous
determination.

Observe that there are endpoint zeros of the denominator, and hence
endpoint singularities of the integrand. These occur when $(\alpha=0,k^2=m^2)$
and when $(\alpha=1,(p-k)=m^2)$.  While a contour deformation cannot change the
value of $\alpha$ at an endpoint, a deformation on $k$ suffices to avoid the
endpoint singularity of the denominator.

Finally, we can perform the $k$ integral, using a standard formula, to get 
\begin{multline}
  \Pi(p,m) = \frac{i\Gamma(2-n/2)}{(4\pi)^{n/2}}
\times\\\times
    \int_0^1 \diff{\alpha} \frac{1}{ \left( p^2\alpha(1-\alpha)-m^2+i\epsilon \right)^{2-n/2} }.
\end{multline}
Again there is a pinch when the single denominator and its first derivative
are zero.  This gives $p^2=4m^2$ and $\alpha=1/2$, but without any direct
indication of a suitable value of $k$.  Away from that case, the
singularity of the integrand can be avoided by a contour deformation. 

The lack of direct information on a value of $k$ that corresponds to the
pinch in the pure parameter integral indicates that the use of a pure
parameter representation gives less information on the momenta concerned at
a pinch compared with a representation with a momentum integral.  This is
important for many QCD applications, where the concern is not so much with
singularities of an integral as a function of external parameter(s), but
with the regions where the integration is trapped in a region of low
virtuality for some lines.

\subsection{Simple example for \texorpdfstring{$M$}{M}}
\label{sec:M.simple}

To understand and visualize the main ideas, in Sec.\ \ref{ref:sec.M}
and subsequent sections, let $V$ be a space of 2-dimensional real
column vectors, and define
\begin{align}
  D_1 &= \begin{pmatrix} 1, & 0 \end{pmatrix},\\
  D_2 &= \begin{pmatrix} 0, & 1 \end{pmatrix},\\
  D_3 &= \begin{pmatrix} -a, & -b \end{pmatrix},
\end{align}
where $a$ and $b$ are any chosen positive numbers.  Of course, these
obey
\begin{equation}
  D(a,b) = a D_1 + b D_2 + D_3 = 0
\end{equation}
i.e., the Landau condition is obeyed with $\lambda_1 = a$ and
$\lambda_2 = b$. Here, we construct for this example the main objects
used in our general proof that there is a Landau point given the
knowledge that no ``good direction'' exists.  These objects can
be used as illustrations of the steps in the general proof.

The positive space of $D_1$ and $D_2$ is
\begin{equation}
  P = \left\{ 
         \begin{pmatrix} \alpha \\ \beta \end{pmatrix}
         \mbox{ with }
         \alpha > 0, \beta > 0
      \right\}.
\end{equation}
Now with $v = (\alpha,\beta)^{\rm T}$, and $D(\3\lambda)=\lambda_1D_1
+ \lambda_2D_2 +D_3$,
\begin{equation}
  D(\3\lambda)(v) = \alpha (\lambda_1-a) + \beta (\lambda_2-b).
\end{equation}
In Eq.\ (\ref{eq:M.def}), we defined the set $M$ to be set of
$\3\lambda$ for which $D(\3\lambda)$ is negative or zero on $P$.  This
is 
\begin{equation}
\label{eq:ex.M}
  M = \left\{
         (\lambda_1, \lambda_2)
         :
         0 \leq \lambda_1 \leq a
         \mbox{ and }
         0 \leq \lambda_2 \leq b
      \right\},
\end{equation}
as illustrated in Fig.\ \ref{fig:ex.M}.

The Landau point is at the corner $(\lambda_1,\lambda_2)=(a,b)$. The construction
in Sec.\ \ref{sec:notM.M} starts from a point on the boundary between
$M$ and $\widehat{M}$, and moves along the boundary in a direction
where one of the $\lambda_j$ increases until no further increases are
possible.  We showed in general that the resulting extreme point is a
Landau point.

The various results used in this process can be illustrated by using as a
starting point $\3\lambda=(a,\lambda_2)$, with $0\leq \lambda_2<b$.  There $D(\3\lambda)=(0,\lambda_2-b)$.

\section{Difficulties with the use of Feynman parameters}

In standard derivations (e.g., Refs.\ \cite{Landau:1959fi,ELOP}) of the
Landau criterion, a common technique is the use of Feynman parameters, as
in Eqs.\ (\ref{eq:F.graph.param.mom}) and (\ref{eq:F.graph.param.only}).
This converts an integral with multiple denominators to an integral with
one denominator.  As regards non-endpoint\footnote{The generalization to
  endpoint singularities requires modifications \cite{ELOP} to the analysis
  that are straightforward. } singularities/pinches, the Landau condition
for a pinch becomes simply the condition that the single denominator and
its derivative with respect to every integration variable are zero.  In
this case, the justification of the condition as being both necessary and
sufficient for a pinch no longer has any need for the complicated
geometrical argument of Secs.\ \ref{sec:overall}--\ref{sec:landau.proof}.
The analysis of the single-denominator case can then be shortened to the
treatment in Sec.\ \ref{sec:one-denom}.

So it might be supposed that the use of Feynman parameters is a panacea for
many of the difficulties exposed in this paper.  However, this is not the
case.  In the first place, it is often useful to be able to analyze
directly what happens in the original integral.  For example, this applies
to the region analysis so pervasive in QCD, and its further elaboration in
treatments of Glauber-type regions; it also applies to the problem of
finding a good algorithm for contour deformation in a numerical integral.
Furthermore, methods for the extension to a coordinate-space analysis, such
as is summarized in Sec.\ \ref{sec:coord}, do not readily lend themselves
to the use of Feynman parameters.

In this section, I will present some simple examples that show that certain
further issues are a serious obstacle to a general-purpose use of Feynman
parameters.

One issue is that it is not all clear that a pinch in a parametric form of
an integral necessarily entails a pinch in the original integral, although
this is implicitly assumed in essentially all the standard treatments.  In
the next Sec.\ \ref{sec:F.param.massless}, I will show a counterexample
where the assumption is actually wrong; the example is simply a one-loop
massless self-energy graph.

A second issue is that for more general cases than standard relativistic
Feynman graphs, it is not always the case that integrals involving Feynman
parameters are sufficiently well-behaved to be treated by normal contour
integration.  In contrast, the Feynman parameter method was designed to be
very useful for the standard quadratic denominators in normal relativistic
Feynman graphs.  For example, the momentum integrals can be calculated
analytically, leaving an integral only over the Feynman parameters, with
rules for the denominator being found in, for example, Sec.\ 1.5 of Ref.\
\cite{ELOP}.

Other kinds of denominator do appear even in QCD, e.g., Wilson lines with
their linear denominators.  The general analysis in the present paper has
no problems in the presence of linear denominators; indeed some parts
become easier.  But the example given in Sec.\ \ref{sec:F.param.linear}
below shows that the Feynman parameter method can become pathological in
the presence of linear denominators.

\subsection{Massless self-energy graph}
\label{sec:F.param.massless}

Consider the massless version of the self energy graph that was treated in
App.\ \ref{sec:self-energy} for the massive case:
\begin{equation}
\label{eq:massless.self.energy}
  \Pi(p,0) = \int \frac{\diff[n]{k}}{(2\pi)^n}
             \frac{1}{ \left[k^2+i\epsilon\right] \left[(p-k)^2+i\epsilon\right]},
\end{equation}
and we search for conditions for a pinch.

\subsubsection{Pure momentum representation}

If both denominators in Eq.\ (\ref{eq:massless.self.energy}) are non-zero,
then there is no pinch, as before.

Next suppose the first denominator is zero, i.e., $k^2=0$, then either $k$
is non-zero and light-like, or it is zero.  If it is light-like, then we
can find a vector $v(k)$ such that $v(k)\cdot k>0$, so that we can avoid the
singularity by a contour deformation.  

But if $k$ is zero, then both the denominator and its first derivative is
zero, so we have a pinch there, independently of the value of $p$.
Similarly the other denominator gives a pinch at $k=p$.  Even though we
have a pinch, there is no singularity of the value of the integral $\Pi(p)$
unless also $p^2=0$, a well-known property.  The property of having a pinch
on a massless line in a Feynman graph when the line's momentum is zero
evidently applies to all graphs with massless lines.

Finally, if both denominators are zero, i.e., $k^2=(p-k)^2=0$, then the
same approach as in App.\ \ref{sec:self-energy} shows that when $p^2=0$,
there is a line of collinear pinches, with
\begin{equation}
  k = \alpha p \quad \mbox{with $0\leq\alpha\leq1$}.
\end{equation}
This also immediately follows from the Landau condition for the graph.  The
single-denominator pinches are at the endpoints of the collinear pinch
line.

\subsubsection{Mixed momentum-parameter  representation}

The mixed momentum-parameter form is
\begin{equation}
  \Pi(p,m) = \int_0^1 \diff{\alpha} \int \frac{\diff[n]{k}}{(2\pi)^n}
             \frac{1}{ \left( k^2-2k\cdot p \alpha + p^2\alpha+i\epsilon \right)^2 }.
\end{equation}
First, there are zeros of the denominator at the endpoints in $\alpha$, at
$(\alpha=0,k=0)$ and $(\alpha=1,k=p)$. Unlike the massive case, there is a zero
derivative of a denominator with respect to $k$ at these points.  So the
endpoint singularity can no longer be avoided, and we have a pinch.  This
reproduces the first two configurations seen in the momentum
representation.

For a non-endpoint value of $\alpha$, there is again a pinch when the
denominator and its first derivative are zero, i.e., when
\begin{equation}
  k^2-2k\cdot p \alpha + p^2\alpha=0, ~ k-p\alpha = 0, ~\mbox{and} ~ -2k\cdot p + p^2 = 0.
\end{equation}
These have non-endpoint solutions only when $p^2=0$, and then the solution
exists for all $\alpha$, with $k=p\alpha$, again reproducing the results in the
momentum representation.

\subsubsection{Pure parameter representation}

The pure-parameter form is obtained by using a standard formula for the
momentum integral: 
\begin{multline}
  \Pi(p,0) 
\\
  =  \frac{i\Gamma\xleft(2-\frac{n}{2}\right) (p^2)^{n/2-2}}{(4\pi)^{n/2}} 
 \int_0^1 \diff{\alpha} \frac{1}{ \left[ \alpha(1-\alpha)+i\epsilon \right]^{2-n/2} }. 
\end{multline}
The momentum integration has given an overall factor that is a power of
$p^2$ and that is singular at $p^2=0$ only.  But there is no pinch at all
in the $\alpha$ integral (which can in fact be performed analytically to give
the textbook result
\begin{equation}
  \Pi(p,0)
  =  \frac{i\Gamma(2-n/2)[\Gamma(n/2-1)]^2(p^2)^{n/2-2}}{(4\pi)^{n/2} \Gamma(n-2)}
  (p^2)^{n/2-2} .
\end{equation}

\subsubsection{Results}

We now see several differences compared with the massive case
\begin{itemize}
\item There is a pinch in the momentum integral for all $p$, even though
  $\Pi$ is nonsingular if $p^2\neq0$.  This is another case beyond the rather
  trivial example in App.\ \ref{sec:trap.no.sing} where the existence of a
  pinch does not entail a singularity of the integral as a function of
  external parameters.

\item In the massive case, there was a pinch at a single point of the
  integration variables. This applied in all three representations.  But in
  the massless case, there is a whole line of collinear pinches, and this
  is visible in both the momentum and momentum-parameter representations.

\item But in the pure parameter representation, there is no pinch
  correspond to the collinear singularity.  Instead the singularity of the
  integral at $p^2=0$ is in the prefactor only.

\end{itemize}

It follows that the existence of a pinch in the momentum representation
does not entail a pinch in the pure parameter representation, contrary to
what is assumed as obvious in the standard literature.  The example is not
at all exotic; it gives the simplest possible example of a collinear pinch
in a massless theory.

\subsection{Denominators linear in some momentum components}
\label{sec:F.param.linear}

In this section, I show explicitly that the Feynman-parameter method does
not readily apply to situations with denominators that depend linearly on
some or all components, e.g., with non-relativistic theories or with Wilson
lines. 

It suffices to consider one example, a non-relativistic analog of a
self-energy graph in a 2-dimensional space-time:
\begin{widetext}
\begin{equation}
  \Gamma(E,p) = \int \diff{\omega} \diff{k}
  \frac{1}
  { \left( \omega - \frac{k^2}{2m} + i\epsilon \right) 
    \left( E-\omega - \frac{(p-k)^2}{2m} + i\epsilon \right) }.
\end{equation}
Applying the Feynman parameter method and exchanging the order of the
parameter and momentum integrals gives
\begin{equation}
  \Gamma(E,p) = \int_0^1 \diff{\alpha}~ \left\{ \int \diff{k} \diff{\omega}
  \frac{1}
       { \left[ (1-2\alpha)\omega + E\alpha - \frac{1}{2m} \left( k^2 - 2\alpha pk +\alpha p^2\right) + i\epsilon \right]^2 }.
  \right\}.
\end{equation}
By standard contour-integration methods, the integral over $\omega$ is zero if
$\alpha$ is not equal to $1/2$.  To get a non-zero value, the result of
integration over $\omega$ has to be a non-trivial generalized function
(distribution) localized at $\alpha=1/2$.  In fact, using the methods of Yan
\cite{Yan:1973qf} gives the integral over $\omega$:
\begin{equation}
  \int_{-\infty}^\infty \diff{\omega} 
  \frac{1}
       { \left[ (1-2\alpha)\omega + E\alpha - \frac{1}{2m} \left( k^2 - 2\alpha pk +\alpha p^2\right) + i\epsilon \right]^2 }
  = \frac{-4i\pi m}{ 2Em - 2k^2 + 2pk - p^2 + i\epsilon } 
    \, \delta\xleft(\alpha-\frac12\right).
\end{equation}
\end{widetext}
which implies that the result of performing the momentum integrals is not
of the form of a normal integral such as (\ref{eq:F.graph.param.only}).
(One could also say that if one tried restricting the integrals to
conventional ones, then the exchange of order of integration is not
allowed, contrary to the almost universally assumed situation for
relativistic graphs.)  In a sense, the existence of a $\delta(\alpha-\frac12)$
implies that the pure parameter integral (after performing the $\omega$ integral
and possibly the $k$ integral) always has a pinch at $\alpha=\frac12$
independently of whether there is a pinch in the original momentum
integral. 

In contrast, if the analysis in the present paper is applied directly to
the momentum space integral, one finds that there is a physical region
pinch (and singularity) if and only if $E=p^2/(4m)$, i.e., the external
energy-momentum corresponds to a particle of double the mass of that for
the individual lines.  This is easily verified by performing the momentum
integral analytically.  The pinch is at $(\omega,k) = \frac12(E,p)$.


\bibliography{jcc}


\end{document}